\documentclass[12pt,letterpaper]{article}
\usepackage[T1]{fontenc}
\usepackage{lmodern}
\usepackage[utf8]{inputenc}
\usepackage{geometry}
\usepackage{booktabs}
\usepackage{amsmath,amssymb,amsfonts,amsthm}
\usepackage{dsfont}
\usepackage{cancel}
\usepackage{bm}
\usepackage{enumitem}
\usepackage{mathtools}
\usepackage{changepage}
\usepackage{verbatim}
\usepackage{graphicx}
\usepackage{emptypage}
\usepackage{newlfont}
\usepackage[all,cmtip]{xy}
\usepackage[bottom]{footmisc}
\usepackage[titletoc,title]{appendix}
\usepackage{chngcntr}
\usepackage{apptools}
\usepackage{listings}
\usepackage{color}
\usepackage{sgame, tikz} 
\usepackage{pgfplots}
\usepackage{kpfonts}    
\usepackage{natbib}
\AtAppendix{\counterwithin{lem}{section}}
\usepackage{caption}
\usepackage{subcaption}
\usepackage{float}
\usepackage{xr-hyper}
\PassOptionsToPackage{hyphens}{url}\usepackage[colorlinks=true,linkcolor=blue, allcolors=blue]{hyperref}
\usepackage{sgamevar}
\usepackage{lipsum}

\newcommand{\normal}{\mathbf{\hat{N}}}

\def\imagebox#1#2{\vtop to #1{\null\hbox{#2}\vfill}}

\theoremstyle{plain} 
\newtheorem{theorem}{Theorem}
\newtheorem*{theorem*}{Theorem}
\newtheorem{corollary}{Corollary}
\newtheorem{lemma}{Lemma}
\newtheorem{example}{Example}
\newtheorem{proposition}{Proposition}

\newenvironment{customthm}[1]
  {\innercustomthm}
  {\endinnercustomthm}

\theoremstyle{definition} 
\newtheorem{definition}{Definition}

\newenvironment{manualasm}[1]{%
  \manualasminner
}{\endmanualasminner}

\newcommand{\argmax}{\operatornamewithlimits{argmax}}

\newcommand{\join}{\vee}
\newcommand{\meet}{\wedge}

\usepackage{setspace}
\singlespacing

\geometry{left=1.0in,right=1.0in,top=1.0in,bottom=1.0in, heightrounded}
\reversemarginpar

\usepackage[nameinlink]{cleveref}
\crefname{manualasm}{assumption}{assumptions}
\crefname{innercustomthm}{theorem}{theorems}
\crefname{prop}{proposition}{propositions}
\crefname{ex}{example}{examples}
\crefname{defn}{definition}{definitions}

\begin{document}

\title{ARTIFICIAL INTELLIGENCE \\AND SPONTANEOUS COLLUSION\thanks{We are indebted to our advisors Kostas Bimpikis, David Kreps, Michael Ostrovsky, and, in particular, Andrzej Skrzypacz for invaluable guidance and support. We also want to thank Susan Athey, Anirudha Balasubramanian, Lanier Benkard, Emilio Calvano, Daniel Chen, Peter DeMarzo, Andreas Haupt, Ravi Jadageesan, Irene Lo, Alexander MacKay, Suraj Malladi, Paul Milgrom, Ilan Morgenstern, Evan Munro, Ilya Segal, Takuo Sugaya, Stefan Wager, Larry Wein, Gabriel Weintraub, Kuang Xu, Frank Yang, and seminar participants at Stanford, Harvard, Kellogg, IESE, UPenn, BU, Bocconi, TSE,  IIOC, and SITE Conference for their insightful comments.}}
\author{Martino Banchio\thanks{Email: mbanchio@google.com. Address: 1600 Amphitheatre Parkway, Mountain View CA 94043, USA} \\ \small{Google Research} \and Giacomo Mantegazza\thanks{Email: giacomom@stanford.edu. Address: 655 Knight Way, Stanford CA 94305, USA} \\ \small{Stanford GSB}}
\date{}
\maketitle
\setcounter{page}{1}
\begin{abstract}

\noindent 

We develop a tractable model for studying strategic interactions between learning algorithms. We uncover a mechanism responsible for the emergence of algorithmic collusion. We observe that algorithms periodically coordinate on actions that are more profitable than static Nash equilibria. This novel collusive channel relies on an endogenous statistical linkage in the algorithms' estimates which we call \emph{spontaneous coupling}. 
The model's parameters predict whether the statistical linkage will appear, and what  market structures facilitate algorithmic collusion.
We show that spontaneous coupling can sustain collusion in prices and market shares, complementing experimental findings in the literature. Finally, we apply our results to design algorithmic markets.

\vspace{0.1in}
\noindent\textbf{Keywords:} Artificial Intelligence, Learning, Collusion\\
\noindent\textbf{JEL Classification Codes:} D47, D83, L40, L50\\
\end{abstract}

\bigskip

\newpage
\section{Introduction}

Artificial Intelligence (AI) software is becoming more common in a variety of business contexts, 
ranging from bidding in online auctions to pricing on shopping platforms and setting short- and long-term rents. This market shift has been accompanied by concerns that automated pricing and bidding software could facilitate collusive behavior, voiced both by regulatory authorities 
(\citet{OECD2017}, \citet{Competition2018}, \citet{CMA2021}) and  by academic researchers (\citet{Harrington2019}, \citet{Calvano2020}, \citet*{Asker2022}).

In this paper, we identify a novel collusive channel that can facilitate collusion between AI algorithms. 
Posing in sharp contrast with explicit cartels or tacitly-collusive equilibria, this channel does not require deliberate intent to collude from market participants.
Instead, collusive outcomes arise as a result of an endogenous statistical linkage between independent, myopic, profit-maximizing learning algorithms. To highlight these characteristics, we call this collusive channel \emph{spontaneous coupling}. 
We show that spontaneous coupling hinges on the algorithmic nature of market participants, instead of relying on their monitoring technology or their intent to collude.

A major challenge in analyzing these games is that the evolution of play is stochastic and discrete. AI algorithms learn by running randomized experiment, and thus they generate stochastic non-stationary paths of play. To handle these challenges, we first show how to approximate these systems in continuous time. Our model encompasses several algorithms from the machine learning literature, and traditional dynamical systems techniques allow us to characterize the algorithms' outcomes analytically.

We first illustrate our machinery in the most canonical game, a Prisoner's Dilemma.
Despite the strategic simplicity of the game,
AI algorithms such as naive Q-learning (not designed to learn dynamic reward strategies such as tit-for-tat) may enter stochastic cycles sustaining high cooperation rates.
We isolate the mechanism responsible for such cooperation.
The algorithms estimate payoffs from each action by running experiments, and exploit their estimates to collect larger rewards.
When experiments are infrequent, the estimates tend to persist for long periods of play, introducing some estimation error.
These errors are correlated, and they tend to synchronize the algorithms' path of play: agents act symmetrically, often jointly cooperating and defecting in stochastic cycles. This phenomenon disappears when the design of the algorithm incorporates careful exploration or counterfactual modeling, and we characterize a class of algorithms that are immune to spontaneous coupling: these algorithms learn to play undominated strategies. 

Our results apply to a variety of economic settings, where algorithms may collude spontaneously.
Our first application is to price-fixing, one of the most prosecuted collusive practices (e.g., the Lysine cartel\footnote{\url{https://www.justice.gov/atr/case-document/information-33}}  prosecuted in 1996 and popularized in the media, or the price-fixing conspiracy\footnote{\url{https://www.justice.gov/archive/atr/public/press_releases/2005/212002.htm}} in the Dynamic Random Access Memory (DRAM) market of the early 2000).
In the setting of \citet*{Asker2022}, which studies price-setting algorithms in a Bertrand competition model, computational experiments show outcomes consistent with price fixing. Algorithms learn to charge supra-competitive symmetric prices, settling on dominated strategies.
We prove that spontaneous coupling sustains such price-fixing without relying on reward-and-punishment schemes, which highlights the shortcoming in current competition policy noted in \citet{Harrington2019}.
Our second application studies another common market manipulation technique known as market division, or ``market splitting''. Market division may appear in various forms: geographical divisions of market shares (as often happens in open-air drug markets), no-poach agreements, or no-show agreements in auction markets. In a model of online keyword auctions, we show that spontaneous coupling can sustain algorithmic market splitting.
The algorithms learn to ``split the market'' by coordinating on the subset of keywords won by each advertiser.
Motivated by online auctions, in our last application we take the perspective of a market designer. We show that it is possible to design strategy-proof mechanisms that are robust to the participation of algorithmic players. To do this, we prove that the statistical linkage we identify disappears if the designer provides enough feedback to the algorithms to assist their learning, and we characterize the policies which communicate minimally necessary feedback. 

\subsection{Intuition}
Before introducing the formal model, we provide a brief roadmap to the results of the paper, and we include an intuitive description of spontaneous coupling.

In \Cref{sec:model} we introduce a model of algorithmic learning, which we call reinforcer. 
Algorithms in this class maintain a vector of values representing each action's payoff consequences and estimate such vectors by repeatedly interacting with the environment and updating the entries based on the observed payoff.\footnote{For example, this class includes the celebrated Q-learning procedure, itself the building block of many AI algorithms, as well as some variants of the Multiplicative Weights Update.} A policy function maps the estimated payoff vector to the agent's action. The policy is responsible for trading off exploration (running experiments, to estimate payoffs accurately) and exploitation (selecting what is thought to be the optimal action, in order to collect rewards).
For example, an $\varepsilon$-greedy policy selects the action that currently has the highest estimated payoff with probability 1-$\varepsilon$, and with probability $\varepsilon$ explores the action space uniformly at random. 

To fix ideas, consider $\varepsilon$-greedy Q-learning algorithms that play a Prisoner's Dilemma.  \Cref{sec:model} provides the mathematical toolkit that allows us to represent the repeated learning game as a dynamical system. 
In \Cref{sec:PD} instead we pin down the statistical linkage between independent algorithms that undermines the dominant strategy incentives and sustains cooperation. In practice, algorithms play symmetric profiles of actions ``too often''. Although algorithms experiment independently, their estimates are correlated because their payoffs depend on the entire action profile, which limits their ability to evaluate profitable deviations. We dub this phenomenon ``spontaneous'' coupling to stress that even algorithms that explore independently may get linked through correlated play. We show that coupling leads to the continuous counterpart of stochastic cycles, in which the agents cooperate most but not all of the time.\footnote{ 
This explains why we observe that, while such algorithms often learn to collude in simulations, collusion is imperfect. It is because the agents cannot converge to a constant profile of actions that is not a pure Nash Equilibrium --- these AIs always  learn how to best respond to a fixed profile of actions. }

Intuitively, during a period of collusion, each agent estimates the value of colluding \emph{conditional on the opponent colluding as well}. This increases the estimated payoff of collusion. At the same time, since experimenting with the competitive action is profitable in the short run, exploration leads the agents to estimate a high payoff for competition. However, as soon as one agent begins exploiting the competitive action, the opponents will quickly best-respond by competing too, and joint competition will yield reduced payoffs. In particular, the estimated value of the competitive action decreases: the agents estimate the value of competing \emph{conditional on the opponent competing as well}. Instead, because the agent explores seldom, the estimate of collusion's payoff remains close to the value of collusion \emph{conditional on the opponent colluding as well}. This draws the algorithms away from competition and back into a cycle of collusive behavior. This is how spontaneous coupling  sustains dominated outcomes: simultaneous deviations reinforce the estimation error in the algorithms' estimates.

Why do algorithms fall into this trap? In \Cref{sec:results} we show that this process depends on a set of parameters, the relative learning rates of an algorithm, i.e. the speed at which each action's estimate gets updated over time. The slower an algorithm learns about infrequently played actions, the more persistent their estimates become. This persistence, combined with nearly simultaneous deviations, leads to sustained periods of correlated play and ultimately, collusion. 
On the other hand, we show that algorithms with uniform learning rates, where each action's estimate is updated at the same speed, do not fall into these collusive practices. Uniform learning rates guarantee that every action enjoys the same persistence, so that algorithms with this property avoid the stochastic cycles and learn to play only undominated strategies.
We conclude with three applications, highlighting the effects of spontaneous coupling on automated markets.

\subsection{Literature Review}
The literature on algorithmic collusion is growing through both experimental work (see, e.g.,  \citet{Klein2021}, \citet{abada2023}, \citet*{Johnson2022}) and empirical work (see, e.g., \citet{Musolff2021}, \citet{Assad2021}).  The seminal results of  \citet{Calvano2020} focus on strategies as a proxy for collusion: the paper argues that simply looking at outcomes of learning might be insufficient, as collusion might arise as a ``mistake'' by poorly designed algorithms. By modeling the dynamics of learning we obtain comparative statics and we are able to determine what collusive schemes arise in algorithmic markets. We show that ``mistakes'' are sustained by spontaneous coupling. 
\citet*{Asker2022} showed that feedback on demand curves influences the pricing behavior of algorithms in simulated Bertrand oligopoly, and \citet{Banchio2021b} finds that additional feedback in first-price auctions restores competition. We complement these studies by generalizing their intuition and by demonstrating the mechanism that underpins collusion in their cases.

Most theoretical models of algorithmic collusion consider simple adaptive algorithms in the interest of tractability, e.g.
\citet{Brown2021}, \citet{Leisten2022}, and \citet{Lamba2022}. In these papers, algorithms choose prices based on the opponent's last quoted price. These are adaptive strategies, but 
algorithms react only to market conditions. They do not improve their predictions over time, which is a key feature of AI algorithms. 
We focus instead on the \emph{learning} algorithms developed by the research community. 
In another simple model with adaptive algorithms, \citet{harrington2022} shows that a monopolistic algorithm provider selling access through a license may design an algorithm with collusive tendencies to upcharge for the license.

A model that incorporates learning appear in \citet*{Hansen2021}, which finds that supra-competitive prices are sustained by coordinated experiments when bidders use the Upper Confidence Bound algorithm. The key difference is that we allow experiments to happen fully at random. Correlation arises through the estimates, not necessarily in the timing of experimentation. 
\citet{Possnig2023} constructs another model of sophisticated learning, and provides a theoretical analysis of the limiting points of reinforcement learning algorithms. The author allows algorithms to condition on past behavior of their opponents, and characterizes the repeated-game strategies learned by the algorithms. Instead, we abstract away from repeated-game strategies in order to better isolate the statistical linkage we call spontaneous coupling. 

Both economists and computer scientists have examined Reinforcement Learning in games, for example \citet{Erev1998} or \citet{mertikopoulos2016}, but with some notable differences with ours. On the one hand, many have analyzed systems experimentally (\citet*{erev1999}, \citet{lerer2017}). Our approach is complementary: with the aid of our framework, one can tell apart experimental findings from agent design considerations. On the other hand, there are some theoretical results on convergence of learning procedures. For example, learning through reinforcement has been associated with evolutionary game theory by \citet{Borgers1997}. Others have formally analyzed some of the simpler models, as in \citet{Hopkins2005}. These results focus on the connection with replicator dynamics. Our approach is different because we consider a general class of learning procedures from the AI literature. Doing so, we obtain a tool valuable for regulation and design of modern automated markets. Moreover, the approach described in \Cref{sec:model} includes earlier results under a general algorithmic structure.
Finally, a collusive scheme similar to the one we identify is observed in a series of papers (\citet{Karandikar1998}, \citet*{Bendor2001a}, \citet*{Bendor2001b}) on aspiration-based learning. Instead of viewing the learning process as a behavioral rule, we study algorithms developed in the context of machine learning, which turn out to have similar characteristics. 

A stream of literature analyzes continuous-time approximation of AI algorithms, mostly in the single-agent setting. Related to ours is \citet*{Tuyls2005}: the authors examine a continuous-time approximation of multi-agent $Q$-learning with Boltzmann (logit) exploration, and show a link with the Replicator Dynamics from the Evolutionary Game Theory (EGT) literature. Building on this work, \citet{Stefanos2022} characterizes the tradeoff between exploration and exploitation in the same setting. Our approximations and results hold in more general settings: we analyze a general class of AI algorithms, and our results leverage their discontinuities. 
Both \citet{Gomes2009} and \citet*{Wunder2010} propose a continuous-time approximation of $Q$-learning in a multi-agent setting with $\varepsilon$-greedy algorithms: their approximations are mutually inconsistent. Most importantly, those approximations remain model-dependent; our method instead applies to general algorithmic forms. The result is a recipe to analyze equilibria through the lens of dynamical systems, abstaining from heuristic modeling choices.

The research of \citet{Benaim1996} and \citet{Borkar2000} often serve as a foundation for stochastic approximations in learning: see, e.g., \citet*{Russo2021}, which uses continuous-time approximations to analyze the finite-time statistical properties of single-agent Temporal Difference learning. Our approach relies instead on \citet{Kurtz1970}, which provides a more flexible tool to handle general learning procedures and multi-agent settings. Additionally, we adopt the formalism of differential inclusions to analyze points of non-differentiability, which are generally overlooked in traditional stochastic approximations analyses, but that prove to be central in our study.\footnote{One exception is the paper by \citet*{Wunder2010}, which however abandons this route in favor of simulations.}

\section{Model}\label{sec:model}
Our model encompasses several canonical learning algorithms: from Q-learning variants, such as $\varepsilon$-greedy Q, to Multiplicative Weights Update and EXP3. What all these algorithms have in common is that they reinforce successful actions and penalize unsuccessul ones while interacting repeatedly. For this reason, we call an algorithm that follows our learning model a \emph{reinforcer}.

\subsection{Learning Algorithms in Games}\label{sec:adaptive}
Consider a finite normal-form game $G = (N,(A_i)_{i \in N}, (r^i)_{i \in N})$ with $N$ players. 
We are interested in what happens when agents repeatedly play this game and choose actions using a learning algorithm. Let us focus on one agent, Alice, whose action set $A_i$ has cardinality $d_i$; Alice delegates decision-making to an algorithm that attempts to maximize her utility. Her utility function, $r^i(a^i,a^{-i})$, depends on her opponents' actions and her own.
 
\begin{example}\label{ex:Q}
Alice employs $\varepsilon$-greedy Q-learning. This algorithm consists of a vector $Q(k)$ for every period $k$ and a decision rule $\pi_\varepsilon$.
\begin{itemize}
    \item Each entry $Q_a(k)$ is an estimate of the long-run value of action $a \in A$. The update for entry $Q_a(k+1)$ in period $k+1$ is given by
    \begin{equation}
    \label{eq:q learning update}
        Q_{a}(k+1)=\begin{cases}
        Q_{a}(k)+\alpha \left[ r(k) +\gamma \max_{a^\prime} Q_{a^\prime}\left(k\right)-Q_{a}\left(k\right)\right] & \text{if } a=a(k)\\
        Q_{a}(k) & \text{else}.
    \end{cases}
    \end{equation}
    where $r(k)$ is the payoff, and $a(k)$ is the action the algorithm took, in period $k$. Parameters are $\gamma \in [0,1)$, a discount factor, and $\alpha \in  (0,1)$, called learning rate.
    \item Given the vector $Q(k)$, the algorithm takes actions according to a $\varepsilon$-greedy policy. The decision rule $\pi_\varepsilon \colon \mathbb{R}^{|A|} \to \Delta(A)$ selects the following probability distribution over actions:
    \[
    \pi_{\varepsilon}\left(Q(k)\right) =
    \begin{cases}
        \dfrac{1}{| \argmax_{a \in A_i} Q_a(k)|} \quad \forall a \in \argmax_{a \in A_i} Q_a(k)  &  \text{ with probability } 1-\varepsilon \\
        \frac{1}{d_i} \qquad \qquad \qquad \qquad \; \forall a \in A_i  & \text{ with probability } \varepsilon
    \end{cases}
    \]
\end{itemize}
Intuitively, the Q-vector estimates the long-run value of actions by iterating over a Bellman equation. In period $k$, the value of an action is a convex combination of its previous estimate (weighted by $1-\alpha$) and a new Bellman estimate (weighted by $\alpha$).\footnote{$Q$-learning is a more general version of the \citet{Erev1998} and \citet{Borgers1997} reinforcement learning models. We analyze in this work a different, more straightforward decision rule.} The weight $\alpha$ controls the persistence of the estimates, with lower values implying larger persistence of past experiences.
\end{example}

The $\varepsilon$-greedy policy offers a straightforward way to balance exploration and exploitation. Specifically, with probability $1-\varepsilon$, the algorithm selects the action corresponding to the highest entry of the Q-vector, which reflects the agent's belief about the best course of action at that time. In contrast, with probability $\varepsilon$, the algorithm takes a random action, enabling the agent to explore the action space. Because of its simplicity and attractive properties in single-agent environments, it is often used as a benchmark for more complex exploration policies.

Notice that Q-learning is misspecified when it is used as a learning and decision rule in a game. This is because Alice's Q-vector estimates a continuation payoff $Q_{a^i}$ as a function of Alice's own actions only, while payoffs depend also on the opponents' unobserved actions, $a^{-i}$. 
If the opponents' profile of strategies were fixed, Q-learning would converge on the best response to that profile,\footnote{The proof is a simple adaptation of the arguments in \citet{Watkins1992}.} but in a strategic setting where all agents are learning, there are no guarantees of convergence.

Q-learning is a representative of a learning model that promotes successful play. We call this the reinforcer model.

\begin{definition}\label{def:reinforcer}
A \emph{reinforcer} for agent $i$ is a pair ($\theta^i$, $\pi^i$) consisting of 
\begin{itemize}
    \item A $d_i$-dimensional stochastic process $\theta^i$ that evolves according to 
    \[
    \theta^i(k+1) = \theta^i(k) + \alpha^i D^i(a^i(k),r^i(k),\theta^i(k)),
    \]
    where $a^i(k) \in A_i$ is the action taken by agent $i$ in period $k$, $\theta^i \in \mathscr{T} \subset \mathbb{R}^{d_i}$, and $\alpha^i \in \mathbb{R}_+^{d_i}$ are learning rates for all actions $a^i \in A_i$.
    \item A \emph{policy} $\pi^i$, that is a map $\pi^i \colon \mathscr{T} \to \Delta(A_i)$, which selects a distribution of  actions for each value of the process $\theta^i \in \mathscr{T}$. 
\end{itemize}
\end{definition}

A reinforcer carries a statistic for each available action and selects an action in period $k$ according to its policy. Both agent $i$'s and her opponents' policies introduce randomness in the process $\theta^i$. The update function $D^i$ depends Alice's realized actions directly and on her opponents' actions through her utility. Given a (possibly random) initial value for $\theta^i(0)$, which we call the \emph{initialization} of $\theta^i$, this stochastic process is well-defined.
The following assumption is maintained throughout the paper:

\begin{manualasm}{A1}\label[manualasm]{manualasm:lipschitz}
The functions $D^i(a^i,r,\theta)$ and $\pi^i(\theta)$ are Lipschitz-continuous almost-everywhere in $\theta$.
\end{manualasm}

Let us return to \Cref{ex:Q}. The policy $\pi_{\varepsilon}$ is constant on the subspace of $\mathbb{R}^{d_i}$ where $\argmax_{a} \theta^i_a$ is fixed and unique. The argmax changes only along the lines of the form $\left\{\theta^i | \theta^i_a = \theta^i_{a'} = \max \theta^i\right\}$, which have zero Lebesgue measure. The discontinuities of the update function of Q lie on the same lines, and thus the update is also a.e. Lipschitz. The $\varepsilon$-greedy Q-learning satisfies \Cref{manualasm:lipschitz}.

When multiple agents employ reinforcers, we represent the system as a single vector of dimension $d_1 + \dots + d_N$ by stacking the individual algorithms, denoted by $\theta(k) = (\theta^1(k),\dots,\theta^N(k))$. Similarly, $D$ and $\pi$ indicate the collection of update functions and policies when missing a superscript. 

\subsection{Approximation in Continuous Time}\label{sec:approximation}

We are interested in describing the dynamics of learning of a reinforcer.
While relatively simple to analyze in a stationary, single-decision-maker environment, reinforcers become unpredictable when learning to play against each other. For example, all convergence properties of Q-learning rely on stationarity assumptions. Analyzing the learning path of any number of Q-learning agents requires describing discrete, stochastic updates and how these interact over time.

We deal with these difficulties using a continuous-time approach known as \emph{fluid approximations}.  We adopt the formalism first introduced by \citet{Kurtz1970}: the idea behind fluid approximations is to analyze the limit of the systems as the jumps get small and their frequency increases, which yields a typically more tractable ODE system. We expect to accurately model most online marketplaces, where decisions are taken at very high frequencies and the impact of individual decisions is usually small.

The first step of the approximation is re-casting a reinforcer $\theta$ as a process in continuous time, a procedure called \emph{Poissonization}. Formally, we consider a Poisson clock with rate $\lambda^1 = 1$ and define $\theta^1(t)$ as the pure-jump continuous-time process such that $\theta^1(0)=\theta(0)$ and that is constant for all $t$ except at the ticks $\tau$ of the Poisson clock, when it gets updated according to $D$. Intuitively, one can think of a sequence of stage games that are played only at the tick of the clock. We introduced a new layer of randomness, but the path traced by $\theta^1(t)$ remains identical to the path of the discrete $\theta(t)$.

We generate a sequence of processes $(\theta^n)_{n \in \mathbb{N}}$ by increasing the rate of the Poisson clock to $\lambda^n = n$. The goal is to regularize the learning dynamics; therefore, we need 
to compensate for frequent updates by reducing the contribution of each jump to the total estimate.
We do this by dividing each jump by $\frac{1}{n}$: in the parlance of \Cref{def:reinforcer}, at a given arrival time $\tau$ of the Poisson process, 
\[\theta^n(\tau) = \theta^n(\tau^-) + \frac{\alpha}{n} D\big(a(\tau),r_{\tau},\theta^n(\tau^-)\big).\]
Each process $\theta^n$ has the same infinitesimal generator: the sequence $(\theta^n)_{n \in \mathbb{N}}$ preserves the instantaneous rate of change uniformly. We can prove the following result:

\begin{theorem}\label{thm: fluid approx thm}
Let $H \subset \mathscr{T}$ be such that $D$ and $\pi$ are Lipschitz over $H$, and let $y_0 \in H$ be the initialization point of $\theta$. Then, the sequence of continuous-time stochastic processes $(\theta^n)_{n \in \mathbb{N}}$ converges in probability to the solution of the following Cauchy problem:
\[\begin{cases}
\frac{d\Theta^i(t)}{dt} = \alpha \mathbb{E}_{\pi^i,\pi^{-i}}\left[D^i\big(a^i,r(a^i,a^{-i}),\Theta^i(t)\big) \right] \\ 
\Theta^i(0) = y^i_0
\end{cases}\]
for all $i$.
That is,  $\lim_{n\to\infty} P\Big\{\sup_{t\leq T} \Big\lVert \theta^n(t) - \Theta(t)\Big\rVert > \eta \Big\}=0$ for all $T\geq 0$ and $\eta>0$ such that $\{\Theta(t)\}_{t\leq T} \subset H$.

\end{theorem}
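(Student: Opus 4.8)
The plan is to treat this as a fluid (mean-field) limit in the style of \citet{Kurtz1970}: identify the deterministic vector field driving the ODE, exhibit a martingale-plus-drift decomposition of each $\theta^n$, and close the argument with Doob's inequality and Gronwall's lemma. Stacking the agents, write the limiting field as $F(\theta) = \alpha\,\mathbb{E}_{a\sim\pi(\theta)}[D(a,r(a),\theta)]$, where $\pi(\theta)=\prod_i \pi^i(\theta^i)$, so that the per-coordinate expectation $\mathbb{E}_{\pi^i,\pi^{-i}}$ in the statement is exactly the $i$-th block of $F$. The first observation, already flagged in the text, is that the infinitesimal generator of $\theta^n$ applied to the coordinate maps produces expected drift $n\cdot\mathbb{E}_{a\sim\pi(\theta)}[\tfrac{\alpha}{n}D(a,r(a),\theta)] = F(\theta)$, independent of $n$: the $1/n$ jump rescaling exactly cancels the rate-$n$ acceleration. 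Because $D$ and $\pi$ are Lipschitz on $H$ and the action set is finite, $F$ is Lipschitz on $H$ (a finite sum of products of Lipschitz, locally bounded factors), so Picard--Lindel\"of delivers the unique solution $\Theta$ for as long as it remains in $H$; this is the object the theorem compares against.

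For the stochastic side, I would use the Dynkin/Doob--Meyer decomposition to write, for $t\le T$,
\[
\theta^n(t) = y_0 + \int_0^t F(\theta^n(s))\,ds + M^n(t),
\]
with $M^n$ a mean-zero martingale (note $\theta^n(0)=y_0$ exactly, since the initialization is shared). The martingale is a compensated pure-jump process whose jumps have size $O(1/n)$ and arrive at rate $n$, so its predictable quadratic variation is
\[
\langle M^n\rangle_T = \frac{1}{n}\int_0^T \mathbb{E}_{a\sim\pi(\theta^n(s))}\big[\|\alpha\,D(a,r(a),\theta^n(s))\|^2\big]\,ds = O(1/n),
\]
using the boundedness of $D$ on the relevant region. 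Doob's maximal inequality then gives $\mathbb{E}[\sup_{t\le T}\|M^n(t)\|^2] \le C\,\mathbb{E}[\langle M^n\rangle_T] = O(1/n)$, hence $\sup_{t\le T}\|M^n(t)\|\to 0$ in probability. Subtracting the integral equation for $\Theta$ from the one for $\theta^n$ and invoking the Lipschitz constant $L$ of $F$,
\[
\|\theta^n(t)-\Theta(t)\| \le \sup_{s\le T}\|M^n(s)\| + L\int_0^t \|\theta^n(s)-\Theta(s)\|\,ds,
\]
so Gronwall's lemma yields $\sup_{t\le T}\|\theta^n(t)-\Theta(t)\| \le e^{LT}\sup_{t\le T}\|M^n(t)\| \to 0$ in probability, which is exactly the claim.

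The hard part will be that $D$ and $\pi$ are only assumed Lipschitz on $H$, whereas the discrete trajectory $\theta^n$ is a priori free to wander outside $H$, where $F$ need be neither Lipschitz nor bounded (indeed, under \Cref{manualasm:lipschitz} it may be discontinuous across the argmax-tie hyperplanes). Both the quadratic-variation estimate and the Gronwall step silently assumed a uniform Lipschitz/boundedness constant, so I must localize. Since $\{\Theta(t)\}_{t\le T}$ is a compact subset of $H$, I would fix a compact tube $K$ with $\{\Theta(t)\}_{t\le T}\subset \mathrm{int}(K)\subset K\subset H$, choose a globally Lipschitz, bounded extension $\tilde F$ of $F|_K$, and run the entire martingale--Gronwall argument for the modified process $\tilde\theta^n$ driven by $\tilde F$. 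The convergence above then holds unconditionally for $\tilde\theta^n\to\Theta$, since the extended ODE agrees with the original on $K$ and $\Theta$ stays in $\mathrm{int}(K)$. Finally, defining the exit time $\sigma^n=\inf\{t:\theta^n(t)\notin K\}$, the margin between $\{\Theta(t)\}_{t\le T}$ and $\partial K$ together with $\sup_{t\le T\wedge\sigma^n}\|\tilde\theta^n-\Theta\|\to 0$ forces $P(\sigma^n\le T)\to 0$; on the complementary event $\theta^n$ and $\tilde\theta^n$ coincide on $[0,T]$, transferring the limit back to the original $\theta^n$. A minor point to verify along the way is that the $O(1/n)$ jumps cannot cause the process to overshoot $K$ in a way that escapes this control, which follows because the jump sizes vanish uniformly.
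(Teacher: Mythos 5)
Your proposal is correct, but it takes a genuinely different route from the paper. The paper's proof sets up the same Poissonized, $1/n$-rescaled sequence and the same $n$-independent drift $F$, but then closes the argument by verifying the hypotheses of Theorem 2.11 of \citet{Kurtz1970} --- Lipschitz continuity of $F$, the trivial convergence $F_n = F$, and a jump-size condition (their \Cref{lemma:technical}, which checks that jumps of size larger than $M/n$ have measure zero and that the expected jump magnitude per unit time stays bounded) --- and cites that theorem as a black box for the uniform-in-time convergence in probability. You instead reprove the law-of-large-numbers core by hand: the Dynkin decomposition $\theta^n(t) = y_0 + \int_0^t F(\theta^n(s))\,ds + M^n(t)$, the predictable quadratic variation bound $\langle M^n\rangle_T = O(1/n)$ coming from jumps of size $O(1/n)$ at rate $n$, Doob's $L^2$ maximal inequality, and Gronwall. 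This is essentially the internal machinery behind Kurtz's theorem, so the two proofs are mathematically consonant; what yours buys is (i) self-containedness, (ii) an explicit quantitative rate --- your chain of estimates gives $\sup_{t\le T}\lVert\theta^n-\Theta\rVert = O_P(e^{LT}n^{-1/2})$, which the black-box citation obscures --- and (iii) a more careful treatment of the domain issue: the paper handles the restriction to $H$ somewhat tersely (it intersects with a large ball and lets Kurtz's statement absorb the rest), whereas your compact-tube construction, Lipschitz extension $\tilde F$, and exit-time argument $P(\sigma^n \le T)\to 0$ make explicit why the conclusion is legitimately conditional on $\{\Theta(t)\}_{t\le T}\subset H$, exactly as the theorem states. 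The one step you should spell out if you write this up fully is the construction of the modified process $\tilde\theta^n$: since the dynamics are specified through jump distributions rather than through $F$ directly, you need to couple $\theta^n$ and $\tilde\theta^n$ so that they share the same jumps until the exit time from $K$ (e.g., by altering the jump kernel only outside $K$), which is routine but is the point where the extension argument touches the underlying probability space.
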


We provide a formal construction of the sequence $\theta^n$ and a proof of this result in \Cref{app:proofs}.
The process $\Theta$ is the fluid approximation to $\theta$: it is a deterministic dynamical system whose time-derivative is the expected update that the discrete process $\theta$ would incur over one unit of time. \Cref{thm: fluid approx thm} guarantees that the sequence of pure-jump processes $(\theta^n)_{n \in \mathbb{N}}$ draws closer and closer (in probability) to the continuous process $\Theta$.  
The theorem relies on a law-of-large-numbers argument: when updates occur at high frequency and each update is small, the process behaves similarly to its expectation. 
We leverage the fundamental theorem of fluid approximations by \citet{Kurtz1970}, a stochastic-processes version of the law of large numbers, to conclude convergence in probability.

\subsection{Reinforcers in Continuous Time}
Instead of analyzing the discrete reinforcers, we focus on their fluid limits. We now introduce a number of concepts from the literature on dynamical systems that we use throughout the paper to formalize our statements. 

\begin{definition}
Given a dynamical system $\frac{d\theta}{dt}$, the \emph{flow} of $\theta$ starting in $x$ is the map
\[
\theta(-,x) \colon \begin{array}[t]{c c c} 
          [0,+\infty) &\rightarrow& \mathscr{T} \\ 
          t &\mapsto& \theta(t,x) 
         \end{array}
\]
that satisfies \Cref{eq:reinforcer} with initial condition $\theta(0,x) = x$. A \emph{trajectory} of the dynamical system is the graph of the flow, $\big\{(t,\theta(t,x)) \colon t \in [0,+\infty)\big\}$. The set $\Gamma_x = \big\{ \theta(t,x) \colon t \in [0,+\infty) \big\}$ is the \emph{orbit} of $\theta$ starting from $x$. 
If a sequence $(t_n)_{n \in \mathbb{N}}$ is such that 
\[
    \begin{cases}
    \underset{n \to \infty}{\lim} t_n = +\infty \\
    \underset{n \to \infty}{\lim} \theta(t_n,x) = y 
    \end{cases}
\]
we say that $y$ belongs to the \emph{forward limit set} for the orbit $\Gamma_x$.
A \emph{steady state} of the dynamical system is a fixed point of its law of motion, i.e. $\frac{d\theta}{dt}(t) = 0$. 
\end{definition}

\Cref{def:reinforcer} specifies that the actions taken by a reinforcer depend directly on its current estimates. Thus, analyzing the dynamical system of $\theta^i$ is a good proxy for the path of play. Moreover, studying the forward limit set of a reinforcer allows us to focus on estimated values instead of realized actions. Since many policies, such as  $\varepsilon$-greedy, prescribe randomly exploring at small rates arbitrarily ahead in the future, even if the reinforcer settles its estimates and identifies the action with the largest expected reward, it will keep playing sub-optimal actions (albeit with a small probability). Thus, focusing on the path of estimates instead of the path of play provides a natural way to define stationarity in the case of learning: a reinforcer becomes stationary if its estimates reach a steady state. We adopt this view for simplicity and clarity of exposition.
\begin{definition}\label{def:steadystate}
We say a reinforcer $(\theta^i, \pi^i)$ initialized at $x$ \emph{converges} if $\theta^i(t)$'s flow started at $x$  converges on a steady-state $\overline{\theta}^i$.
We say the agent \emph{converges on action $a_\text{ss}$} if the steady-state $\overline{\theta}^i$ is such that $a_\text{ss} \in \argmax_{a \in A_i} \overline{\theta}^i_a$. If the argmax is not unique, we say $\overline{\theta}^i$ is a \emph{pseudo-steady-state}.
\end{definition}

Requiring that a steady state always exist can be stringent, which motivates the following definition of \emph{learning} for this paper: agents learn action $a_l$ if the estimate of that action is always the largest in the limit.

\begin{definition}
We say the reinforcer \emph{learns action $a_l$} if there exists a $T>0$ such that $a_l \in \argmax_{a \in A} \theta^i_a(t)$  for all $t \geq T$. Equivalently, for all $\overline{\theta}^i$ in the forward limit set of a given orbit,  $\overline{\theta}^i_{a_l} = \max_{a \in A} \overline{\theta}^i_a(t)$.
\end{definition}

It follows that if the forward limit set of $\theta$ is a singleton, an agent who learns action $a_l$ also converges on action $a_l$. 
Finally, notice that an agent can learn action $a_l$ even if she doesn't play said action in each period (for example because an agent explores with positive probability in the limit). 

The definition of reinforcers is rather permissive, allowing for many procedures well known in the AI literature. For the sake of tractability, in the rest of the paper we will focus on what we call \emph{separable} reinforcers.

\begin{definition}\label{def:regularity}
    A reinforcer $(\theta^i,\pi^i)$ is said to be \emph{separable} if the fluid approximation of $\theta^i$ is of the form
    \begin{equation}\label{eq:reinforcer}
        \frac{d\theta^{i}_{a}(t)}{dt} = \alpha^{i}_{a}\big(\theta^i(t)\big)\Big[U\big(\theta^{i}_a(t),\mathbb{E}_{\pi_{-i}}[r^i(a,\pi_{-i})]\big) + V\big(\theta^i(t)\big)\Big].
    \end{equation}
    where $\alpha^i_a(\theta^i) \colon \mathscr{T} \rightarrow [0,1]$. Moreover we require that $\alpha$, $U$, and $V$ be a.e. Lipschitz in all components, $U$ be Lipschitz everywhere and increasing in $\mathbb{E}[r(a,\pi_{-i})]$ and decreasing in $\theta^i_a$, and $\frac{\partial V}{\partial \theta^{a^i}} < - \frac{\partial U}{\partial \theta^{a^i}}$ almost everywhere. 
\end{definition}

Two parts make up a separable reinforcer: a component that is equal for all actions, $V(\theta^i)$, and a component that is action specific but Lipschitz over the entire domain $\mathscr{T}$, $U(\theta^i_{a},\mathbb{E}[r^i])$. The function $U$, uniform across all actions, operates on a given action's estimates. The first of the monotonicity assumptions amounts to requesting that a reinforcer's updates increase with good news. The second states that a reinforcer likes surprises: the update shrinks if the agent already holds an action in high regard. 

\addtocounter{example}{-1}
\begin{example}[continued]
    $\varepsilon$-greedy Q-learning is a separable reinforcer. For Q-learning, 
    \begin{align*}
    U(Q^i_a(t),\mathbb{E}[r^i]) &= \mathbb{E}_{\pi^{-i}}[r^i(a^i,a^{-i})] - Q^i_a(t) \\
    V(Q^i) &= \gamma \max_a Q^i_a(t)\\
    \alpha^i_a(Q^i(t)) &= \alpha^i \cdot (\pi_\varepsilon(Q^i(t)))_a
    \end{align*}
    Thus, $U$ is linear in its arguments and Lipschitz everywhere, increasing in rewards and decreasing in $Q^i_a(t)$. The common component $V$ is constant over its Lipschitz domains and since $\gamma<1$ its derivative is dominated by $U$'s. Finally, the learning rate $\alpha^i_a(Q^i(t))$ is the product of the common rate $\alpha^i$ and the probability of selecting action $a$ given $Q^i(t)$.
\end{example}

The restriction to separable reinforcers does not rule out standard algorithms.
Separability mainly requires that an algorithm "treats each action the same": the only term that can differ across actions is the learning rate $\alpha^i_a$. All entries of the vector $\theta^i$ adopt the same action-specific component, and any interaction term appears uniformly in all actions' updates.

\section{Q-learning in the Prisoner's Dilemma}\label{sec:PD}
This section analyzes how Q-learning algorithms behave in a family of Prisoner's Dilemmas to build intuition towards more general statements. Simulations show that Q-learners can coordinate to cooperate over a wide range of settings, but their coordination is not perfect and they randomly switch between cooperation and defection.
Analyzing the fluid approximation allows us to pin down the key mechanism behind cooperation.

\paragraph{Model.} We consider a family of Prisoner's Dilemma games, as illustrated in  \Cref{fig:stage game payoff}. Alice and Bob both start with an equal endowment of two US dollars and they simultaneously decide whether to invest their funds in a shared pool that grows in value by a factor denoted as $g$, where $g$ falls between $1$ and $2$. The accumulated wealth in the pool is then divided equally between Alice and Bob, and those who didn't invest get to keep their endowments as well.
\begin{figure}[h]
\centering
    \begin{game}{2}{2}[Alice][Bob]
    \> $C$ \> $D$\\
    $C$ \>$2g,2g$ \>$g,2+g$\\
    $D$ \>$2+g,g$ \>$2,2$
    \end{game}
    \caption{Payoffs of the stage game, $1< g < 2$.}
    \label{fig:stage game payoff}
\end{figure}
This game is a canonical model of the free-rider problem, equivalent to a Prisoner's Dilemma. The parameter $g$ models the attractiveness of joint cooperation: the larger $g$, the more attractive cooperation (action C) becomes. However, for all $g \in (1,2)$, the dominant strategy, and the only Nash equilibrium, is to play ``defect" (action D) and keep one's change. 

\paragraph{Simulations.} We simulate the path of play of Alice and Bob when they adopt $\varepsilon$-greedy Q-learning in this free-rider problem $100$ times for various values of $g$. \Cref{fig:Bar Chart} shows the results of these numerical experiments. The algorithmic agents learn to play 
the dominant strategy equilibrium $\{D, D\}$ only for low values of the parameter $g$. 
\begin{figure}[h!]
\centering
    \begin{subfigure}[t]{0.49\textwidth}
    \includegraphics[width=80mm]{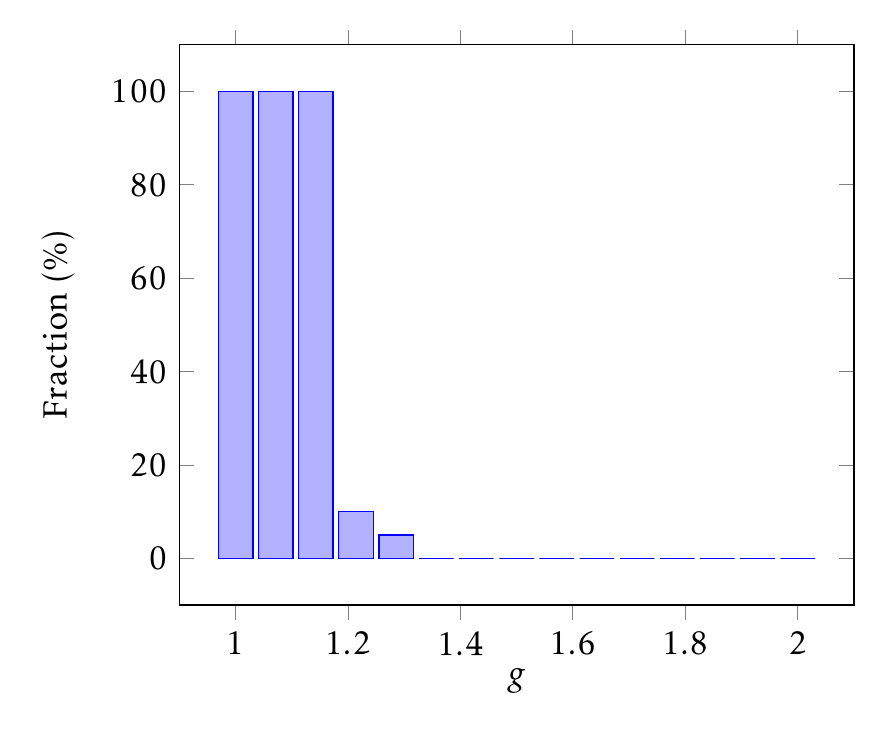}
    \caption{Each bar represents the fraction of the 1000 simulations for a given $g$ where the agents learn the Nash Equilibrium $\{D,D\}$ (agents play the pair $(D,D)$ in over $50\%$ of iterations).}
    \label{fig:Bar Chart}
    \end{subfigure}
    \hfill
    \begin{subfigure}[t]{0.49\textwidth}
    \centering
    \includegraphics[width=75mm]{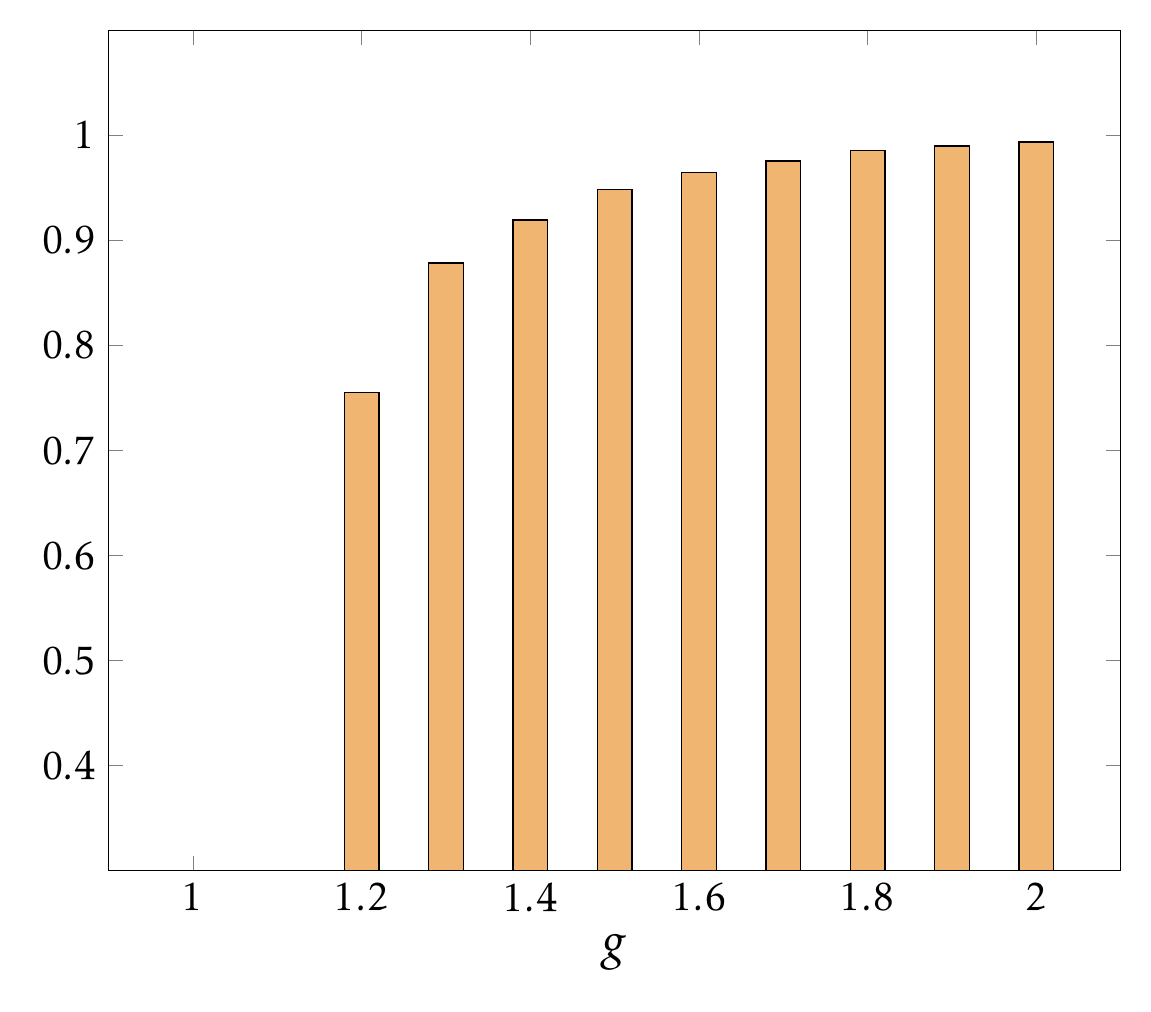}
    \caption{Each bar represents the ratio of periods $k \in [80,001, 100,000]$ where cooperation is a player's preferred action ($Q_C(k) \geq Q_D(k)$) over the total number of periods in the same interval ($20,000$), for a sample experiment with value $g$.}
    \label{fig:localtime}
    \end{subfigure}
    \caption{For these simulations, we chose parameters $\varepsilon = 0.1$, $\alpha = 0.05$ and $\gamma = 0.9$. In each simulation, we initialize both algorithms optimistically, i.e. with values larger than the maximum available continuation value, $\frac{2g}{1-\gamma}$.}
    \label{fig:Experiment results}
\end{figure}
Instead, when $g$ is large, the agents cooperate, albeit imperfectly. Both cooperation and defection appear in unpredictable but recurrent cycles, as shown in \Cref{fig:Alice,fig:Bob}: the value of collaboration is generally above that of defection, but after some time it drops below $Q_D(k)$, so that agents switch to playing $D$. The value of defection then decreases almost immediately, and players revert to cooperation. 
These simulations highlight a few puzzles. First, cooperation arises only for large values of $g$, even though defection is always a dominant strategy. Second, cooperation seems to consist of cycles, but one cannot impute these to ``retaliatory" strategies, since Q-learning does not carry memory of past play. \Cref{fig:localtime} shows that in the long run Alice and Bob play cooperation for a large fraction of the time. 
\begin{figure}[h!]
\centering
    \begin{subfigure}[b]{0.49\textwidth}
    \includegraphics[width=70mm]{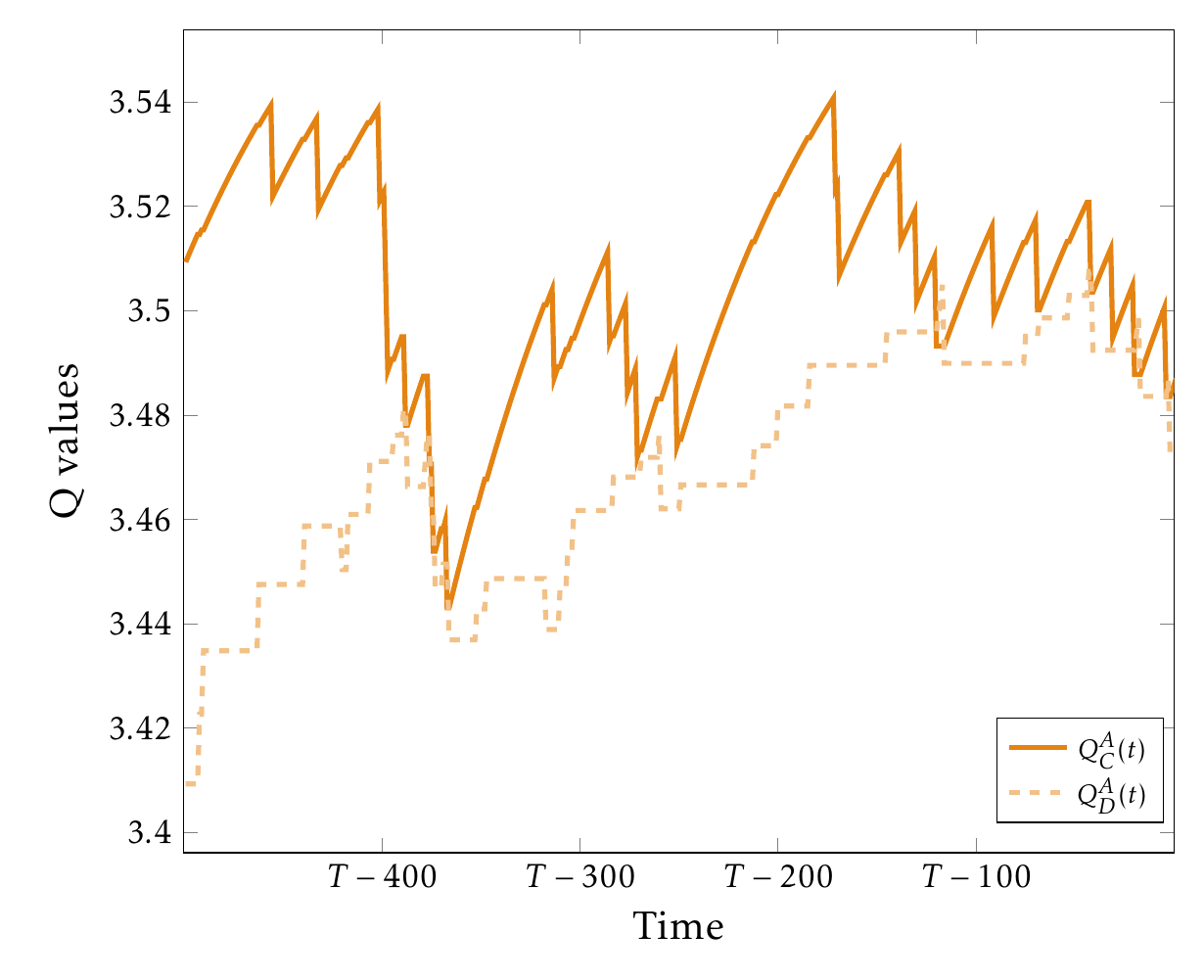}
    \caption{Evolution of Alice's Q-values.}
    \label{fig:Alice}
    \end{subfigure}
    \hfill
    \begin{subfigure}[b]{0.49\textwidth}
    \centering
    \includegraphics[width=70mm]{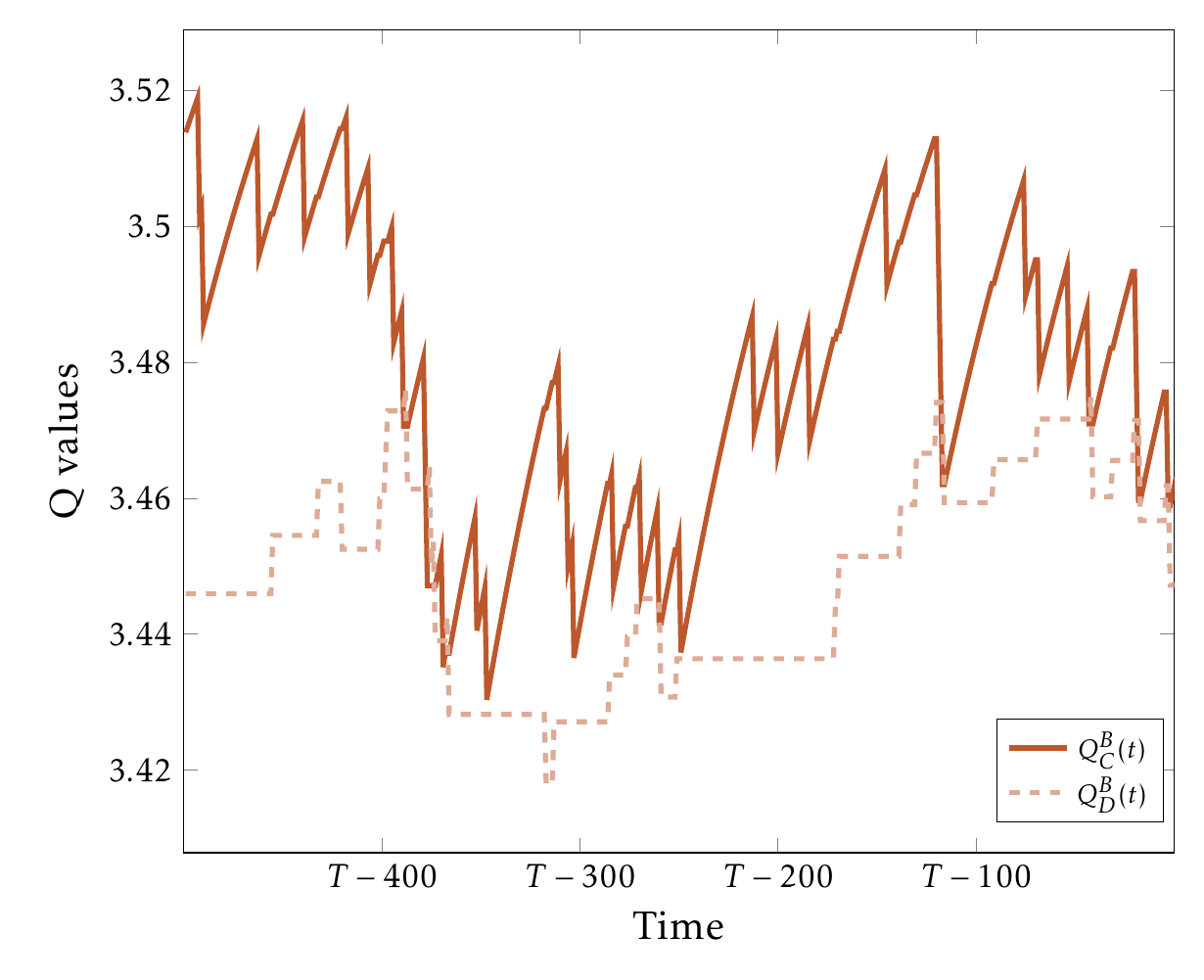}
    \caption{Evolution of Bob's Q-values.}
    \label{fig:Bob}
    \end{subfigure}
    \caption{The two graphs depict the last 500 iterations of a sample simulation with $g=1.8$. Other parameters are the same as in \Cref{fig:Experiment results}.}
    \label{fig:jagged}
\end{figure}

\subsection{Theoretical Results}\label{sec:results PD}

We begin the theoretical analysis by explicitly deriving the continuous-time approximation of $\varepsilon$-greedy Q-learning. In every iteration, Alice and Bob play the action corresponding to their largest payoff estimate (with high probability). Both learn something about the value of their action and then update their payoff estimate.
In the parlance of \Cref{def:reinforcer}, both Alice and Bob evolve according to a function $D^i$ that is discontinuous along the surface $Q^i_D(k) =  Q^i_C(k)$. To sidestep this discontinuity in the right-hand side of the discrete-time system, we first apply \Cref{thm: fluid approx thm} to $Q(k)$ over the largest open sets such that $D = (D^A,D^B)$ is everywhere Lipschitz. We call these sets \emph{maximal continuity domains}: in the case of $\varepsilon$-greedy Q-learning these are sets $\omega_{a,b}$ of vectors $Q$ such that Alice's greedy action is $a$ and Bob's greedy action is $b$; let $\overline{\omega}_{a,b}$ be their closure.

Over $\omega_{C,C}$ the greedy action for both players is $C$, so that in every period Alice cooperates with probability\footnote{That is, $C$ is selected with probability $1-\varepsilon$ if the randomization device instructed the agent to be greedy and with probability $\frac{\varepsilon}{2}$ if the agent plays a random action.} $1-\frac{\varepsilon}{2}$ and defects with probability $\frac{\varepsilon}{2}$. Hence, with probability $(1-\frac{\varepsilon}{2})^2$ she collects reward $2g$ --- similarly for other profiles. Therefore, the fluid limit in $\omega_{C,C}$ solves
\begin{equation}
    \label{eq:ODE in CC}
    \begin{cases}
    \dfrac{d\mathbf{Q}^A_C(t)}{dt}=\alpha\left(1-\dfrac{\varepsilon}{2}\right)\left[\left(1-\dfrac{\varepsilon}{2}\right)2g+ \dfrac{\varepsilon}{2} g+(\gamma-1)\mathbf{Q}_C^A(t)\right]\\
    \dfrac{d\mathbf{Q}^A_D(t)}{dt}=\alpha \dfrac{\varepsilon}{2} \left[\left(1-\dfrac{\varepsilon}{2}\right)(2+g) +2\dfrac{\varepsilon}{2}+\gamma \mathbf{Q}_C^A(t)- \mathbf{Q}_D^A(t)\right] \\
    \dfrac{d\mathbf{Q}^B_C(t)}{dt}=\alpha\left(1-\dfrac{\varepsilon}{2}\right)\left[\left(1-\dfrac{\varepsilon}{2}\right)2g+ \dfrac{\varepsilon}{2} g+(\gamma-1)\mathbf{Q}_C^B(t)\right]\\
    \dfrac{d\mathbf{Q}^B_D(t)}{dt}=\alpha \dfrac{\varepsilon}{2} \left[\left(1-\dfrac{\varepsilon}{2}\right)(2+g) +2\dfrac{\varepsilon}{2}+\gamma \mathbf{Q}_C^B(t)- \mathbf{Q}_D^B(t)\right]
    \end{cases}
\end{equation}
where we denote in bold the fluid approximation of the algorithms of Alice and Bob. Similar linear systems appear in all continuity domains. We can write the dynamical system in matrix form as 
\begin{equation}
    \label{eq:allODE}
    \dot{\mathbf{Q}}(t) =
    \begin{cases}
    A_{C,C}\mathbf{Q}(t) + b_{C,C} \text{ for } \mathbf{Q}(t) \in \omega_{C,C}\\
    A_{C,D}\mathbf{Q}(t) + b_{C,D} \text{ for } \mathbf{Q}(t) \in \omega_{C,D}\\
    A_{D,C}\mathbf{Q}(t) + b_{D,C} \text{ for } \mathbf{Q}(t) \in \omega_{D,C}\\
    A_{D,D}\mathbf{Q}(t) + b_{D,D} \text{ for } \mathbf{Q}(t) \in \omega_{D,D}
    \end{cases}
\end{equation}
Dynamical systems with discontinous right-hand sides can be analyzed using the tools of differential inclusions: \citet{Filippov1988} guarantees that there exists a forward solution to \Cref{eq:allODE}.

The dynamics of this $4$-dimensional piecewise-linear system are complex: the system exhibits chaotic behavior over various parametrizations and initial conditions (see \Cref{app:chaos} for more details on chaos theory and its analysis in the Prisoner's Dilemma). To make progress in the analysis, we begin by restricting attention to the subspace $\{Q \in \mathbb{R}^4 |Q^A_a = Q^B_a \text{ for } a=C,D\} \cong \mathbb{R}^2$, where the system is bound to remain if the initial condition is symmetric. We will then make the following assumption for the rest of \Cref{sec:PD}:
\begin{manualasm}{S}\label[manualasm]{manualasm:symmetry}
Let $\mathbf{Q}^A_a(0) = \mathbf{Q}^B_a(0)$ for $a = C,D$. 
\end{manualasm}
Under \Cref{manualasm:symmetry}, we prove the following proposition, which characterizes the limiting behavior of the continuous-time approximation.

\begin{proposition}\label{prop:two eq PD}
Let $\underline{\varepsilon}(g) =  1-\sqrt{\frac{2-g}{g}}$. The forward limit set of $\mathbf{Q}$ is a singleton for any initial condition. 
If $\varepsilon \geq \underline{\varepsilon}(g)$, all initial conditions lead to the unique steady state 
\[
q^{eq}_D = \left( \frac{2\varepsilon + 2g - \varepsilon g}{2} +\frac{ \gamma \left( 4+\varepsilon g \right)}{2(1-\gamma)}, \frac{  4+\varepsilon g }{2(1-\gamma)}  \right)
\]
which lies in $\omega_{D,D}$.

If $\varepsilon < \underline{\varepsilon}(g)$, initial conditions may lie in one of two regions of attractions, $R_D$ and $R_C$. All initial conditions in $R_D$ lead to the steady state $q^{eq}_D$. All initial conditions in $R_C$ instead lead to the pseudo-steady state (see \Cref{def:steadystate})
\[
q^{eq}_{C}= (y,y) \text{ where } y= \frac{1+g + \sqrt{(g-1)(g-1-\varepsilon g+ \frac{\varepsilon^2g}{2} )}}{(1-\gamma)},
\]
which lies in $ \overline{\omega}_{D,D} \cap \overline{\omega}_{C,C}$. 
\end{proposition}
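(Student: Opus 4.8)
The plan is to exploit \Cref{manualasm:symmetry}: the symmetric subspace $\{\mathbf{Q}^A=\mathbf{Q}^B\}$ is invariant under \Cref{eq:allODE}, so I reduce the $4$-dimensional system to the planar system in $(\mathbf{Q}_C,\mathbf{Q}_D)$ obtained by setting $\mathbf{Q}^A_a=\mathbf{Q}^B_a$ in \Cref{eq:ODE in CC} and its analogue. On this plane only two continuity domains survive, $\omega_{C,C}=\{\mathbf{Q}_C>\mathbf{Q}_D\}$ and $\omega_{D,D}=\{\mathbf{Q}_C<\mathbf{Q}_D\}$, separated by the switching line $\Sigma=\{\mathbf{Q}_C=\mathbf{Q}_D\}$; in each domain the field is affine, while across $\Sigma$ it is discontinuous, so I work with Filippov solutions, whose existence is guaranteed by \citet{Filippov1988}.

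First I analyze the two affine pieces separately. In $\omega_{D,D}$ the coefficient matrix is triangular, with eigenvalues $-\alpha(1-\tfrac{\varepsilon}{2})(1-\gamma)$ and $-\alpha\tfrac{\varepsilon}{2}$, both strictly negative; solving for its rest point yields $q^{eq}_D$, and evaluating $(\mathbf{Q}_D-\mathbf{Q}_C)$ there gives a positive value because $g<2$, so $q^{eq}_D$ genuinely lies in $\omega_{D,D}$ and is a globally attracting node of the $\omega_{D,D}$ flow. In $\omega_{C,C}$ the analogous rest point instead satisfies $\mathbf{Q}_C-\mathbf{Q}_D=g-2<0$, so it is \emph{virtual}: it falls in $\{\mathbf{Q}_C<\mathbf{Q}_D\}$, outside its own domain. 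Hence $\omega_{C,C}$ contains no equilibrium, and every orbit started there reaches $\Sigma$ in finite time.

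The crux is the behavior on $\Sigma$. Writing $s=\mathbf{Q}_C-\mathbf{Q}_D$, I evaluate each affine piece on $\Sigma$ at the common value $q$ to obtain the scalar fields $\dot s|_{CC}(q)$ and $\dot s|_{DD}(q)$; both are affine in $q$, with $\dot s|_{CC}$ decreasing and $\dot s|_{DD}$ increasing (the slopes being $\mp(1-\varepsilon)(1-\gamma)$). An attracting sliding segment is exactly where $\dot s|_{CC}<0<\dot s|_{DD}$, and a pseudo-steady state is a zero of the Filippov sliding field there, i.e.\ a point where $f_{CC}$ and $f_{DD}$ are anti-collinear. This condition is $\det[f_{CC}(q),f_{DD}(q)]=0$, which—each entry being affine—is a genuine quadratic in $q$ whose $q^2$-coefficient is proportional to $(1-\tfrac{\varepsilon}{2})^2-(\tfrac{\varepsilon}{2})^2=1-\varepsilon\neq 0$. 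Solving it produces the radical in $q^{eq}_C=(y,y)$, with discriminant proportional to $(g-1)\big(g-1-\varepsilon g+\tfrac{\varepsilon^2 g}{2}\big)$; since $g>1$ this has the sign of $g\,\phi(\varepsilon)-1$ where $\phi(\varepsilon)=1-\varepsilon+\tfrac{\varepsilon^2}{2}$ is strictly decreasing on $(0,1)$. Because $g\,\phi(\underline{\varepsilon}(g))=1$ by definition of $\underline{\varepsilon}$, the quadratic has real roots iff $\varepsilon\le\underline{\varepsilon}(g)$, which is precisely the dichotomy in the statement.

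It then remains to assemble the global phase portrait and, in particular, to prove the forward limit set is always a singleton. When $\varepsilon\ge\underline{\varepsilon}(g)$ there is no pseudo-equilibrium: I argue the Filippov dynamics on $\Sigma$ always carry the state into $\omega_{D,D}$, whose unique globally stable node $q^{eq}_D$ then captures every orbit. When $\varepsilon<\underline{\varepsilon}(g)$ I identify $q^{eq}_C$ with the upper root, verify it lies in the attracting sliding segment and is a stable rest point of the one-dimensional sliding flow, and partition initial conditions into $R_D$ (orbits that settle directly in $\omega_{D,D}$) and $R_C$ (orbits that reach the attracting sliding segment and slide into $q^{eq}_C$). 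The main obstacle is exactly this last step: a planar piecewise-linear system can in principle support limit cycles—indeed the \emph{full} $4$-dimensional system is chaotic—so the argument must lean on the specific structure here (the virtual node in $\omega_{C,C}$ that expels every orbit onto $\Sigma$, the true stable node in $\omega_{D,D}$, and the monotonicity of $\dot s|_{CC}$ and $\dot s|_{DD}$) to rule out periodic or recurrent behavior and force convergence to a point for every initial condition.
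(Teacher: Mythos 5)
Your proposal is correct and follows essentially the same route as the paper's proof: reduction to the symmetric planar piecewise-linear system, identification of $q^{eq}_D$ by zeroing the $\omega_{D,D}$ field, a Filippov sliding construction on $\overline{\omega}_{C,C}\cap\overline{\omega}_{D,D}$, and a quadratic whose discriminant $(g-1)\bigl(g-1-\varepsilon g+\tfrac{\varepsilon^2 g}{2}\bigr)$ produces the threshold $\underline{\varepsilon}(g)$ --- your anti-collinearity condition $\det[F_C,F_D]=0$ is exactly the numerator of the paper's sliding vector field set to zero. The global-convergence step you flag as the main obstacle (ruling out recurrent behavior to show the forward limit set is a singleton and delineating $R_C$, $R_D$) is likewise not carried out in the paper's own proof, which stops at existence of $q^{eq}_D$, the sliding field, and the feasibility condition $\varepsilon<1-\sqrt{\tfrac{2-g}{g}}$; your added local analysis (negative eigenvalues of the triangular matrices, the virtual equilibrium of the $\omega_{C,C}$ field at $\mathbf{Q}_C-\mathbf{Q}_D=g-2$) in fact goes beyond what the paper records.
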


In the symmetric subspace, the limiting behavior of the system can be twofold. When $\varepsilon$ is larger than the critical level $\underline{\varepsilon}(g)$, the algorithms converge on a steady state $q_D^{eq}$ where both players find defection to be the preferred action. However, when $\varepsilon$ is below the critical level, there exists additional steady state on the boundary that separates the two regions $\omega_{D,D}$ and $\omega_{C,C}$ --- a pseudo-steady-state according to \Cref{def:steadystate}, where Alice's and Bob's estimates of cooperation and defection coincide.
In this case the Q estimates lie between the long-run value of mutual defection, $ \frac{r(D,D)}{1-\gamma}$, and the long-run value of mutual cooperation, $\frac{r(C,C)}{1-\gamma}$.
Alice and Bob are indifferent between cooperation and defection at $q^{eq}_C$, and they play cooperation (defection) for a fraction $\tau$ ($1-\tau$) of the time.

\begin{corollary}\label{cor:Local time}
In the pseudo-steady-state $q_C^{eq}$ agents spend $\tau_c$ fraction of their time cooperating, where \[\tau = \dfrac{\frac{\varepsilon^2 g}{2}+\varepsilon-2-q^{eq}_C(\gamma-1)(1-\varepsilon)}{2(\varepsilon-1)(1+g+(\gamma-1)q^{eq}_C)} \in \left[\frac{1}{2},1\right].\]
\end{corollary}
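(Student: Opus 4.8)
The plan is to identify $\tau$ with the \emph{Filippov sliding weight} that the differential inclusion of \Cref{eq:allODE} assigns to the cooperative field at the pseudo-steady-state. By \Cref{prop:two eq PD}, $q^{eq}_C=(y,y)$ sits on the switching surface $\Sigma=\{Q_C=Q_D\}$ separating $\omega_{C,C}$ (where $Q_C>Q_D$, so cooperation is greedy) from $\omega_{D,D}$. On $\Sigma$ Filippov's convention replaces the discontinuous right-hand side by the convex hull $\overline{\mathrm{co}}\{f_{C,C},f_{D,D}\}$, and a sliding trajectory follows $f_\tau:=\tau f_{C,C}+(1-\tau)f_{D,D}$ for the unique weight $\tau\in[0,1]$ that renders $f_\tau$ tangent to $\Sigma$. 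The conceptual anchor is that this weight is exactly the long-run fraction of periods the chattering discrete process spends on the cooperative side $\{Q_C\ge Q_D\}$, i.e. the local time plotted in \Cref{fig:localtime}. I would first record the two fields by restricting \Cref{eq:ODE in CC} and its $\omega_{D,D}$ counterpart to the symmetric line $Q^A=Q^B=(Q_C,Q_D)$, obtaining $f_{C,C}$ and $f_{D,D}$ as explicit affine functions of $(Q_C,Q_D)$.

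The main computation is the tangency condition. Writing $h(Q)=Q_C-Q_D$, so $\Sigma=\{h=0\}$ and $\nabla h=(1,-1)$, the requirement $\nabla h\cdot f_\tau=0$ is linear in $\tau$ and gives
\[
\tau=\frac{(f_{D,D})_C-(f_{D,D})_D}{\big[(f_{D,D})_C-(f_{D,D})_D\big]-\big[(f_{C,C})_C-(f_{C,C})_D\big]}.
\]
I would then evaluate the four entries at $(y,y)$ --- each is the product of a play probability ($1-\tfrac{\varepsilon}{2}$ for the greedy action, $\tfrac{\varepsilon}{2}$ for the other) with an expected-reward-minus-value bracket --- and abbreviate $u:=(\gamma-1)y$. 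A short simplification shows the normal component of the defection field is $(f_{D,D})_C-(f_{D,D})_D=\alpha\big(2-\varepsilon-\tfrac{\varepsilon^2 g}{2}+u(1-\varepsilon)\big)$ and that the denominator collapses to $2\alpha(1-\varepsilon)\big(1+g+u\big)$. Substituting $q^{eq}_C=y$ and cancelling $\alpha$ reproduces the stated ratio: the numerator and denominator each flip sign relative to these expressions, which is immaterial. Equivalently, one may use that $f_\tau$ vanishes at the steady state, so either scalar component of $\tau f_{C,C}+(1-\tau)f_{D,D}=0$ yields the same $\tau$, their difference being the tangency equation above.

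For the bounds, note first that because $q^{eq}_C$ is a genuine sliding equilibrium, $\Sigma$ is attractive there, so $(f_{C,C})_C-(f_{C,C})_D\le 0\le (f_{D,D})_C-(f_{D,D})_D$, which already forces $\tau\in[0,1]$. To sharpen this to $\tau\ge\tfrac12$, I would take the difference of the two normal components and observe that the $u$-terms cancel, leaving the $y$-\emph{independent} quantity
\[
\big[(f_{D,D})_C-(f_{D,D})_D\big]-\big[(f_{C,C})_C-(f_{C,C})_D\big]=-\alpha g\Big(\varepsilon^2-2\varepsilon+2-\tfrac{2}{g}\Big).
\]
The bracket is the quadratic whose smaller root is precisely $\underline{\varepsilon}(g)=1-\sqrt{(2-g)/g}$, so under the standing regime $\varepsilon<\underline{\varepsilon}(g)$ it is strictly positive and the whole expression negative. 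Since the denominator of $\tau$ is also negative, the identity $\tau-\tfrac12=\frac{[(f_{D,D})_C-(f_{D,D})_D]-[(f_{C,C})_C-(f_{C,C})_D]}{2\big([(f_{C,C})_C-(f_{C,C})_D]+[(f_{D,D})_C-(f_{D,D})_D]\big)}$ is a ratio of two negative numbers, hence positive, giving $\tau\ge\tfrac12$, while the attractivity sign on $(f_{C,C})_C-(f_{C,C})_D$ gives $\tau\le 1$. It is pleasant that the lower bound $\tfrac12$ is controlled by the very threshold $\underline{\varepsilon}(g)$ that governs the existence of $q^{eq}_C$ in \Cref{prop:two eq PD}.

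The algebra above is not the real difficulty; the main obstacle is justifying the identification of the Filippov weight with the physical occupation fraction. \Cref{thm: fluid approx thm} delivers the fluid limit only inside the open continuity domains, where the field is Lipschitz, and is silent on $\Sigma$, where the argmax ties. Establishing that the empirical frequency of $\{Q_C\ge Q_D\}$ for the discrete reinforcer actually converges to the tangency weight $\tau$ requires a sliding-mode (occupation-measure) argument: one regularizes the discontinuity, shows the regularized trajectories spend asymptotic fraction $\tau$ above $\Sigma$, and passes to the limit. I would carry out this step --- the genuine crux of the corollary --- leaning on the Filippov solution concept already invoked for \Cref{eq:allODE}, and treat the closed-form evaluation of $\tau$ as its consequence.
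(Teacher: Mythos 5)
Your core computation coincides with the paper's own proof. In the proof of \Cref{prop:two eq PD} the paper already defines the local time at a boundary point as the Filippov weight
\[
\tau_C=\frac{c\cdot(A_D\mathbf{Q}+b_D)}{c\cdot(A_D\mathbf{Q}+b_D)-c\cdot(A_C\mathbf{Q}+b_C)},\qquad c=(1,-1),
\]
and its proof of the corollary is exactly the instruction to evaluate this expression at $\mathbf{Q}=q^{eq}_C$; your tangency condition $\nabla h\cdot f_\tau=0$ is the same formula, and your evaluation reproduces the stated ratio. On what you call the crux --- identifying this weight with the occupation fraction of the discrete process --- the paper proves nothing: it treats the Filippov weight as the \emph{definition} of local time, with a footnote pointing to the hysteresis-loop literature. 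Your proposed regularization/occupation-measure argument therefore goes beyond the paper, and since you only sketch it, it cannot be counted as carried out; the part of your write-up comparable to the paper is the algebra.

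That algebra, however, contains genuine errors exactly where you go beyond the paper's one-line proof, namely in establishing $\tau\in\big[\tfrac12,1\big]$. Write $a=c\cdot F_D$ and $b=c\cdot F_C$ at a symmetric point $(q,q)$ with $u=(\gamma-1)q$. Then
\[
a+b=\alpha g\Big(\varepsilon^2-2\varepsilon+2-\frac{2}{g}\Big),\qquad a-b=2\alpha(\varepsilon-1)(1+g+u),
\]
so the $y$-independent quantity tied to $\underline{\varepsilon}(g)$ is the \emph{sum} of the normal components, not their difference; your bounds paragraph asserts the opposite, and thereby contradicts your own preceding paragraph, where you correctly state that the denominator of $\tau$ collapses to $2\alpha(1-\varepsilon)(1+g+u)$. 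Likewise your identity is inverted: $\tau=a/(a-b)$ gives $\tau-\tfrac12=\frac{a+b}{2(a-b)}$, i.e.\ sum over difference, not difference over sum. These two swaps happen to compensate in sign, so your conclusion is true, but not for the reasons written. In addition, your attractivity inequalities are stated in the unflipped convention and are inconsistent with your own sign-flipped field expressions, and the claim that ``the denominator of $\tau$ is also negative'' is never justified --- it amounts to $1+g+(\gamma-1)q^{eq}_C=-\sqrt{(g-1)\big(g-1-\varepsilon g+\tfrac{\varepsilon^2 g}{2}\big)}<0$, which must be read off from the formula in \Cref{prop:two eq PD}. The repaired argument is in fact shorter: attractivity gives $b\le 0\le a$, hence $\tau\in[0,1]$ and $a-b>0$; the sum $a+b$ is $y$-independent and strictly positive precisely when $\varepsilon<\underline{\varepsilon}(g)$ (your quadratic-root observation, correctly attached to the sum); hence $\tau-\tfrac12=\frac{a+b}{2(a-b)}\ge 0$, with no further sign facts needed.
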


The pseudo-steady-state $q_C^{eq}$ corresponds to the imperfect cooperation we observed in the experiments. In particular, the analytic expression for the time spent cooperating approximates its discrete-time experimental counterpart closely, as shown in \Cref{fig:Local Time}.\footnote{In \Cref{app:chaos} we observe that the general asymmetric 4-D system gravitates around a similar equilibrium, where $Q_C = Q_D$. We interpret it as additional support for the restriction to a symmetric subspace.}

\subsection{Sketch of the Proofs}\label{sec:sketch}

With \Cref{manualasm:symmetry}, learning evolves according to this piecewise-linear dynamical system:
\begin{equation*}
    \dot{\mathbf{Q}}(t) =
    \begin{cases}
     A_{C,C}\mathbf{Q}(t) + b_{C,C} \text{ for } \mathbf{Q}(t) \in \omega_{C,C} \\
     A_{D,D}\mathbf{Q}(t) + b_{D,D} \text{ for } \mathbf{Q}(t) \in \omega_{D,D}
    \end{cases}
\end{equation*}

The flows within each $\omega_{a,a}$ characterize the evolution of the Q-functions. However, the system is discontinuous: we would like to preserve continuity of any given flow along the boundary $\overline{\omega}_{C,C} \cap \overline{\omega}_{D,D}$. 
\Cref{prop:Inclusion} below guarantees that we can suitably extend the flows on the boundary, similarly to continuous pasting techniques, such that the flow defined by the fluid limit is globally defined forward in time. 
\begin{proposition}\label{prop:Inclusion}
Let $F_a$ be the field defined as above over $\omega_{a,a}$ for all $a\in\{C,D\}$. There exists a global solution in the sense of \citet{Filippov1988} to the differential inclusion
\begin{align*}
    \frac{d\mathbf{Q}_{t}}{dt} &= F_a(\mathbf{Q}_{t}) \qquad \qquad \qquad \qquad \ \ \ \text{ over }\omega_{a,a} \text { for } a = C,D\\
    \frac{d\mathbf{Q}_{t}}{dt} &\in co\big\{F_a(\mathbf{Q}_t) \big| \ \forall a =C,D\big\} \quad \text{ when }\mathbf{Q}_t \in
    \overline{\omega}_{C,C} \cap \overline{\omega}_{D,D}
    \end{align*}
where $co\{\cdot\}$ denotes the convex hull of a set, and $\overline{\omega}$ is the closure of $\omega$.
\end{proposition}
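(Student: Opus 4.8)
The plan is to read the right-hand side of \Cref{eq:allODE}, restricted to the symmetric subspace $\cong\mathbb{R}^2$, as a single discontinuous vector field whose only discontinuity sits on the switching surface $\Sigma := \overline{\omega}_{C,C} \cap \overline{\omega}_{D,D}$, and then to verify that its Filippov regularization meets the hypotheses of the standard existence theorem for upper-semicontinuous differential inclusions. Concretely, I would first define the set-valued field $\mathcal{F}(\mathbf{Q}) = \{F_a(\mathbf{Q})\}$ in the interior of each $\omega_{a,a}$ and $\mathcal{F}(\mathbf{Q}) = \mathrm{co}\{F_C(\mathbf{Q}), F_D(\mathbf{Q})\}$ on $\Sigma$. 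Because $\Sigma$ is a line in the two-dimensional symmetric subspace, hence Lebesgue-null, and because each $F_a$ is affine and therefore extends continuously to the closure $\overline{\omega}_{a,a}$, this $\mathcal{F}$ coincides with the Filippov convexification of the piecewise field: the essential limiting values at a boundary point are exactly $F_C$ and $F_D$.

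Next I would check the three standing hypotheses. Nonemptiness, compactness and convexity of the values are immediate, since each $\mathcal{F}(\mathbf{Q})$ is either a single point or a closed segment. Upper semicontinuity follows from a closed-graph argument: for $\mathbf{Q}_n \to \mathbf{Q}$ and $v_n \in \mathcal{F}(\mathbf{Q}_n)$ with $v_n \to v$, continuity of the affine maps $F_C, F_D$ forces $v$ into $\mathrm{co}\{F_C(\mathbf{Q}), F_D(\mathbf{Q})\}$. Finally, affineness yields the linear-growth bound $\sup_{v\in\mathcal{F}(\mathbf{Q})}\lVert v\rVert \le \max_a \lVert A_{a,a}\mathbf{Q} + b_{a,a}\rVert \le L(1+\lVert\mathbf{Q}\rVert)$. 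With these in hand, the existence theorem of \citet{Filippov1988} delivers an absolutely continuous local solution through the initialization point satisfying the inclusion almost everywhere.

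It remains to promote this local solution to one defined for all $t \ge 0$. The linear-growth bound already suffices: a Gr\"onwall estimate gives $\lVert\mathbf{Q}(t)\rVert \le (\lVert\mathbf{Q}(0)\rVert + Lt)e^{Lt}$ on any finite horizon, ruling out finite-time escape and permitting indefinite forward continuation. One can in fact say more, since the relevant eigenvalues of each $A_{a,a}$ are negative (the estimates contract at rate $1-\gamma>0$ toward the affine target determined by $b_{a,a}$), so the orbit stays in a fixed bounded region and global existence becomes transparent.

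I expect the only delicate point to be confirming that the convex-hull prescription on $\Sigma$ is the \emph{correct} regularization rather than an ad hoc device: one must check that the one-sided limits of the affine fields taken from $\omega_{C,C}$ and $\omega_{D,D}$ reproduce exactly $F_C$ and $F_D$ on the boundary, and that no spurious limiting directions intrude. Once this identification is secured, the remaining verifications are mechanical consequences of affineness, and uniqueness is neither claimed nor needed --- existence of a forward Filippov solution is all that \Cref{prop:Inclusion} asserts.
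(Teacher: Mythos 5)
Your proposal is correct and follows essentially the same route as the paper: both verify the hypotheses of Filippov's existence theorem for upper-semicontinuous, convex-, closed-, bounded-valued inclusions (nonemptiness, convexity, and a closed-graph/one-sided-limit argument for upper semicontinuity at the switching surface) and then extend the local solution globally. The only cosmetic difference is in the globalization step, where you invoke linear growth plus Gr\"onwall while the paper exhausts the state space by arbitrarily large compact balls in which Filippov's theorem applies; these amount to the same use of the affine fields' boundedness on compacta.
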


Both vector fields $F_C$ and $F_D$ are well-defined also on the boundary between $\omega_{C,C}$ and $\omega_{D,D}$. This boundary is called a \emph{switching surface}, because the laws of motion must switch from one field to the other.  Adopting the convention of Filippov, we can define a vector field on the switching surface such that all flows extend to a global solution and such that certain continuous-pasting conditions are satisfied.\footnote{Uniqueness in general is not guaranteed: behavior on the switching surface can lead to multiplicities in the behavior of the system. The law of motion on the switching surface needs only to belong to the convex hull of the laws of motion on either side; by constraning the motion to satisfy certain continuous-pasting conditions we obtain well-defined orbits.} 

Let us call $\normal$ the unit normal vector to the switching surface $\overline{\omega}_{C,C}\cap \overline{\omega}_{D,D}$, so that $\normal \cdot F_a$ is the component of $F_a$ orthogonal to the surface. 
We divide the switching surface in three regions, according to the signs of the normal components of the two vector fields $F_C$ and $F_D$.
A \emph{crossing region} occurs when both normal fields to the boundary are of the same sign: either $F_C$ or $F_D$ can be used to define the law of motion on the surface, since any orbit leaves the switching boundary immediately. A \emph{repulsive region} occurs where both normals face away from the boundary, which will then never be reached; unless initialized here, an orbit will never intersect the repulsive region, thus we do not need to define a field here. Finally, a \emph{sliding region} occurs when both normal components of the two vector fields point towards the boundary. Every flow hitting the switching surface in the sliding region must continue on the switching surface, \emph{sliding along the boundary}. The sliding vector field is defined as a convex combination of the two vector fields $F_{C}$ and $F_{D}$ with parameter $\tau$ such that the normal component to the switching surface vanishes, i.e. $\tau (\normal \cdot F_C) + (1-\tau) (\normal \cdot F_D) = \vec{0}$. The vector field $\tau F_C + (1-\tau) F_D$ is the unique vector field in the convex hull of $F_C$ and $F_D$ whose flow is confined to the switching surface and that satisfies the differential inclusion requirements. \Cref{fig:slidecross} plots the vector fields around the boundary and shows examples of sliding and crossing boundaries with a sample orbit. 

\begin{figure}[hbtp]
    \centering
    \begin{subfigure}{.5\textwidth}
    \centering
    \includegraphics[width=55mm]{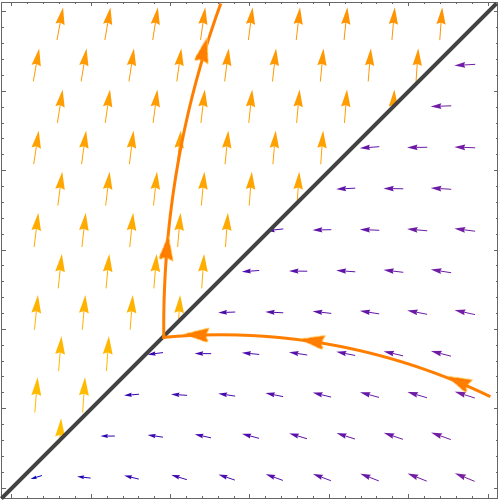}
    \caption{Orbit crossing the boundary.}
    \end{subfigure}%
    \begin{subfigure}{.5\textwidth}
    \centering
    \includegraphics[width=55mm]{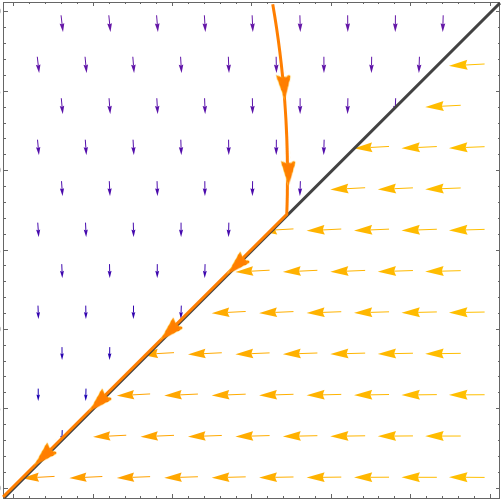}
    \caption{Orbit sliding along the boundary.}
    \end{subfigure}
    \caption{Depiction of two discontinuous flows around a switching surface, in the crossing and sliding case. }
    \label{fig:slidecross}
\end{figure}

\begin{figure}[hbtp]
    \centering
    \begin{subfigure}{\textwidth}
    \centering
    \includegraphics[width=155mm]{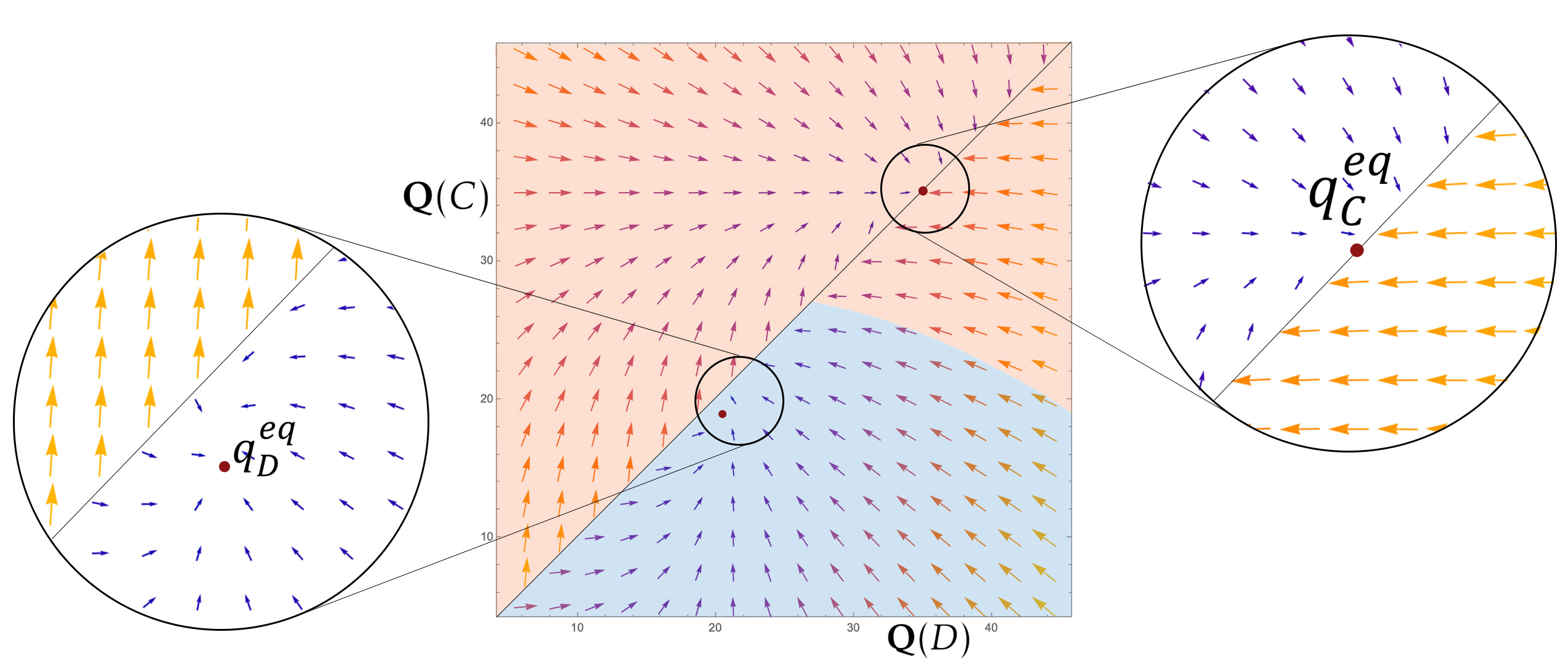}
    \caption{Phase space with $g=1.8$.}
    \label{fig:stability18}
    \end{subfigure}
    \begin{subfigure}{\textwidth}
    \centering
    \includegraphics[width=127mm]{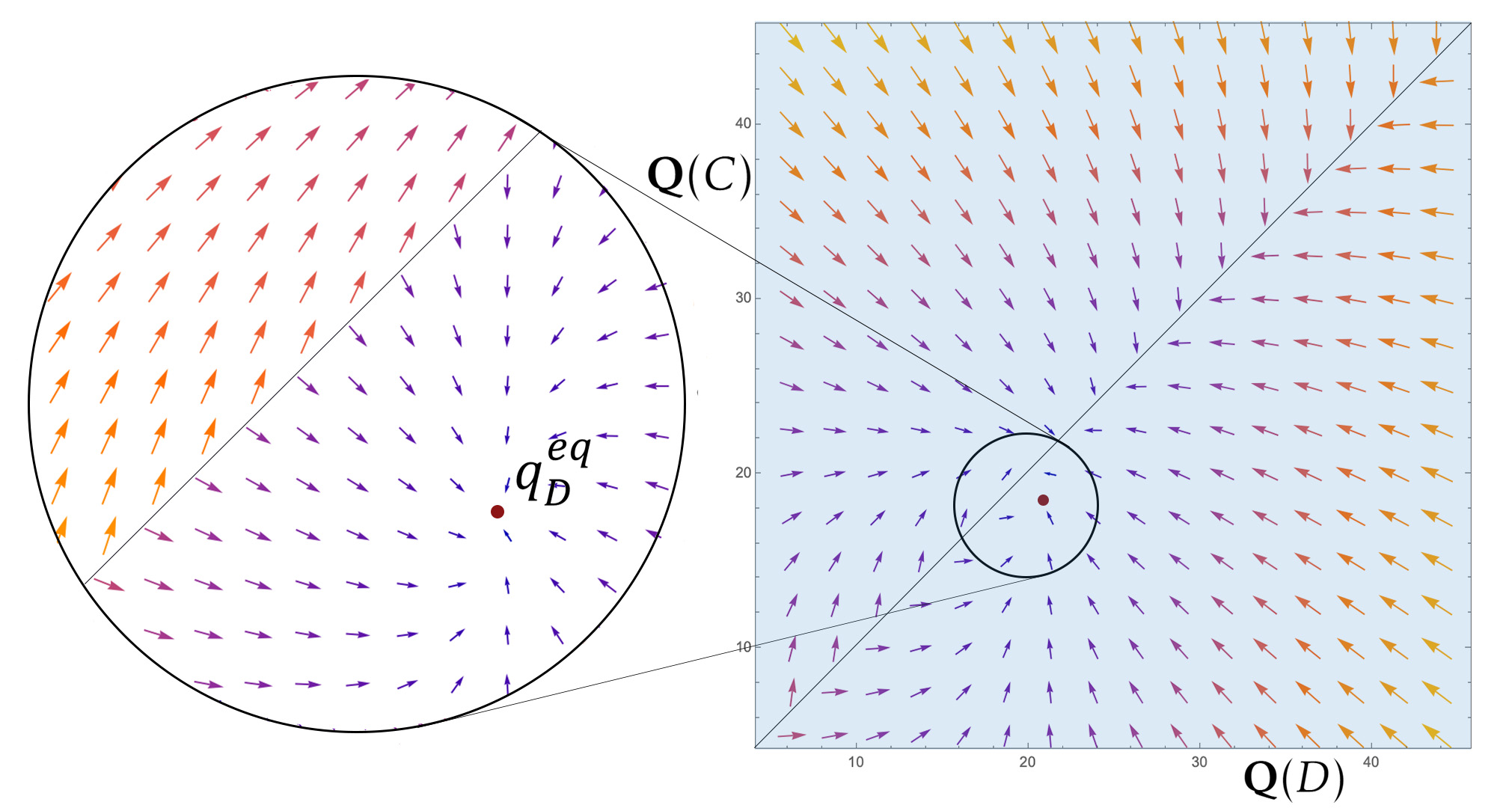}
    \caption{Phase space with $g=1.1$. }
    \label{fig:stability11}
    \end{subfigure}
    \caption{Stationary points are marked with a red dot. The  domain of attraction of the cooperative outcome is green-shaded, and the one for the non-cooperative outcome is blue-shaded. }
    \label{fig:vector fields}
\end{figure}

One can then look for stationary points by finding the solutions to $F_C=0$, $F_D=0$, and $\tau F_C + (1-\tau) F_D=0$. We show that $F_C=0$ never has a solution in $\omega_{C,C}$, $F_D=0$ always have a unique one in $\omega_{D,D}$, and that $\tau F_C + (1-\tau) F_D=0$ has a solution only when $\varepsilon \geq \underline{\varepsilon}(g)$. In \Cref{fig:vector fields} we plot the vector fields that define $\mathbf{Q}$ for two different values of $g$: notice two stationary points when $g$ is large --- the pseudo-steady-state on the boundary disappears for low values of $g$. The analytical characterization of the two points mentioned in \Cref{prop:two eq PD} follows the derivation in \Cref{app:proofs}.

\begin{figure}[hbtp]
    \centering
    \includegraphics[width=0.5\textwidth]{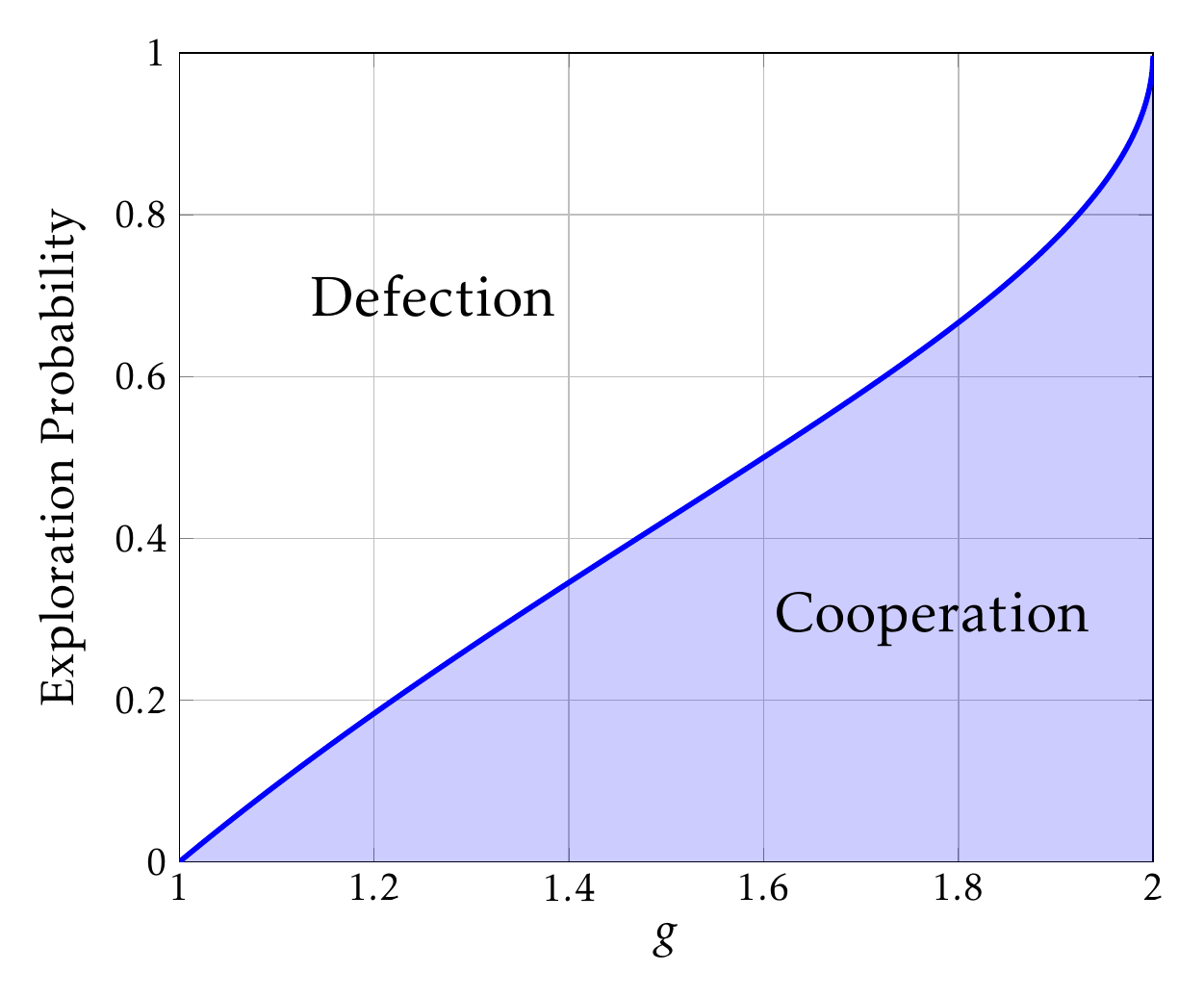}
    \caption{Maximum exploration rate $\underline{\varepsilon}$ to support the cooperative equilibrium, as a function of the growth rate. For all $\varepsilon> \underline{\varepsilon}(g)$ there does not exist a steady-state where both algorithms learn to cooperate.}
    \label{fig:minimum exploration}
\end{figure}

\subsection{Interpretation}\label{sec:coupling}

According to \Cref{prop:two eq PD}, the \emph{cooperative steady-state}, $q^{eq}_C$, exists only when the exploration rate is large enough. 
\Cref{fig:minimum exploration} shows the threshold level of exploration $\underline{\varepsilon}$ that guarantees a cooperative steady-state and how it varies with the value of cooperation. Intuitively, as $g$ increases, the relative benefit of defecting decreases (and vanishes for $g=2$), so more and more exploration is needed to realize that $D$ is a dominant action. For example, if $g=1.8$ the exploration rate required to guarantee convergence on $\{D,D\}$ is about $70\%$, which is considerably larger than the standard employed in practice.\footnote{The literature on Q-learning in games usually employs $\varepsilon=0.1$ or smaller, either fixed or decreasing over time. See, e.g., \citet{Gomes2009}.}

How does Q-learning sustain long-run cooperation, and why does the exploration rate matter? We answer these questions by analyzing the cooperative steady state, and its representations in \Cref{fig:ODE cycles} together with \Cref{fig:Alice,fig:Bob}.\footnote{
Notice that, when simulated with small but discrete time steps, the dynamical system closely mimics the path of play of the discrete Q-learning.}
Suppose $C$ is the preferred action of both players. Alice and Bob cooperate, but with probability $\frac{\varepsilon}{2}$ one defects and is ``surprised'' the unilateral benefit of defecting. Through repeated experiments, over time $Q^i_D$ rises above $Q^i_C$. Suppose this happens first to Alice; as soon as Alice begins defecting, Bob will defect immediately after --- cooperating makes Bob considerably worse off when Alice defects, and his estimate $Q^B_C$ quickly falls below $Q^B_D$. But now mutual defection decreases the value of $Q^i_D$ for Alice and Bob. Instead, experiments are so infrequent that $Q^i_C$ changes very slowly. After a brief phase of joint defection, $Q^i_C$ will dominate $Q^i_D$ again, restarting the cycle.

Effectively, when the exploration rate is low, the AI agents play a symmetric profile of actions too often. This is the type of statistical linkage between algorithm's estimates that we call \emph{spontaneous coupling} --- the estimates of independent Q-learners tend to evolve symmetrically. 
When algorithms start defecting, they experiment with cooperation infrequently, which makes the estimate $Q^i_C$ of cooperation's value persistent. Alice's estimate $Q^A_C$ remains close to the long-run value of mutual cooperation. Instead, her estimate $Q^A_D$ drops dramatically once both agents begin defecting. Alice never realizes the downside of cooperating when Bob defects because both revert to cooperation simultaneously.

\begin{figure}[hbtp]
    \centering
    \begin{subfigure}{0.49\textwidth}
    \includegraphics[width=80mm]{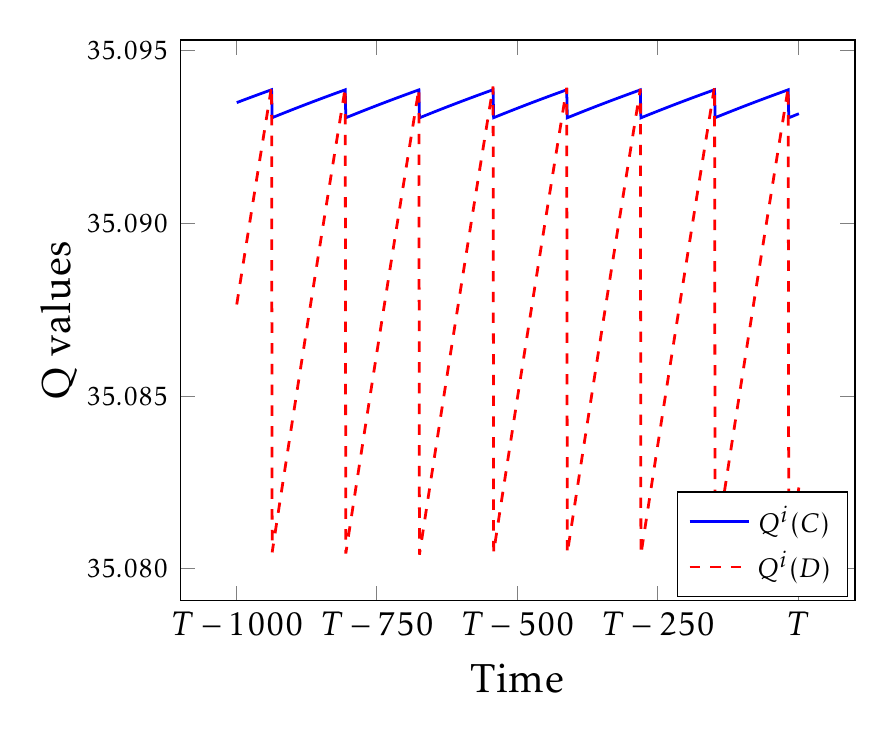}
    \caption{Cycles of play around the cooperative equilibrium in the discretized ODE system.}
    \label{fig:ODE cycles}
    \end{subfigure}
    \hfill
    \begin{subfigure}{0.49\textwidth}
    \centering
    \includegraphics[width=80mm]{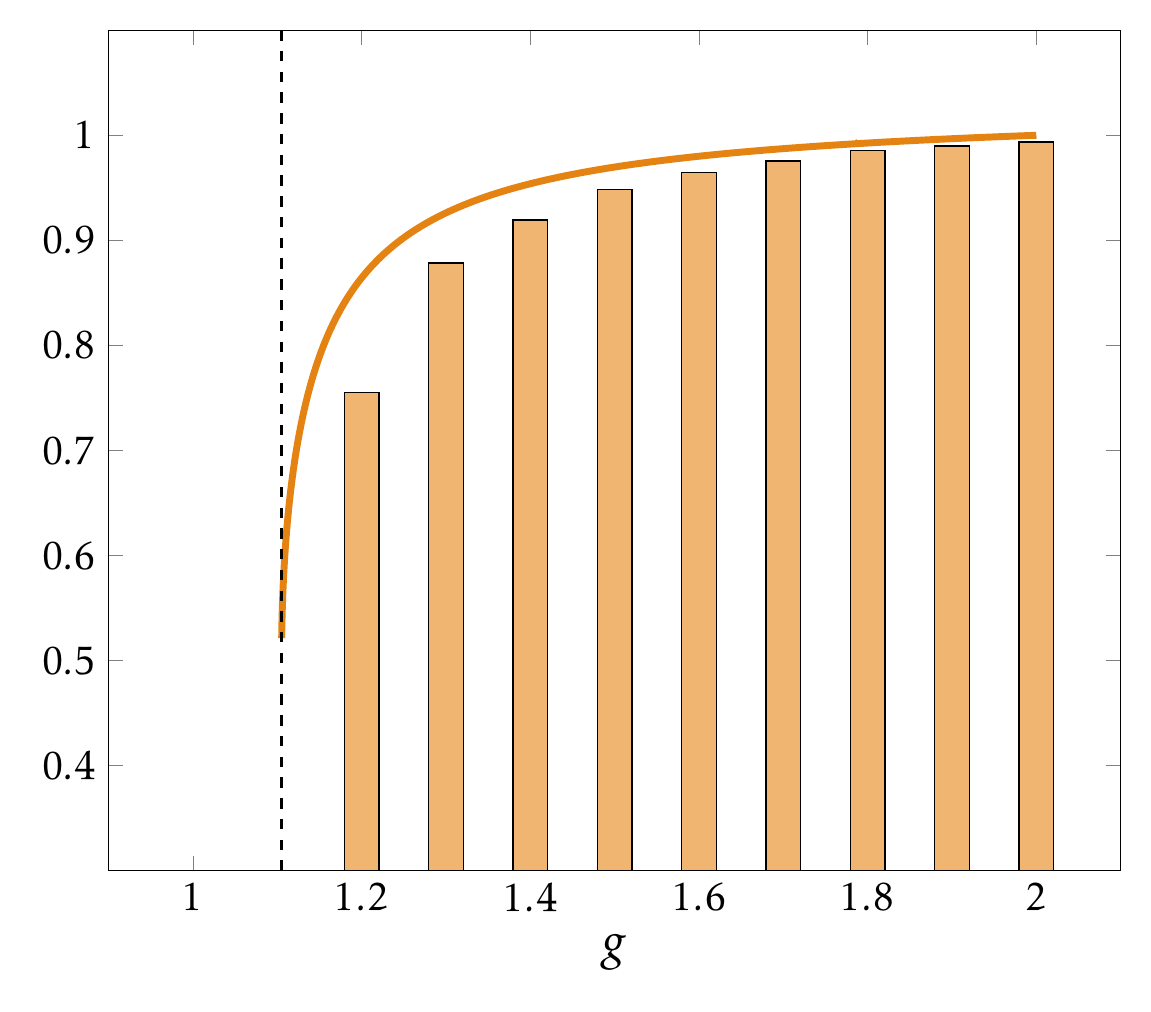}
    \caption{Proportion of time spent playing the cooperative outcome in $q_C^{eq}$, analytically and in the simulations.}
    \label{fig:Local Time}
    \end{subfigure}
    \caption{}
    \label{fig:ODE results}
\end{figure}
The pseudo-steady-state on the boundary is the continuous-time counterpart of these cycles in the discrete system. The discrete cycles tend to a single point in the continuous-time limit. 
In the pseudo-steady-state, agents are ``indifferent'' between cooperation and defection. Both spend some ``local time'' playing either action: we interpret the weights $\tau, 1-\tau$ assigned to the fields on the boundary as the fraction of time dedicated to cooperation and defection, respectively.\footnote{This intuition can formalized using the idea of hysteresis loop around the boundary; see \citet{Dibernardo2008}} The local time is such that the infinitesimal incentives around the stationary point are balanced.

We argue that spontaneous coupling is a form of collusion. In particular, the outcomes of a system undergoing such statistical linkage are indistinguishable from a market with imperfect monitoring in which participants sustain collusion via punishment-reward schemes. Differently from classic stick-and-carrot strategies, however, spontaneous coupling hinges on the algorithmic nature of market participants. A novel incentive that does not rely on monitoring, coupling is, as far as we know, the first collusion scheme to be formalized that is exclusive to Artificial Intelligence algorithms.

We imposed \Cref{manualasm:symmetry} at the beginning of the formal results of this section. While symmetry is necessary to gain analytical tractability, spontaneous coupling appears even without this assumption. For example, recall that the discrete simulated system always exhibits chaotic behavior. Nonetheless, in \Cref{fig:Local Time} the expression $\tau$ appears to fit the data reasonably well. We refer to \Cref{app:chaos} for an analysis of the chaotic system when initialization is not symmetric. There we argue that under chaotic behavior spontaneous coupling displays as a bounded chaotic attractor with similar time-average properties.

\section{General Reinforcers}\label{sec:results}

The mechanics of spontaneous coupling, identified in \Cref{sec:PD}, appear tied to the policy of the reinforcer. We now show that the driving force of inadvertent coordination lies in the algorithm's uneven learning rates across different actions, of which the exploration rate is just a determinant. Learning at different rates about different actions facilitates coordination for general reinforcers. We establish a sufficient condition on the parameters of the algorithm which avoids coupling: learning rates must be uniform across actions.

\subsection{Reinforcers in the Prisoner's Dilemma}
The analysis in the Prisoner's Dilemma of \Cref{sec:PD} brings out a mechanism which sustains collusive behavior between algorithms.
When Alice's preferred action is cooperation, she learns about the payoffs of cooperating at rate $1-\frac{\varepsilon}{2}$, while much more slowly, at rate $\frac{\varepsilon}{2}$, about defection. Essentially, exploration regulates how quickly Alice learns about the value of deviations. When algorithms are coupled, low rates of exploration impair the ability of the algorithm to correctly estimate the value of deviations.

This phenomenon can be formalized without referring to a specific policy. In fact, a given policy only affects the learning rates $\alpha_a$ of different actions. For example, in the case of $\varepsilon$-greedy Q-learning the differential equation within the maximal continuity domain $\omega_{D,D}$ reads as:
\begin{equation*}
    \begin{cases}
    \dfrac{d\mathbf{Q}^A_C}{dt}(t)=\quad \ \pmb{\alpha \dfrac{\varepsilon}{2}} \ \quad \left[\left(1-\dfrac{\varepsilon}{2}\right)g+ \dfrac{\varepsilon}{2}2g+(\gamma-1)\mathbf{Q}_C^A(t)\right]\\
    \dfrac{d\mathbf{Q}^A_D}{dt}(t)=\pmb{\alpha\left(1-\dfrac{\varepsilon}{2}\right)}\left[\left(1-\dfrac{\varepsilon}{2}\right)2 +\dfrac{\varepsilon}{2}(2+g)+\gamma \mathbf{Q}_C^A(t)- \mathbf{Q}_D^A(t)\right] 
    \end{cases}
\end{equation*}
In the language of separable reinforcers, the decision rule of player $i$ affects only $\alpha^i_a(\theta^i)$. In the fluid approximation the only role of the policy is to determine the relative learning rates, since in \Cref{def:regularity} expectation is taken only with respect to distribution over profiles generated by the opponents' policies. Since the absolute magnitude of the learning rates does not matter, we focus on the relative rates.

\begin{definition}
The relative learning rate of action $a^i$ is the ratio
\[RLR(a^i) = \frac{\alpha_{a^i}}{\sum_{a \in A_i} \alpha_a}.\]
\end{definition}

\Cref{thm:RLS} below shows how differences in RLR across actions generally give rise to spontaneous coupling: the coupling appears in a Prisoner's Dilemma for any pair of agents using any (separable) reinforcer. For simplicity, we focus on reinforcers with maximal continuity domains equal to $\omega_{C,C}$ and $\omega_{D,D}$, and constant learning rates $\alpha_C, \alpha_D$. 

\begin{theorem}\label{thm:RLS}
Let each agent learn through the same greedy reinforcer in any Prisoner's Dilemma, and let \Cref{manualasm:symmetry} be satisfied.
There exist an open set $A \subset \mathbb{R}^4_+$ such that for all parameters $\{\alpha_{j}(\omega_{k,k})\}_{j,k = C,D} \in A$ there exists a pseudo-steady-state on the boundary $\overline{\omega}_{C,C} \cap \overline{\omega}_{D,D}$.
\end{theorem}
The proof is constructive: we show that there exist a symmetric tuple of $\alpha$'s such that the sliding vector field on the indifference boundary is null. In particular, we show this when the vector fields on either side of the indifference boundary are opposite, i.e. the local time is $\frac{1}{2}$. We apply homotopical arguments to show that there exists such $\alpha$s, and then we perturb the problem and we obtain the result by continuity. \Cref{thm:RLS} highlights an outstanding fact: even in dominant-strategy-solvable games, reinforcers will not play the dominant strategy for various ranges of parameters.

\subsection{Reinforcers with Uniform Learning Rates}

We provide a simple condition that guarantees reinforcers converge on dominant strategies: reinforcers' relative learning rates must be uniform across all actions. Intuitively, even if algorithms are coupled, uniform learning rates allow agents to evaluate deviations correctly, thus leading them to their dominant strategy.

We first need the following technical assumption:
\begin{manualasm}{A3}[Thickness]\label[manualasm]{manualasm:thickness}
Let $G^t_{-i}(a)$ be the  distribution over actions of all players but $i$ at time $t$. Then, there exists a $\chi>0$ and a $T$ such that for all $t>T$, $G^t_{-i}(a) \geq \chi$ for all $i$ and for all $a \in A_{-i}$.
\end{manualasm}
\noindent Thickness ensures that sufficient exploration is carried out by all players in the limit. For example, any game where all agents adopt a $\varepsilon$-greedy policy satisfies this assumption. More generally, thickness states that each action profile is played with positive (albeit small) probability in the limit. 
\begin{theorem}\label{thm:domstr}
Suppose \Cref{manualasm:thickness} is satisfied for all players.\footnote{We require this assumption because we formulate the Theorem for games solvable by weak dominance. We can instead drop \Cref{manualasm:thickness} when the best-response to any of the opponent's strategies is unique.} In any game with a dominant strategy equilibrium, a reinforcer with $RLR(a^i) = RLR(\tilde{a}^i)$ for all $a^i,\tilde{a}^i \in A_i$ learns the dominant strategy. If the forward limit set of $\theta$ is a singleton, then $\theta$ converges on the dominant strategy.
\end{theorem}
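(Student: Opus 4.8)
The plan is to exploit the single structural consequence of uniform relative learning rates. If $RLR(a^i)=RLR(\tilde a^i)$ for all actions, then at every state the learning rate $\alpha^i_a(\theta^i)$ is one common scalar $\alpha^i(\theta^i)$, independent of $a$. Write $R_a(t):=\mathbb{E}_{\pi_{-i}}[r^i(a,\pi_{-i})]$ for the (state-dependent) expected reward of action $a$, fix the dominant action $a^*$, and set $\delta_a:=\theta^i_{a^*}-\theta^i_a$. The separable form of \Cref{def:regularity} then gives
\[
\frac{d\delta_a}{dt}=\alpha^i(\theta^i)\Big[U\big(\theta^i_{a^*},R_{a^*}\big)-U\big(\theta^i_a,R_a\big)\Big],
\]
because the common component $V(\theta^i)$ cancels. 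This cancellation is exactly what uniform rates buy us, and it is precisely the mechanism that defuses the coupling channel of \Cref{thm:RLS}: with a surviving $V$ term the sign of $\dot\delta_a$ could be dictated by the shared component, whereas here it is governed solely by the action-specific $U$.

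Next I would sign the bracket. By \Cref{manualasm:thickness}, for $t>T$ every opponent profile is played with weight at least $\chi$; combined with (weak) dominance of $a^*$ this yields a uniform strict reward gap $R_{a^*}(t)-R_a(t)\ge\Delta>0$, where $\Delta$ depends on $\chi$ and the payoff matrix (each nonnegative dominance term has weight $\ge\chi$, and at least one is strict). Whenever $\delta_a\le0$, i.e. $\theta^i_{a^*}\le\theta^i_a$, I use that $U$ is decreasing in its first argument and increasing in reward: $U(\theta^i_{a^*},R_{a^*})\ge U(\theta^i_a,R_{a^*})>U(\theta^i_a,R_a)$, so the bracket is strictly positive. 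Hence for $t>T$ the half-space $\{\delta_a\ge0\}$ acts as a barrier, with the flow pointing strictly inward on its boundary $\{\delta_a=0\}$.

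To convert a positive drift into finite-time absorption I would pass to uniform constants. The requirement $\partial V/\partial\theta^i_a<-\partial U/\partial\theta^i_a$ makes each coordinate self-correcting (the bracket $U+V$ is strictly decreasing in $\theta^i_a$), confining the orbit to a compact set $K$; on $K$ the strictly increasing, Lipschitz $U$ gives $U(\theta^i_a,R_{a^*})-U(\theta^i_a,R_a)\ge c_U\Delta$, while $\alpha^i$ is bounded below by some $\underline\alpha>0$. Thus $\dot\delta_a\ge\underline\alpha\,c_U\Delta>0$ whenever $\delta_a\le0$, so each $\delta_a$ enters $\{\delta_a\ge0\}$ in finite time and, by the barrier property, remains there. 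Taking $T'$ past the latest crossing gives $a^*\in\argmax_a\theta^i_a(t)$ for all $t\ge T'$, i.e. player $i$ learns $a^*$; since the argument invokes only dominance and the \emph{opponents'} thickness, it runs independently for each player, so the whole profile learns the dominant-strategy equilibrium. If moreover the forward limit set is a singleton $\{\overline\theta\}$, then $\overline\theta$ is a steady state with $a^*\in\argmax_a\overline\theta_a$, which is convergence on $a^*$ in the sense of \Cref{def:steadystate}.

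The step I expect to require the most care is reconciling the barrier argument with the discontinuous, state-dependent dynamics. The law of motion is the differential inclusion of \Cref{prop:Inclusion}, so at a switching surface with $\delta_a=0$ the field is a convex combination of the fields from adjacent maximal continuity domains, and $R_a(t)$ itself changes across these domains as the opponents' argmax configuration changes. The key observation that makes the inward-pointing property robust is that each adjacent domain corresponds to a full-support opponent distribution (by thickness), so $R_{a^*}>R_a$ and hence $\dot\delta_a>0$ holds \emph{in every} such domain at that point; a convex combination of strictly positive quantities is strictly positive, so the barrier holds for the Filippov field as well. Verifying this uniformly along the trajectory — rather than treating $R_{a^*},R_a$ as constants — is the crux of the proof.
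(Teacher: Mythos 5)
Your proof is correct and follows essentially the same route as the paper's: uniform relative learning rates make $V$ (and the common factor $\alpha^i$) cancel in the pairwise difference $\theta^i_{a^*}-\theta^i_a$, thickness plus weak dominance yields a uniform strict gap in expected rewards after some time $T$, and the monotonicity of $U$ turns that gap into a strictly positive drift whenever $\theta^i_{a^*}\le\theta^i_a$, so the dominant action's estimate overtakes every other one in finite time and never falls behind again. The only differences are in execution: where you obtain finite-time absorption directly from a uniform lower bound on the drift, the paper argues by contradiction (a persistent deficit would be monotone and bounded, forcing its derivative to vanish, contradicting the uniform gap), and your explicit check that the barrier survives the Filippov convex combinations at switching surfaces is a point the paper's proof passes over silently.
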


This result is quite general, as we make almost no assumption about the opponent's play: as long as all actions are played with some positive probability even in the limit, the reinforcer will learn to play its dominant strategy. If the game is solvable by strict domination, we do not even require \Cref{manualasm:thickness}. Uniform learning rates ensure any reinforcer will learn the dominant strategy for any number of opponents, as even if they adopt different learning algorithms, or the same learning algorithm with different parameters. The assumption that relative learning rates be uniform across actions may appear stringent: it might for example require restricting the exploration of the algorithm to try each action uniformly at random. Instead, we propose a different strategy to achieve identical RLR across actions.

Consider again Q-learning: the algorithm updates the statistic of action $a^i$ when action $a^i$ is taken and its reward is observed, but it leaves other statistics unattended when the corresponding action is not selected. Suppose however that the agents were able to compute counterfactuals. That is, suppose that, after choosing an action $a^i$ in period $t$, the algorithmic agent was able to back out $r_t(\tilde{a}^i,a^{-i})$ for all $\tilde{a}^i \neq a^i$. Then, the statistics of all actions could be updated simultaneously, using the reward that each action would have procured had it been played in that period. Simultaneous updates are sometimes referred to as \emph{synchronous} learning:\footnote{The term synchronous appears in \citet*{Asker2022}, but the idea of agents learning from counterfactuals is present already in \citet{Tumer2009}.} in this case learning happens at the same rate for all actions. 
The ability to compute counterfactuals affects the learning rates: the second factor in $\alpha^i_a \cdot (\pi_\varepsilon(\theta^i(t)))_a$ disappears when agents update synchronously.
When asymmetric learning rates arise from missing or asymmetric experiments, counterfactual information (i.e., a correct model of the environment) is sufficient to eliminate the asymmetry.
The following corollary formalizes this intuition.
\begin{corollary}\label{cor:counterfactuals}
Under the same assumptions of \Cref{thm:domstr}, a reinforcer who can compute counterfactuals always learns the dominant strategy. If its forward limit set is a singleton, it converges to the dominant strategy.
\end{corollary}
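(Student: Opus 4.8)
The plan is to reduce the corollary to \Cref{thm:domstr} by showing that the ability to compute counterfactuals forces the relative learning rates to be uniform across actions. The single substantive step is to make precise how counterfactual information reshapes the update, after which I must only verify that the resulting object is still a separable reinforcer to which \Cref{thm:domstr} applies.

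First I would formalize the counterfactual (synchronous) reinforcer inside \Cref{def:reinforcer}. In the asynchronous reinforcer, in period $k$ the agent plays $a^i(k) \sim \pi^i(\theta^i(k))$, observes only the realized reward $r^i(a^i(k),a^{-i}(k))$, and the update vector $D^i$ is nonzero only in coordinate $a^i(k)$. An agent who computes counterfactuals recovers $r^i(\tilde a^i, a^{-i}(k))$ for every $\tilde a^i \in A_i$, so its $D^i$ now has a nonzero entry in every coordinate, each fed the reward that action would have earned against the observed opponent profile. Thus every entry $\theta^i_a$ is updated with probability one each period rather than with probability $(\pi^i(\theta^i(k)))_a$; the synchronous reinforcer is simply the same reinforcer with a fully supported $D^i$.

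Next I would pass to the fluid limit via \Cref{thm: fluid approx thm}. Because each action is now updated deterministically every period, the own-action selection factor no longer weights the coordinate-$a$ update when the expectation is taken: in the notation of the continued example, the learning rate $\alpha^i_a(\theta^i)=\alpha^i\cdot(\pi^i(\theta^i))_a$ collapses to the common base rate $\alpha^i_a=\alpha^i$, and the expectation in \Cref{eq:reinforcer} is taken only over the opponents' randomization $\pi^{-i}$ for each fixed own action $a$. Consequently the fluid approximation of the counterfactual reinforcer has constant, action-independent learning rates, so that $RLR(a^i)=\alpha^i/(d_i\alpha^i)=1/d_i$ for every $a^i\in A_i$; uniformity of the relative learning rates holds exactly.

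The remaining obligation, and the only place the argument could go wrong, is to confirm that the synchronized reinforcer still satisfies the hypotheses under which \Cref{thm:domstr} was proved: separability in the sense of \Cref{def:regularity} together with the regularity conditions on $U$, $V$, and $\alpha$. This is where I would spend care. Removing the factor $(\pi^i(\theta^i))_a$ alters only the learning-rate term; the action-specific component $U$ and the common component $V$ are inherited unchanged from the original reinforcer, so the monotonicity inequalities ($U$ increasing in reward, decreasing in $\theta^i_a$, and $\partial V/\partial\theta^i_a < -\partial U/\partial\theta^i_a$) and the almost-everywhere Lipschitz requirements continue to hold. With a dominant-strategy game, \Cref{manualasm:thickness}, and uniform $RLR$ all in force, \Cref{thm:domstr} yields that the counterfactual reinforcer learns the dominant strategy, and the convergence statement follows whenever the forward limit set is a singleton. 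I expect the main obstacle to be essentially bookkeeping: checking that the maximal-continuity-domain structure and the thickness of opponents' play, which governs the expectation defining each $U$, are genuinely unaffected by moving from asynchronous to synchronous updating.
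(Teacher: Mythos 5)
Your proposal is correct and follows essentially the same route as the paper: the paper treats this corollary as immediate from \Cref{thm:domstr}, observing (as you do) that synchronous counterfactual updating removes the policy-dependent factor $(\pi_\varepsilon(\theta^i(t)))_a$ from the learning rate $\alpha^i_a$, which forces uniform relative learning rates while leaving $U$, $V$, and the separability conditions untouched. Your extra care in checking that the separable-reinforcer hypotheses survive the move to synchronous updating is sound bookkeeping that the paper leaves implicit.
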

It is natural that counterfactuals help to learn to play equilibria. In fact, the theory of Nash equilibrium is based on the assumption that agents can compute the payoff that would have obtained if they had played a different action, treating the opponents' strategies as fixed. This in turn allows them to evaluate incentives to deviate. \Cref{cor:counterfactuals} establishes that reinforcer algorithms successfully rule out dominated strategies, provided they have access to a method to compute counterfactuals. Reinforcers with counterfactuals will learn to play the (unique) equilibrium.

We can extend the intuition that uniform learning rates ensure that players correctly estimate the value of deviations to more general setting. This is particularly relevant for pricing games that do not have equilibria in dominant strategies. Let us consider the procedure of iterated elimination of strictly of dominated strategies (IESDS). However, we rescrict deletion to strategies strictly dominated by another \emph{pure} strategy, because reinforcers do not deal well with mixed strategies.\footnote{\Cref{def:reinforcer} makes clear that it is impossible for reinforcers to learn the value of randomizing across actions.}
\begin{definition}
We say that an action $a^i \in A_i$ is \emph{pure-rationalizable} if there is an order of IESDS such that $a^i$ survives the IESDS procedure.
\end{definition}
In general, for a certain order of deletion of dominated strategies action $a^i$ might get eliminated. However, as long as there is an order such that $a^i$ survives the IESDS process, we consider $a^i$ pure-rationalizable.
Our next theorem shows that reinforcers with access to counterfactuals only play pure-rationalizable strategies in the limit.

\begin{theorem}\label{thm:IESDS}

Let all players in game $G$ learn through a reinforcer using a $\varepsilon$-greedy policy with $RLR(a^i) = RLR(\tilde{a}^i)$ for all $a^i,\tilde{a}^i \in A_i$ for all $i \in N$. Assume $\varepsilon>0$ is small enough so that the reward's order is preserved, i.e. if $\underline{a}^{-i}$ is the preferred profile of actions of agent $i$'s opponent, $\mathbb{E}_{\pi^{-i}}[r(a^i,a^{-i})]>\mathbb{E}_{\pi^{-i}}[r(\tilde{a}^i,a^{-i})]$ when $r(a^i,\underline{a}^{-i}) > r(\tilde{a}^i,\underline{a}^{-i})$.
Then, all actions learned by the players are pure-rationalizable in the game $G$ under the same IESDS order.

\end{theorem}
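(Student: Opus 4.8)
The plan is to prove Theorem~\ref{thm:IESDS} by induction on the rounds of IESDS, showing that at each round the reinforcers stop playing (in the limit) any strategy that gets eliminated, while remaining able to play strategies that survive. The crucial enabling fact from \Cref{thm:domstr} and \Cref{cor:counterfactuals} is that uniform relative learning rates let a reinforcer correctly rank its actions according to the \emph{expected} reward against the opponents' actual (limiting) play, rather than the reward conditional only on its own-action experiments. So the inductive hypothesis I would carry is: after round $k$ of deletion, there exists a time $T_k$ such that for all $t > T_k$ and all players $i$, every action that has been eliminated in rounds $1,\dots,k$ has estimate strictly below the maximum, so it is played only with the vanishing exploration probability $\varepsilon/d_i$ and is never greedy.

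First I would establish the base case and the inductive step simultaneously via the separable-reinforcer ODE \Cref{eq:reinforcer}. Fix a round and a player $i$, and suppose action $a^i$ is strictly dominated by a pure action $\tilde a^i$ among the actions surviving so far. By the inductive hypothesis, in the limit the opponents put probability at least $\chi>0$ on each of their surviving profiles (by \Cref{manualasm:thickness}) and only $o(1)$ weight on eliminated profiles; combined with the $\varepsilon$-smallness hypothesis that preserves the reward order, this forces $\mathbb{E}_{\pi^{-i}}[r(\tilde a^i,a^{-i})] > \mathbb{E}_{\pi^{-i}}[r(a^i,a^{-i})]$ uniformly for $t$ large. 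Because $U$ is increasing in the expected reward and decreasing in $\theta^i_a$, and because the common component $V$ and the learning rates are identical across actions (uniform RLR), the two ODEs for $\theta^i_{a}$ and $\theta^i_{\tilde a}$ have the same $V$ and the same effective rate, so their difference $\theta^i_{\tilde a} - \theta^i_{a}$ is driven to a strictly positive stable rest point determined by the reward gap. This is essentially the argument of \Cref{thm:domstr} applied on the reduced game, and it yields $\theta^i_{a}(t) < \max_{b} \theta^i_b(t)$ for all large $t$, i.e.\ $a^i$ is not learned.

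The main obstacle I anticipate is handling the order-dependence of IESDS together with the fact that different players' deletion times $T_k$ interact: player $i$'s estimates can only settle once the opponents' play has stabilized onto surviving actions, but the opponents stabilize only once \emph{their} opponents have, so the induction must be run over the global round counter, taking $T_k = \max_i T_k^i$ and verifying that the reward-gap inequality survives the $o(1)$ contamination from not-yet-fully-suppressed eliminated actions. The hypothesis that $\varepsilon$ is small enough to preserve the reward order is exactly what gives the slack needed to absorb this contamination: since eliminated actions carry total weight bounded by $\varepsilon$, choosing $\varepsilon$ below the minimal reward gap across the (finite) elimination tree keeps every required strict inequality intact. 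I would also note the subtlety flagged in the statement's quantifier ``under the same IESDS order'': because a given action may be deletable in several orders, I would fix one order that eliminates all actions the reinforcers fail to learn, and argue that the strategies actually learned are precisely those surviving that order, which is what pure-rationalizability requires.
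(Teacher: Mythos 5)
Your proposal is correct and follows essentially the same route as the paper's own proof: an induction over IESDS rounds in which the differential-inequality argument of \Cref{thm:domstr} is applied to each successive reduced game, the $\varepsilon$-smallness hypothesis absorbs the residual weight on eliminated (but still occasionally explored) actions, and the final time $\mathcal{T}$ is taken as the maximum over the per-round elimination times. Your explicit treatment of the cross-player interaction of deletion times and of the $o(1)$ contamination is a slightly more careful rendering of what the paper delegates to its footnote on \Cref{manualasm:thickness}, but it is the same argument, not a different one.
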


\Cref{thm:IESDS} implies that $\varepsilon$-greedy reinforcers with uniform learning rates always learn to play Nash equilibrium strategies in a supermodular game with a unique equilibrium. More generally, in any pure-dominance-solvable game, reinforcers will learn the pure-strategy Nash equilibrium.

We can interpret relative learning rates as the relative ability to work out counterfactuals for a given action. Because the utility of a given action depends on the opponent's path of play, uneven learning rates generate biased estimates. Small relative learning rates fail to account for asymmetric play, impairing the ability of the algorithm to best-respond. Uniform learning rates instead guarantee correct counterfactual estimates. With unbiased counterfactuals, abandoning dominated strategies is immediate. In the following sections we show how our results can be used to understand and design markets where AI agents operate.

\section{Application: Price Fixing}\label{sec:bertrand}
In this section we consider a prototypical example of price competition between learning algorithms. In particular, we look at the setting studied by \citet*{Asker2022} and show how the results they obtained in simulation experiments can be given theoretical foundations. The authors simulate algorithmic competition in a Bertrand oligopoly, and find that the emergence of collusion depends critically on what the authors call ``synchronicity'' of the algorithm; their conclusion is in fact a consequence of \Cref{thm:domstr,thm:IESDS}, which validates our approach based on fluid approximations.

Consider first a simplified version of price competition. There are two firms, Alice Inc. and Bob Ltd., that face a common demand for their product. Assume that the market demand is $D(p_A,p_B) = 3-\min\{p_A,p_B\}$, and if the two firms charge the same price they split demand equally. Suppose that each firm has $0$ marginal cost, and for simplicity let the firms choose only between two prices: $p \in \{ 0.5,2\}$. Profits equal price times individual demand. This Bertrand game has only one static Nash equilibrium, the profile $\{0.5,0.5\}$, since posting the lower price is dominant.
\citet*{Asker2022} assumes that the two firms learn using Q-learning and they consider two variations, both of which are \emph{greedy}, i.e., the action taken is always the one with the highest estimated value.
\begin{itemize}
    \item [(i)] Asynchronous Greedy Q-learning: the algorithm updates only the Q-value of the action taken in each period; 
    \item [(ii)] Synchronous Greedy Q-learning: the algorithm updates all Q-values in each period, with the return that it could have obtained had he played the other action instead, but holding the opponent's action fixed. 
\end{itemize}

It is clear that the syncronous greedy Q-learning is a reinforcer with uniform relative learning rates, since the Q-values of both prices are updated at every time step. Applying \Cref{thm:domstr}, it is then immediate to deduce that in this case the two learning firms should converge to posting the lower price, as concluded by \citet*{Asker2022}. Asynchronous learning can be regarded as the opposite situation: the relative learning rate of a price is either 0 or 1. In this case we expect spontaneous coupling to occur, and indeed \citet*{Asker2022} finds supra-competitive prices. 

For this simple model with two prices we can carry out a detailed analysis similar to \Cref{sec:PD}, which shows that the same mechanism at play there is also driving the outcomes in this setting. \Cref{fig:bertrand} plots the vector fields obtained applying \Cref{thm: fluid approx thm} to this model assuming symmetric initialization. With asynchronous learning there are two stationary regions. For values of $\mathbf{Q}_2 \leq \mathbf{Q}_{0.5} = \frac{5}{8} $ the algorithms converge on the competitive outcome; however, for $\mathbf{Q}_{0.5}\leq \mathbf{Q}_2 = 1$ the algorithms collude. These results are robust to $\varepsilon-$greedy exploration: the \emph{spontaneous coupling} introduced in \Cref{sec:PD} again sustains collusion. Because all observations of the returns from the supra-competitive price are obtained when colluding, $\mathbf{Q}_2$ remains consistent with mutual collusion also during a competitive phase, which brings the system back to collusion. Instead, if both firms adopt synchronous learning, there is only one stationary point, at $\mathbf{Q}_{0.5} = \frac{5}{8}, \mathbf{Q}_2 = 0$, and it is a global attractor. When the two firms are colluding, all arrows point upward: the algorithms correctly estimate the value of a one-shot deviation, without internalizing the effect that defecting from a collusive outcome will have on returns in the future. Once the firms begin competing it is then impossible to revert back to collusion: the counterfactual return of a deviation is zero, and since the relative learning rates are equal $\mathbf{Q}_2$ is bound to remain lower than $\mathbf{Q}_{0.5}$. In other words, the coupling disappears and therefore the value of joint collusion is short-lived after competition begins. 

\begin{figure}[h]
    \centering
    \begin{subfigure}{0.49\textwidth}
    \includegraphics[width=80mm]{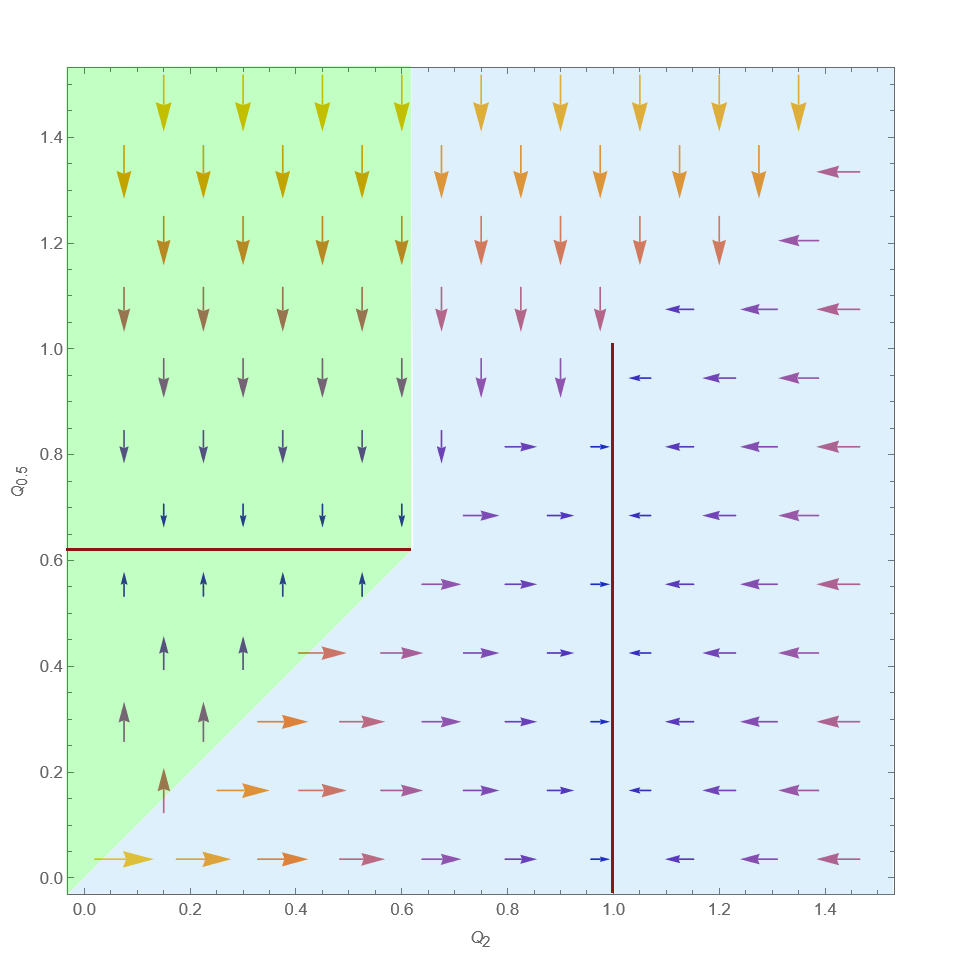}
    \caption{Asynchronous updating.}
    \label{fig:asynch}
    \end{subfigure}
    \hfill
    \begin{subfigure}{0.49\textwidth}
    \centering
    \includegraphics[width=80mm]{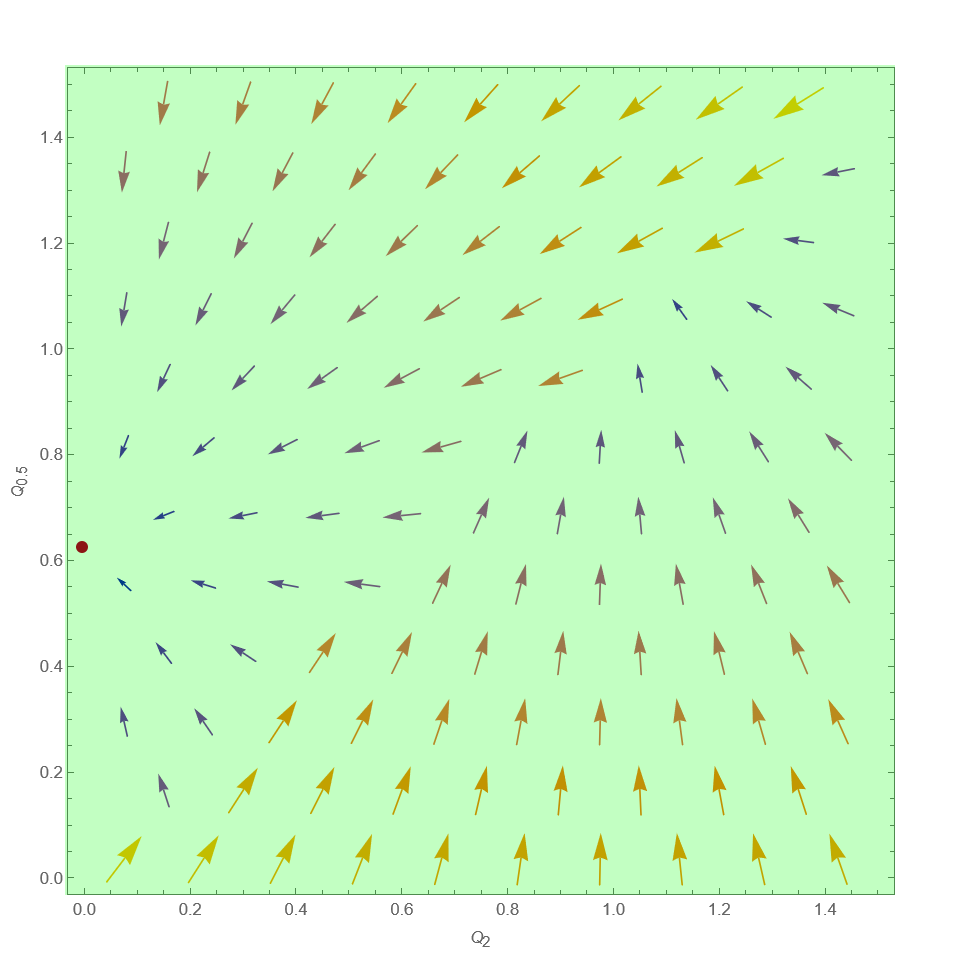}
    \caption{Synchronous updating.}
    \label{fig:synch}
    \end{subfigure}
    \caption{On the vertical axis is Q-value of the low price, and on the horizontal axis the value of the high price. The green-shaded area denotes the domain of attraction of the competitive outcome, while the blue-shaded area is the domain of attraction of the collusive outcome. The red dot and red lines are the equilibria of the systems. Obtained with $\gamma = 0$ (reflects the specification of \citet*{Asker2022}).}
    \label{fig:bertrand}
\end{figure}

\subsection{General Bertrand Competition} The simple model above reduces the Bertrand game to a dominant-strategy game. 
It is a convenient simplification for the purposes of inspecting and plotting the dynamical systems, but the theory developed in \Cref{sec:results} allows us to deal with more general models. The following is the specification from \citet*{Asker2022}.

Alice Inc. and Bob Ltd. have now constant marginal costs $c_A = c_B = 2$. They sell homogeneous goods and compete by setting prices. The set of feasible prices is composed of $100$ equally spaced numbers between $0.01$ and $10$, inclusive. The set of prices is denoted by $P = \{p^1,\dots, p^{100}\}$. Consumers buy from the firm with the lowest price, and demand is parametrized as 
\[d_i(p_i,p_{-i}) = \begin{cases}
        1 \text{ if } p_i < p_{-i} \text{ and } p_i \leq 10 \\
        \frac{1}{2} \text{ if } p_i = p_{-i} \text{ and } p_i \leq 10 \\
        0 \text{ otherwise }\\
\end{cases}\]
As the authors note, there are two Nash equilibria of this game, one ($E_1$) with $p_A = p_B = 2.0282$ and one ($E_2$) with $p_A = p_B = 2.1291$. The multiplicity is a consequence of the discretization of the space in equally spaced prices. \Cref{thm:IESDS} allows us to deal with settings where there is no dominant strategy, and in particular we can deduce that also with 100 prices, if the firms employ synchronous learning, they can only converge to competitive pricing. Moreover, the following proposition also suggests that the parameters of the algorithms are irrelevant for the convergence result.

\begin{proposition}\label{prop:bertrand}
In a Bertrand oligopoly, if Alice Inc. and Bob Ltd. adopt any $\varepsilon$-greedy separable reinforcers with a small $\varepsilon>0$ such that the relative speed of learning is the same across all prices, they will learn to play either $p_1 = 2.0282$ or $p_2 = 2.1291$.
\end{proposition}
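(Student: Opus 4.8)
The plan is to combine \Cref{thm:IESDS} with a best-reply analysis of the discretized game. First I would verify the hypotheses of \Cref{thm:IESDS}: both firms use $\varepsilon$-greedy separable reinforcers with identical relative learning rates across all $100$ prices, so it remains only to check the reward-order condition. Against the opponent's greedy price $\underline{p}^{-i}$, a firm's stage reward strictly ranks its own prices through the usual Bertrand trichotomy (undercut and capture the whole market, tie and split, or price above and sell nothing); since these rankings are strict wherever the comparison is relevant, a small enough $\varepsilon$ preserves the ordering of the expected rewards $\mathbb{E}_{\pi^{-i}}[r(p^i,p^{-i})]$, exactly as the theorem requires. \Cref{thm:IESDS} then guarantees that every price the firms learn is pure-rationalizable under a common IESDS order.

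Next I would run IESDS explicitly. The key observation is that elimination proceeds from the bottom: the lowest surviving price can never be undercut, so the firm posting it always sells a positive quantity, and whenever that price lies below marginal cost $c=2$ it sells at a loss and is strictly dominated by any price that sometimes declines the sale (e.g.\ the top price $p^{100}=10$). Removing it exposes the next price up, to which the same argument applies. Iterating clears exactly the grid points below cost, $p^{1},\dots,p^{20}$, and leaves the above-cost prices $\{p^{21},\dots,p^{100}\}$, whose lowest element $2.0282$ now sells at a strict profit and is no longer dominated. Hence the learned prices lie weakly above cost.

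To descend from ``above cost'' to the two claimed values I would invoke the best-reply content of uniform learning rates. Because the relative learning rates are equal, each firm updates the estimate of every price using the correct counterfactual reward, so in the fluid limit each $\theta^i_{p}$ tracks $\mathbb{E}_{\pi^{-i}}[r(p,\pi^{-i})]$ and the learned (argmax) price is a best reply to the opponent's limiting, nearly-degenerate $\varepsilon$-greedy play. A profile in which both firms have learned is therefore a mutual best reply, i.e.\ a pure Nash equilibrium of the $\varepsilon$-perturbed game. A direct computation shows that the only pure equilibria of the discretized game are the symmetric profiles at $2.0282$ and $2.1291$: at each the gain from tying exceeds the gain from undercutting to the next grid point while raising the price forfeits all demand, every higher symmetric profile admits a profitable undercut, and no asymmetric profile is stable. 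Since these incentive gaps are strict, they survive the $O(\varepsilon)$ perturbation, pinning the learned outcome to $2.0282$ or $2.1291$. Finally I would note that nowhere did the magnitudes of $\alpha$ or $\gamma$ enter---only uniform relative rates and small $\varepsilon$---which yields the parameter-irrelevance remark.

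The hard part is the gap between rationalizability and the claim: IESDS stalls after clearing the below-cost prices, because every supra-competitive price survives (a high price is a best reply to the belief that the opponent prices just as high, and the Bertrand ``both sell nothing'' ties block any strict pure-strategy domination among above-cost prices). Consequently \Cref{thm:IESDS} alone does not isolate the two competitive equilibria; the reduction genuinely relies on the mutual-best-reply requirement furnished by uniform learning rates, together with the explicit verification that the only pure (and $\varepsilon$-perturbed pure) Nash equilibria of the $100$-price game are the two stated symmetric profiles.
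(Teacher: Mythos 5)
Your proposal is correct, but it does not follow the paper's route --- and your diagnosis of why the direct route fails is itself accurate. The paper's own proof is two sentences: it invokes \Cref{thm:IESDS} and then asserts that discretized homogeneous Bertrand games have only two pure-rationalizable strategies, the two lowest prices above marginal cost. Under the paper's own definitions this assertion is wrong. With deletion restricted to strategies strictly dominated by another \emph{pure} strategy, elimination proceeds bottom-up through the below-cost grid points exactly as you describe, and then stalls: for any two above-cost prices $p<p'$, every opponent price $q<p$ gives both firms profit $0$, so no strict pure-strategy dominance relation exists among the above-cost prices, and the pure-rationalizable set is all of $\{2.0282,\dots,10\}$, not two points. (The two-point claim would be correct for iterated \emph{weak} dominance by pure strategies, where top-down elimination proceeds exactly until $p-2\le 2\delta$ with $\delta$ the grid step, stalling at $2.1291$; but \Cref{thm:IESDS} is stated and proved for strict dominance, and its dynamic argument does not obviously extend to weak dominance in reduced games, because $\varepsilon$-exploration of already-eliminated high prices keeps those prices' expected rewards, and hence their estimates, elevated.)

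Your repair is the right one and supplies the missing step: with uniform relative learning rates and separability, if both firms' greedy actions settle, each settled price must maximize the expected reward against the opponent's limiting $\varepsilon$-greedy policy (this is the differential-inequality argument from the proof of \Cref{thm:domstr}, applied with a fixed opponent policy), so a settled profile is a pure Nash equilibrium of the $\varepsilon$-perturbed game; a direct computation then shows the only such profiles for small $\varepsilon$ are the symmetric ones at $2.0282$ and $2.1291$, since all incentive gaps in the unperturbed game are strict and survive an $O(\varepsilon)$ perturbation. This best-reply lemma is not literally contained in \Cref{thm:IESDS} or \Cref{thm:domstr}, so your proof genuinely adds content rather than re-deriving the paper's. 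Two minor caveats: your argument, like the paper's, is conditional --- it shows that \emph{whatever} is learned must be one of the two competitive prices, not that the dynamics settle at all; and the best-reply step should note that exact ties in the perturbed expected rewards are non-generic in $\varepsilon$, so weak and strict best replies coincide. Neither caveat undermines the proposal; the paper's proof, by contrast, rests on a rationalizability count that its cited theorem cannot deliver.
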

\begin{proof}
This proposition follows almost immediately from \Cref{thm:IESDS}. The Theorem guarantees that two $\varepsilon$-greedy reinforcers will learn a pure-rationalizable strategy. Discretized homogeneous Bertrand games have only two pure-rationalizable strategies, the two lowest prices above marginal cost, which are also the game's Nash Strategies.
\end{proof}

\section{Application: Market Division}\label{sec:keyword}
We made the argument in \Cref{sec:results} that spontaneous coupling is not a collusive behavior per se, but is instead a mechanism that may underpin a broad set of market manipulations by algorithmic agents. In support of this claim, in this section we show that spontaneous coupling can stifle competition by sustaining an anticompetitive conduct known as \emph{market splitting}. With this conduct, market participants coordinate to concentrate each on a subset of the market and decline to participate in others, so that effectively each competitor is a monopolist in their reference market.  We study a model of search advertising, where competing advertisers submit bids for keywords of various values. We find that advertisers learn not to bid on their competitor's favorite keyword, thus implicitly splitting the market. The outcome is supported by the spontaneous coupling trap, which emerges endogenously from the learning process.

Consider a market where two advertisers, Alice.net and Bob.com, compete for ad slots on three distinct keywords: $a$, $m$, and $b$; there is one slot available for each keyword. Each advertiser $I$ derives a certain value from a click on their ad, denoted by $v^I_j \sim U[1,2]$, where $j \in \{a,m,b\}$ denotes the keyword. All values $v^I_j$ are re-drawn independently in every auction. 
We assume that the probability of a click varies depending on the keyword. Specifically, an ad on an advertiser's ``branded'' keyword yields a higher click-through rate (CTR) than an ad on a neutral or their competitor's branded keyword. We represent this variation by specifying advertiser-keyword specific click-through rates (CTR), denoted by $ctr^I_j$ for advertiser $I$ in keyword $j$. We choose $ctr^A_a = 1$, $ctr^A_m = 0.6$, and $ctr^A_b = 0.2$ for Alice.net, and symmetrically for Bob.com.\footnote{Keywords denoted by $a$ and $b$ are "branded" for Alice.net and Bob.com, respectively. Think of a query that communicates a clear intent to reach Alice.net's website. The query denoted by $m$ instead is neutral --- it may lead to a click on either advertiser with the same probability.}

The ad slots are allocated via (separate) second-price auctions with a reserve price of $1$: Alice.net and Bob.com may submit their values $v^I_j$ for all three keywords, and the highest bidder for the $j$ auction wins the ad slot for the keyword $j$. The winner pays $t^I_j$ per click, where $t^I_j$ the second highest bid or the reserve price if they were the only bidder. 
The expected payoff for the winner is given by their CTR multiplied by the difference between their bid and their payment, $ctr^I_j \times (v^I_j - t^I_j)$. The loser receives a payoff of 0. If an advertiser does not submit a bid, they receive a payoff of 0. Notice that bidding on all three keywords is a dominant strategy in this game. This is because the payoff for winning a keyword is always positive, while the payoff for losing a keyword is 0, and there is no budget constraint.

Within the setting described, we let advertisers A and B use a learning algorithm to determine which keywords to bid on. Specifically, we assume that A and B implement a Q-learning algorithm that selects a subset of keywords to bid on. Note that the algorithm does not select a bid: we assume that when agents bid on a given keyword they bid their value. Thus, the action space for this algorithm is defined as the power set of the set of available keywords, which results in a total of $2^3$ possible keyword combinations.

\begin{figure}[h]
    \centering
    \begin{subfigure}[b]{.49\textwidth}
    \centering
    \includegraphics[width = 80mm]{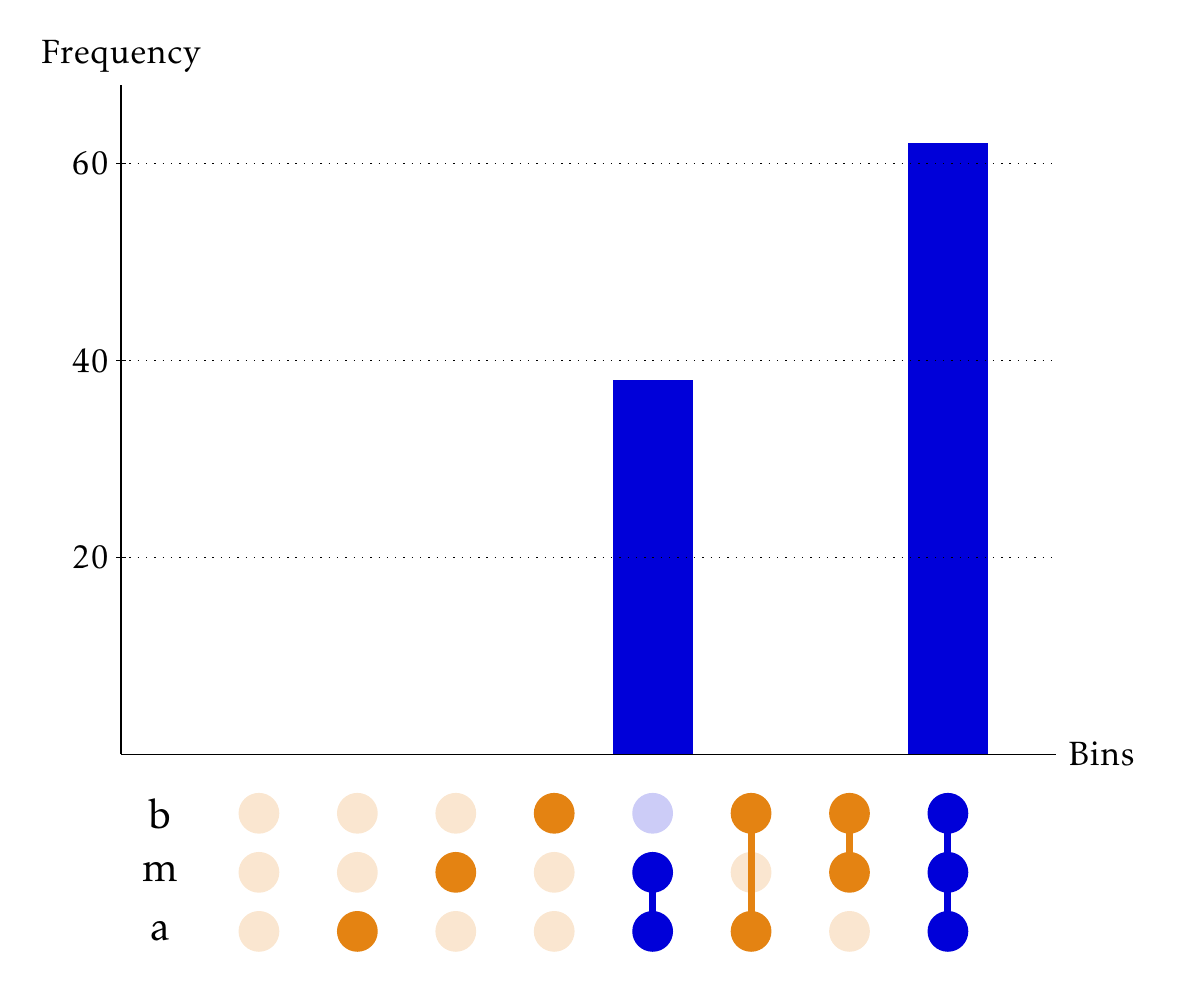}
    \subcaption{Alice's play}
    \end{subfigure}
    \hfill
    \begin{subfigure}[b]{.49\textwidth}
    \centering
    \includegraphics[width = 80mm]{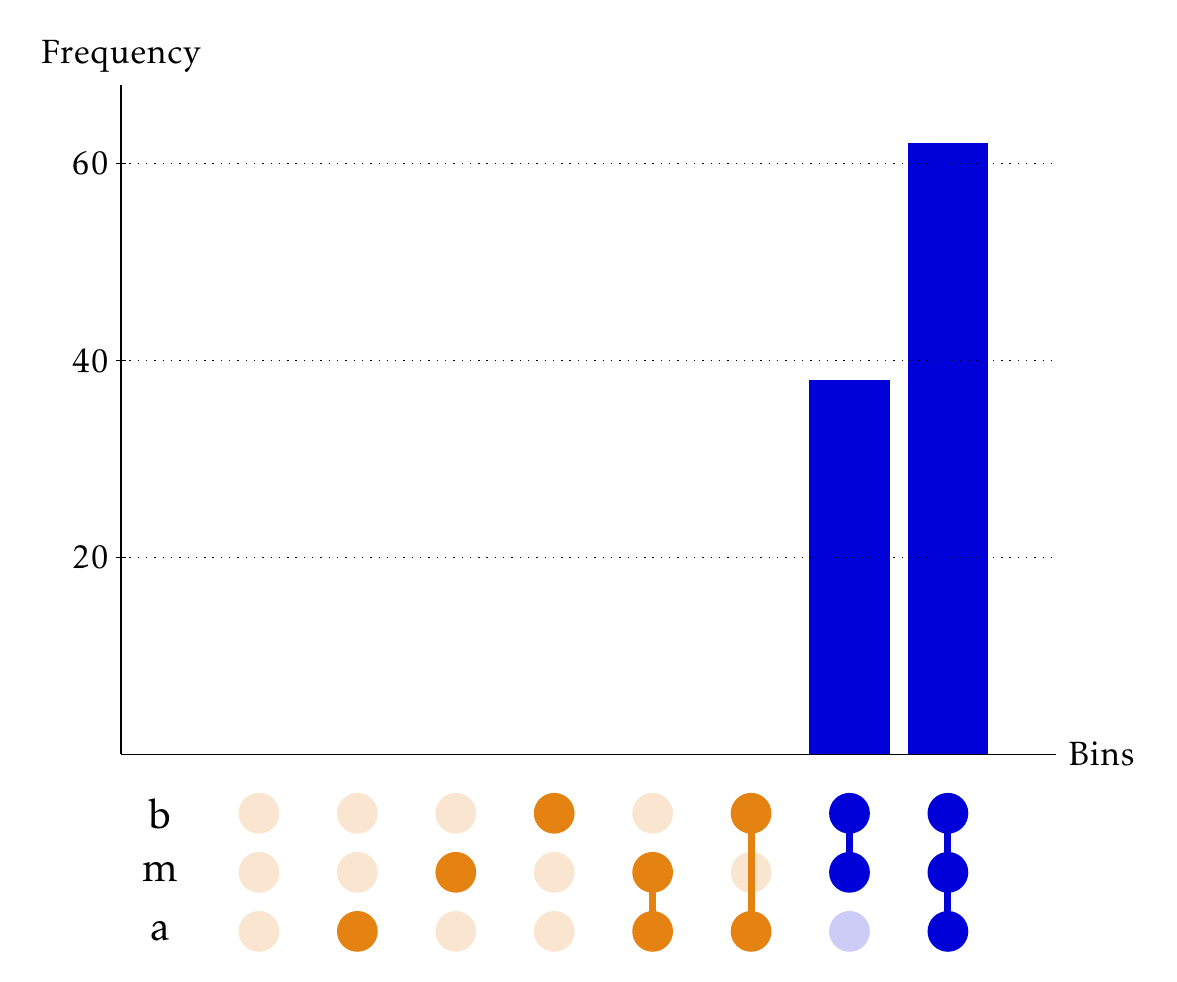}
    \subcaption{Bob's play}
    \end{subfigure}
    \caption{The figure displays a frequency histogram of independent, randomly initialized learning trajectories that converge on a given strategy. The upper bar graph represents the fraction of these trajectories that converge on each strategy, while the lower graph displays all possible advertisers' strategies.}
    \label{fig:keywords}
\end{figure}

To investigate the dynamics of the Q-learning algorithm, we simulate its behavior and visualize the results in \Cref{fig:keywords,fig:Qhistory}.
In \Cref{fig:keywords}, we observe that both algorithms exhibit convergence towards the dominant strategy in more than $50\%$ of the independent, randomly initialized learning trajectories. However, in 18 out of 50 simulations, the two advertisers learn to collude by splitting the market.
Under this collusion scheme, each advertiser only bids on his own branded keyword and the neutral one. In this collusive scheme, reminiscent of collusion in spatial models, neither advertiser bids on the opponent's branded keyword, despite the fact that it is strategically suboptimal. It is important to notice that the outcome of collusion by market splitting is Pareto dominant: both players achieve better outcomes than if they bid on all keywords. In fact, bidding on all keywords yields an expected payoff of $\frac{1}{6}\big(1+\frac{6}{10} + \frac{2}{10}\big) =\frac{3}{10}$ per round, whereas market splitting yields an expected payoff of $\frac{1}{2} + \frac{1}{6}\frac{6}{10} =\frac{6}{10}$ per round.

\begin{figure}[h]
    \centering
    \begin{subfigure}[b]{.49\textwidth}
    \centering
    \includegraphics[width = 7cm]{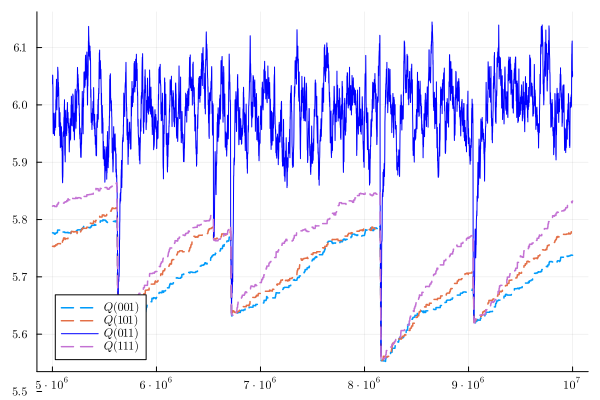}
    \end{subfigure}\hfill
    \begin{subfigure}[b]{.49\textwidth}
    \centering
    \includegraphics[width = 7cm]{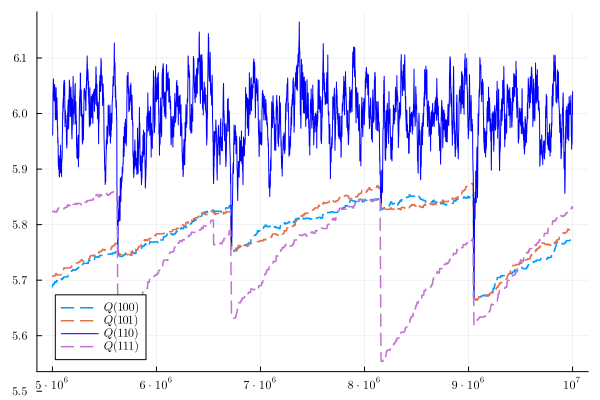}
    \end{subfigure}
    \caption{Learning trajectories from a simulation with $\gamma = 0.9$, $\varepsilon = 0.01$, $\alpha = 0.005$. The solid line represents $Q(m,a)$ and $Q(b,m)$ on the left and right, respectively. The dotted lines represent the $Q$ values associated with some bundles containing the opponent's keyword.}
    \label{fig:Qhistory}
\end{figure}

We now focus on \Cref{fig:Qhistory} to examine in greater detail the collusive dynamics. This figure portrays the evolution of the Q-learning algorithm for a selection of actions taken by each advertiser. As the figures reveal, the collusion behavior is not perfect. Although most of the time the two advertisers engage in collusive splitting, experimentation eventually brings about the realization that they could benefit by bidding on the opponent's branded keyword. As soon as they realize this, one of the advertisers begins bidding on the opponent's branded keyword, causing a sudden drop in the opponent's payoffs, which leads her to abandon the splitting strategy and revert to bidding on all keywords. This sequence of events causes both players to experience significantly lower payoffs, but their expectations (their $Q$) about the market-splitting outcome persistently remain high.
Ultimately, both advertisers return to splitting the market simultaneously in a stochastic recurrent cycle typical of spontaneous coupling, as described in \Cref{sec:results}.

\section{Application: Learning-Robust Mechanism Design}\label{sec:learnable}

The previous sections demonstrated that reinforcers can fail to learn to play their dominant strategy. Besides being a cornerstone of game theoretic analysis, dominant strategies are also fundamental in mechanism design; given the widespread adoption of mechanisms in the online economy, the failure of reinforcers is all the more concerning. However, \Cref{thm:domstr} and \Cref{cor:counterfactuals} provide a solution that can be embedded in mechanism design problems: providing counterfactual feedback may avoid spontaneous coupling's occurrence. We are thus interested in designing a dominant strategy mechanism with a feedback rule that guarantees that if players use reinforcers and update their estimates according to their feedback, they will learn the dominant strategy. Moreover, we are interested in finding the minimal feedback necessary to accomplish this goal, because we are concerned with unintended consequences of providing the algorithms more information than necessary about the play of the other players.

Consider a canonical model of implementation with private information. Let $\mathcal{X}$ be the set of possible outcomes, and let there be a set $N$ of agents with types $ (\lambda^i)_{i\in N} \in \varprod_{i\in N}\Lambda_i = \Lambda$ fixed over time.\footnote{It is simple to extend this result to allow for types drawn i.i.d. in every period, but for simplicity we stick to a constant type in this section.} Type $\lambda^i$ determines agent $i$'s preferences $u^i \colon \mathcal{X} \times \Lambda_i \to \mathbb{R}$ over outcomes.
A direct revelation mechanism requires each agent to report a type $\hat{\lambda}^i$. The mechanism then maps the reported type profile $\hat{\lambda}$ to an outcome, $f(\hat{\lambda})$.
We say a mechanism is \emph{strategy-proof} if it is a direct revelation mechanism and reporting truthufully is a dominant strategy:

\[u^i(f(\lambda^i,\lambda^{-i}),\lambda^i) \geq u^i(f(\hat{\lambda}^i,\lambda^{-i}), \lambda^i) \qquad \forall \hat{\lambda}^i \neq \lambda^i.\]

Assume further that a subset of agents $L\subseteq N$ act according to the choices of their own reinforcer. Agents in $L$, learners, assess the value of each individual report over multiple iterations of the mechanism. Agents in $N \setminus L$ are instead rational and myopic. That is, they rationally play their static dominant strategy instead of trying to manipulate the learning agents. In each period both rational and learning agents choose a report. Once the mechanism selects the outcome, payoffs realize and learners update their estimates. We call this setting a \emph{hybrid market}, because rational and learning agents coexist.

If the mechanism is strategy-proof, myopic rational agents play their dominant strategy and report truthfully. However, as we have seen in \Cref{sec:PD,sec:results} coupling between independent algorithms may lead to behavior different from dominant strategy, consistent with collusive agreements, so that simple strategy-proof mechanisms may fail the designer. We thus seek learning-robust strategy-proof mechanisms (LRSM).
\begin{definition}
Suppose agents in $L$ adopt a (separable) reinforcer.
A strategy-proof mechanism is \emph{learning-robust} if each agent $l \in L$ learns the truthful report $\lambda^l_\text{truthful}$.
\end{definition}

We assume that the designer can provide information to the participants of the mechanism, and that such information is used by the algorithms to make inference about payoffs. Then, our results in \Cref{sec:results} show that the designer can ensure robustness of a mechanism by supplying enough information to the reinforcers so that they can evaluate counterfactuals.\footnote{Recall from \Cref{cor:counterfactuals} that updating the estimates $\theta_a$ according to counterfactual information enforces uniform learning rates, but this is only a sufficient condition.} 
The designer assists the algorithms in their counterfactual calculations by revealing some private information. We refer to this ex-post revelation as \emph{feedback provision}. For any given a strategy-proof mechanism $f$, we look for a LRSM for $f$, that is, an ex-post feedback policy which allows agents to compute counterfactuals.

A feedback policy for agent $i$ is essentially a partition of the space $\Lambda_{-i}$ of opponents' types. After the mechanism, the designer communicates to each agent what element of the partition the opponents' reports belonged to. The partitions must be such that agents can compute what outcomes they could have enforced by reporting differently.

Of course, a designer can always opt for a full revelation feedback policy.\footnote{The full revelation policy is the finest partition of $\Lambda_{-i}$.} Revealing everyone's report after the mechanim allows algorithms to compute payoffs from every report, but it may induce additional concerns. First, insisting on revealing all private information may facilitate tacit or explicit collusion, and, second,it may result in large communication costs. Finally, it is not necessarily true that, when provided with all reports, computing the allocation is a simple task. In certain combinatorial auctions, translating reports to prices and allocations requires solving a complex optimization problem.

To address some of these concerns, we look for LRSM with reduced communication burden. Formally, we define a privacy order on the space of feedback policies. We can show that this (partial) order is a lattice, and thus there exist both a minimally- and a maximally-private feedback policy. The former is the full revelation policy, consistent with our intuition. We characterize the latter: a policy that communicates just enough information to ensure that each agent can compute its counterfactuals, and nothing more. It turns out that such feedback is informationally-equivalent to the well-known \emph{menu} formulation of the mechanism (\citet{Hammond1979}).

\begin{theorem}\label{thm:LRSM}
Let $f$ be a strategy-proof mechanism. Then,
\begin{enumerate}
\item There always exists a LRSM for $f$,
\item The maximally-private LRSM for $f$ is a menu description.
\end{enumerate}
\end{theorem}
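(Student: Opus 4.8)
The plan is to tackle the two parts in sequence, building the existence claim (part~1) first and then identifying the maximally-private policy (part~2) as a specific construction. For part~1, I would reduce ``learning-robustness'' to the counterfactual condition of \Cref{cor:counterfactuals}: a feedback policy makes $f$ learning-robust precisely when each learner $l$, upon observing the element of its partition that the opponents' reports landed in, can reconstruct the entire map $\hat{\lambda}^l \mapsto u^l(f(\hat{\lambda}^l, \lambda^{-l}), \lambda^l)$, i.e. can compute the payoff consequence of every alternative report holding opponents' reports fixed. This is exactly what ``computing counterfactuals'' means here, so by \Cref{cor:counterfactuals} the induced reinforcer learns the dominant (truthful) strategy. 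Existence is then immediate by exhibiting the full-revelation policy (the finest partition of $\Lambda_{-l}$): knowing $\lambda^{-l}$ exactly, the learner can evaluate $f(\hat{\lambda}^l, \lambda^{-l})$ for every $\hat{\lambda}^l$ and hence every counterfactual payoff.

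For part~2, I would first make the privacy order precise: a feedback policy for agent $l$ is a partition $\mathcal{P}^l$ of $\Lambda_{-l}$, and one policy is \emph{more private} than another if its partition is coarser (reveals less). The learning-robustness constraint restricts the admissible partitions to those that refine the counterfactual-payoff equivalence; the maximally-private admissible policy is therefore the \emph{coarsest} partition consistent with that constraint. The key object is the equivalence relation on $\Lambda_{-l}$ that identifies two opponent-profiles $\lambda^{-l}, \tilde{\lambda}^{-l}$ whenever they induce the same counterfactual payoff function, i.e. $u^l(f(\hat{\lambda}^l,\lambda^{-l}),\lambda^l) = u^l(f(\hat{\lambda}^l,\tilde{\lambda}^{-l}),\lambda^l)$ for all $\hat{\lambda}^l$. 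I would show this relation's partition is the unique coarsest admissible one, and that the lattice structure on admissible policies (ordered by refinement) guarantees it is the lattice's maximally-private element. Finally, I would argue this partition is informationally equivalent to the \emph{menu} description: the menu offered to $l$ is the set of outcome-payoff pairs achievable by varying $l$'s report given the opponents' reports, and two opponent-profiles are in the same partition cell exactly when they generate the same menu for $l$. This identification, following \citet{Hammond1979}, completes part~2.

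The main obstacle I anticipate is establishing that the counterfactual-equivalence partition is genuinely the coarsest \emph{admissible} policy, rather than merely one admissible policy---this requires ruling out any coarser partition that still somehow permits counterfactual computation. The subtlety is that learning-robustness only requires reconstructing the payoff-ranking of reports well enough to learn the truthful one, so I must verify that the relevant notion of ``computing counterfactuals'' needed for \Cref{cor:counterfactuals} coincides exactly with distinguishing menus, and is not satisfied by some strictly coarser partition that preserves only, say, the identity of the payoff-maximizing report. I would resolve this by appealing to the requirement that uniform relative learning rates (and hence correct counterfactual estimates for \emph{every} action, per \Cref{def:regularity} and the synchronous-updating discussion) are what \Cref{cor:counterfactuals} leverages; since the reinforcer updates all actions' estimates, the feedback must pin down the full payoff vector across reports, forcing the menu-equivalence partition and no coarser one.
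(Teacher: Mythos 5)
Your proposal is correct and follows essentially the same route as the paper: existence from the full-revelation policy plus \Cref{cor:counterfactuals}, a coarseness-based privacy order on partitions of $\Lambda_{-l}$ organized into a lattice, and maximal privacy established for the menu-equivalence partition --- the paper's \Cref{prop:lattice} and \Cref{lemma:optimalmenu} in \Cref{app:feedback} carry out exactly these steps, including your concluding contradiction-style argument that no strictly coarser admissible partition can compute counterfactuals. The only difference is cosmetic: you define equivalence via pointwise equality of the counterfactual-outcome maps $\hat{\lambda}^l \mapsto [f(\hat{\lambda}^l,\lambda^{-l})]_l$, whereas the paper uses equality of menus as sets of $i$-outcomes; these amount to the same partition once the commutativity requirement built into the paper's definition of a feedback policy is imposed, and if anything your formulation sidesteps the tie-breaking subtlety in identifying the two.
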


Menu descriptions are the ex-post feedback counterpart of menu mechanisms. \citet{Hammond1979} defines menu mechanism as providing each agent with a menu, which depends on the profile of reports of the opponents, and let the agent choose his preferred outcome. Instead, we provide feedback in the form of menus: after having received all reports and implemented the outcome, the designer sends a menu, also dependent on the profile of reports of the opponents, that lists what other outcomes would have occurred if the agent had report differently, so that algorithms can compute what would have happened had they reported a different type.

The feedback of menu descriptions is an aggregator of market information, which helps agents evaluate the true value of truthful reporting.\footnote{The recent \citet*{Gonczarowski2023} discusses the simplicity properties of menu descriptions in experiments with human participants. Here, we argue that menu descriptions may facilitate strategy-proof play from algorithmic participants as well.}  
\citet{Parkes2004} argues that mechanism design can play an important role in shaping algorithmic systems. The author describes \emph{learnable} mechanism design --- the idea of explicitly designing mechanisms to maximize and improve performance considering the agent's adaptive behavior. As he suggests, \emph{``a useful learnable mechanism would provide information, for example via price signals, to maximize the effectiveness with which individual agents can learn equilibrium strategies''}. 
We formalized this intuition by showing that feedback design can make traditional strategyproof settings robust to adaptive algorithmic players by providing price signals, or menu descriptions.

Finally, note that the world of online auctions has partly begun to provide feedback to its participants. Google's auctions for display advertising provide feedback, in the form of a ``minimum bid to win'', after each auction has concluded. The minimum bid to win is indeed the menu mechanism for a single-item allocation problem. In the next subsection we will show what a menu description would look like in a simple VCG auction for online advertising, such as one used by Yandex.ru for their search advertising business.

All formal statements and proofs in this section are presented in \Cref{app:feedback}.

\subsection{VCG for Online Search Advertising}
Consider the following simple model of search advertising. There are $N$ bidders for a given query, each with their own value $v_i \in \{0,0.1,\dots,10\}$ for each click. Without loss we can order bidders by their valuations: $v_1 \geq v_2 \geq \dots v_N$.
The search site offers two ad slots. The first ad slot will bring a predicted traffic volume of $100$ units, while the second ad slot will bring in only $80$ units of traffic. The search site is running VCG: all players submit one bid each, representing the price they would pay for each click, and the winners of the two slots will be the agents with the two largest bids. Let us assume for simplicity that ties are broken in favor of the agent with smaller index $i$. Both winners will pay the largest losing bid for $80$ units of traffic. Additionally, the winner of the first ad slot will pay the bid of the winner of the second ad slot for the extra $20$ units he receives. This because the pivotal bidder for the last $20$ units is the winner of the second ad slot, not the loser with the largest bid. 

Suppose first that all agents report truthfully. Then, agent $1$ wins the first ad slot, and agent $2$ wins the second. Agent $2$ pays an estimated $80v_{3}$, while agent $1$ pays $80v_3 + 20v_{2} $.
Now, imagine agent $k$ was attempting to learn how to play by bidding according to a reinforcer. The designer would want to provide feedback to the agent, to ensure that he be able to compute what would have happened, had he bid an amount $\hat{v}_k \neq v_k$, keeping everyone else's reports fixed. The feedback required is simple: agent $k$ needs a price for the second ad slot, and a price for the first. 
In this example, the designer would communicate prices $v_{2}$ and $v_1$.

To see why these prices are sufficient, consider agent $k$'s calculations. There are only three possibilities. If he bid $\hat{v}_k \leq v_{2}$, then he would receive zero payoff, the same as if he was to bid truthfully. Suppose he bid $v_2 < \hat{v}_k \leq v_1$ instead: then agent $k$ would win the second ad slot, and pay $80v_2$. Finally, if agent $k$ bid $v_1 < \hat{v}_k$, he would win the first ad slot. His payment would then be $80v_2 + 20v_1$. All three counterfactual payoffs only require two prices: the bids of the two winners. 

Similarly, the winner of the second slot requires two prices: $v_1$ and $v_3$. The winner of the first slot instead requires $v_2$ and $v_3$. In a VCG setting communication reduces to revealing the values of the bidders that are pivotal for the specific agent. This is indeed a menu description, and it is much more private than the full-feedback policy, which would require communicating all reports $v_{-i}$ to every agent $i$ in the auction.

\section{Conclusion}\label{sec:conclusion}
This paper analyzes collusion in games played by online learning algorithms. We take a theoretical perspective and, complementing burgeoning empirical and numerical evidence, we identify a new driver of collusive behavior specific to algorithmic players. We first address the issue of analytical intractability of strategic interaction among algorithms by showing that it can be approximated with a dynamical system. Then we apply this framework to dominant-strategy games, and we show that ($\varepsilon$-)greedy algorithms can learn to collude. We identify the mechanism sustaining collusion, a statistical linkage we call \emph{spontaneous coupling}: when algorithms are slow to realize the value of the competitive action, joint collusion appears more attractive. Involuntary coupling yields self-fulfilling biases in the estimates: we demonstrate this intuition in a Prisoner's Dilemma with Q-learning agents.
We expect the techniques developed to analyze the simple Prisoner's Dilemma to yield insight in games with more complex strategic structure. In particular, we believe similar techniques can help understanding how AI algorithms reach tit-for-tat strategies when given monitoring technology.

We show that spontaneous coupling may sustain various forms of market manipulation, from price fixing to market splitting. We expect the same arguments to apply to other forms of anticompetitive conduct as well. In particular, we wish to highlight the limited role that monitoring plays in a conduct sustained by spontaneous coupling. Regulation will need to adapt its current policies to account for this truly tacit collusion.
We then design mechanism that are robust to the presence of algorithmic agents. The ideas we outline are sensible design principles even when dealing with algorithms that are not reinforcers. For example, regret-minimizing algorithms obtain better guarantees when provided with counterfactual information.

We view our paper as contributing to the growing literature studying strategic interaction of algorithmic agents. Algorithms shape the dimensions of rationality of these decision makers, and allow us to carry out a disciplined analysis of equilibria and market design for such boundedly-rational agents. There are many other dimensions of interest in the study of strategic algorithmic interactions that we do not touch upon in this paper. For example, we focused on dominant strategy games, which intrinsically make collusion the hardest to sustain: the outcomes of games where the separation between competition and collusion is less stark remains unclear, and worthwhile to pursue. Our algorithms interact with the environment and adapt according to the feedback they receive, but many deployed market algorithms are instead trained offline. Offline training is often prone to unwanted feedback loops, but as our analysis points to correlation as a key driver of collusion, we suggest that offline algorithms may be less prone to collusive behavior.
Another interesting aspect of algorithmic collusion is whether coordination on collusive outcomes would be even easier if algorithms were able to retain memory of recent payoff-relevant quantities. We suspect that when algorithms are enabled to learn dynamic reward-punishment strategies their collusive behavior will increase substantially, as highlighted in the literature. 

An interesting question is what could a sophisticated player achieve when competing against automated decision-makers. The manipulability of these algorithms deserves further analysis, and we believe a setting similar to the one offered in this work could prove helpful in understanding these questions. 
Finally, algorithmic decision-making can be seen as a form of bounded rationality. This implies that the set of implementable outcomes is, in general, wider than that of rational agents. \Cref{thm:IESDS} suggests that arguments similar to \citet{Abreu1992} could prove useful in enlarging the set of implementable outcomes; characterizing the set of implementable outcomes for purely adaptive decision makers is beyond the scope of this paper, but of independent interest.

\bibliographystyle{ecta} 
\bibliography{RL_bibliography}

\newpage
\begin{appendices}
\crefalias{subsection}{appendix}
\crefalias{section}{appendix}
\section{}\label{app:proofs}

We restate \Cref{thm: fluid approx thm} in its more formal version.

\begin{theorem*}[1]
Let $(\theta, \pi)$ be a collection of reinforcers that satisfy \Cref{manualasm:lipschitz} individually, and let the domain $\mathscr{T}\subset \mathbb{R}^{\sum_i d_i}$ of $\theta$ be a compact set. Let $(H_j)_{j \in J}$ be the collection of $\theta$'s maximal Lipschitz-continuity domains, that is, let $H_j$ be the largest open set such that $\theta$ is Lipschitz over $H_j$ and there is a discontinuity on $\partial H_j$.
For all $j \in J$
the collection of Cauchy problems 
\[\begin{cases}
\frac{d\Theta^i(t)}{dt} = \alpha \mathbb{E}_{\pi^i,\pi^{-i}}\left[D^i(\pi^i,r(\pi^i,\pi^{-i}),\Theta^i(t)) \right] \\ 
\Theta^i(0) = y^i_0
\end{cases}\]
has a solution $\Theta^i$ for all $i$ over $H_j$ for all $y_0 \in H_j$. There exists a sequence of processes $\{\theta^n\}_{n \in \mathbb{N}}$ such that: 
\begin{itemize}
    \item $\mathbb{E}\left[\theta^1(\tau(k))\right]  = \mathbb{E}\left[\theta(k)\right]$ for all $k$, $\tau(k) = \inf \left\{t \ | \ \theta^1(t)\text{ jumped } k \text{ times}\right\}$,
    \item the infinitesimal generators $\mathcal{A}D_n(\theta)$ are all identical to $\mathcal{A}D_1(\theta)$ for all $\theta \in H_j$ and $n$, 
    \item $\lim_{n\to\infty} P\Big\{\sup_{t\leq T} \Big\lVert \theta^n(t) - \Theta(t)\Big\rVert > \eta \Big\}=0$ for all $T\geq 0$ and $\eta>0$ such that $\{\Theta(t)\}_{t\leq T} \subset H_j$.
\end{itemize}
\end{theorem*}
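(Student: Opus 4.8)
The plan is to realize each $\theta^n$ as a single pure-jump Markov process on the stacked state space whose \emph{drift} is invariant in $n$, and then to invoke the functional law of large numbers of \citet{Kurtz1970} separately on each maximal continuity domain $H_j$, where \Cref{manualasm:lipschitz} restores the Lipschitz regularity that the global field lacks.

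First I would make the Poissonization precise. Drive $\theta = (\theta^1,\dots,\theta^N)$ by one Poisson clock of rate $\lambda^n = n$; at each tick draw a profile $a=(a^1,\dots,a^N)$ from the product policy $\prod_i \pi^i(\theta^i)$, let the finite-game rewards $r(a)$ realize, and update to $\theta + \tfrac{\alpha}{n} D(a,r(a),\theta)$. For $n=1$ the embedded jump chain coincides in distribution with the discrete reinforcer $\theta(k)$ of \Cref{def:reinforcer}, which delivers the first bullet $\mathbb{E}[\theta^1(\tau(k))]=\mathbb{E}[\theta(k)]$. Evaluating the generator on coordinate functions, the intensity $n$ and the jump scale $\tfrac{\alpha}{n}$ cancel, so the instantaneous expected rate of change equals $\alpha\,\mathbb{E}_{\pi}[D(\cdot,r,\theta)]$ for \emph{every} $n$; this is the second bullet and fixes the candidate limit field $F(\theta):=\alpha\,\mathbb{E}_{\pi}[D(\cdot,r,\theta)]$.

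Next I would establish the Cauchy problem on $H_j$ and then apply Kurtz. On the open set $H_j$ both $D$ and $\pi$ are Lipschitz by \Cref{manualasm:lipschitz}, hence $F$ is Lipschitz there, and Picard--Lindel\"of gives the unique solution $\Theta$ with $\Theta(0)=y_0$ for as long as it stays in $H_j$. To pass to the limit I would check Kurtz's hypotheses: compactness of $\mathscr{T}$ together with finitely many profiles and bounded rewards makes $\lVert D\rVert$ bounded, so individual jumps are $O(1/n)$; the size-weighted jump intensity equals $F$ for every $n$ (by the cancellation above), while the second-moment intensity is $O(1/n)$ and vanishes; and the initializations agree, $\theta^n(0)=y_0=\Theta(0)$. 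Kurtz's theorem then yields $\sup_{t\le T}\lVert \theta^n(t)-\Theta(t)\rVert \to 0$ in probability, as required, on the event that $\Theta$ remains in the Lipschitz region.

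The hard part will be the localization to $H_j$, since $D$ and $\pi$ are genuinely discontinuous on $\partial H_j$ and Kurtz's global-Lipschitz requirement fails across the boundary. Because $\{\Theta(t)\}_{t\le T}\subset H_j$ is compact and $H_j$ open, the trajectory has positive distance to $\partial H_j$; I would fix a compact core $K$ with $\{\Theta(t)\}_{t\le T}\subset \operatorname{int} K \subset K \subset H_j$, stop the prelimit processes at $\sigma_n=\inf\{t:\theta^n(t)\notin K\}$, and run the estimate on $[0,\sigma_n\wedge T]$ where the drift agrees with the Lipschitz field. A bootstrap closes the argument: uniform closeness of $\theta^n$ to the interior trajectory $\Theta$ forces $P(\sigma_n\le T)\to 0$, so the stopping is immaterial in the limit and the conditional convergence statement holds. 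Gluing the per-domain limits across switching surfaces is a separate matter, handled by the differential-inclusion formalism of \citet{Filippov1988} rather than by \Cref{thm: fluid approx thm} itself.
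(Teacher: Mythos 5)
Your proposal is correct and follows essentially the same route as the paper: Poissonization with a rate-$n$ clock and $\tfrac{1}{n}$-scaled jumps, observing that the expected drift (generator on coordinate functions) is invariant in $n$, and invoking the law-of-large-numbers theorem of Kurtz (1970) on each maximal continuity domain $H_j$ where \Cref{manualasm:lipschitz} gives Lipschitz regularity. The only notable difference is that your stopping-time and bootstrap localization to a compact core $K \subset H_j$ is spelled out more explicitly than in the paper, which handles the same issue implicitly by restricting Kurtz's theorem to a compact set $E = H \cap B(r)$ and stating the conclusion conditionally on $\{\Theta(t)\}_{t \leq T} \subset H_j$.
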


\paragraph{Proof of \Cref{thm: fluid approx thm}.}
The existence of a solution for the Cauchy problem is guaranteed by \Cref{manualasm:lipschitz} and the restriction to the maximal continuity domains $H_j$.
In particular, notice that one can write $D^i(\pi^i(\theta),r(\pi^i(\theta^i), \pi^{-i}(\theta^{-i}),\theta^i)$ as a map $\hat{D}(\theta,Y)$ where $Y$ is a random variable representing the uncertainty introduced by the policies $\pi$.

We can divide the proof of \Cref{thm: fluid approx thm} in two main steps:
\begin{enumerate}
    \item Finding the correct continuous-time embedding of the reinforcer $\theta$;
    \item Identifying a scaling that guarantees limits are well-defined.
\end{enumerate}

First, let us fix a compact ball of radius $r$ in $\mathbb{R}^{\sum_i d_i}$. We will consider the set $E = H \cap B(r)$  with the Borel intersection sigma algebra. Since we can choose $r$ to be as large as we want, the approximation will hold for any finite values of $\theta$. Let us add one component to the vector $\theta$, in position $\sum_i d_i +1$, which will keep track of the iteration $k$.

As far as the first step is concerned, let us define a Poisson process $N_1$ of rate $\lambda_1 = 1$. Consider the sequence of (stochastic) arrival times ${0< \tau_1 < \tau_2 < \tau_3< \dots}$. We define the process $\theta^1(t)$ as
\[\theta^1(t) = \theta(k) \qquad \text{if }\tau_{k+1} > t \geq \tau_k\]
for all times $t \geq 0$. The process $\theta^1(t)$ is a compound Poisson process, cadlag and Markov. Naturally, its $\sum_i d_i + 1$ component always coincides with the iteration $k$.
At arrival times the algorithm is equal to its continuous time equivalent $\theta^1$, which proves item 1 of the Theorem.

We now want to increase the pace of the updates while retaining the same uncertainty in expectation. Intuitively, we can ``speed up'' the Poisson arrivals, but we also need to ``dampen'' the jumps accordingly, otherwise the process will diverge to infinity. 
Formally, we consider a sequence of Continuous-Time Markov Chains indexed by $n \in \mathbb{N}$ as follows:
\begin{itemize}
    \item The jump rate $\lambda_n$ is defined as 
    $\lambda_n = n$.
    
    \item At each jump, the update in the first $\sum_i d_i$ components is\footnote{Note that we omit the dependence on the iteration since iterations are now part of the process $\theta^n$.} 
    \[\theta^n(t) - \theta^n(t^-) = \frac{1}{n} \hat{D}(\theta^n(t^-),Y)\]
    
    \item At each jump, the update in the coordinate $\sum_i d_i +1$ is 
    \[\theta^n(t) - \theta^n(t^-) =\frac{1}{n}\]
\end{itemize}

Intuitively, the updates of the original process $\theta$ are scaled down by a factor $n$ and the last coordinate keeps track of how many updates have occurred scaled by $n$.

Consider then the probability measure $\mu^n(x,dz)$ of the size of the updates starting at $x$, with
\[\mu_n(x,dz) = \mathbb{P}\Big\{\theta^n(\tau^n) \in dz | \theta^n(0) = x \Big\}\]
where $\tau^n$ is the first exit time of $\theta^n$ from $x$. We define the component-wise function
\begin{equation}\label{eq:F}
F_n(x)^{m} = \lambda_n \int (z^{m}-x^{m}) \mu_n(x,dz),\end{equation}
which intuitively describes the expected jump of $\theta^n$ from $x$  along the $m$-th component over one unit of time.
In fact, $F_n$ can be rewritten as
\[F_n(x)^m = n\int \frac{\alpha}{n} \hat{D}^m(x,Y) \mu = \int \alpha \hat{D}^m(x,Y) \mu.\]
We chose the scaling in such a way that the function $F_n$ is independent of $n$. 
The function $F_n$ is exactly the infinitesimal generator of the compound Poisson process $\theta^n(t)$, therefore item (2) of the Theorem is proved. 

Let $F(x) \coloneqq \lim_{n\to\infty} F_n(x)$. It is clear that $F(x) = F_n(x)$ for all $n$. Moreover the function $F(x)$ is Lipschitz in every component. 
We will need a technical lemma:

\begin{lemma}\label{lemma:technical}
Let $E$ be a compact set in $\mathbb{R}^{|\mathcal{I}|\times|\mathcal{S}|\times|\mathcal{A}|}$.  There exists a sequence $\{\varepsilon_n\}_n>0$ with $\lim_{n \to \infty} \varepsilon_n = 0$ such that 
\[\lim_{n \to \infty} \sup_{x \in E} \lambda_n \int_{|z-x|>\varepsilon_n} |z-x|\mu_n(x,dz) = 0\]
Moreover, 
\[\sup_n \sup_{x \in E} \lambda_n \int_{E} |z-x|\mu_n(x,dz) < \infty\]
\end{lemma}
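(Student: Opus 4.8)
The plan is to exploit the single structural fact that makes the whole lemma essentially immediate: the $1/n$ scaling built into the construction of $\theta^n$ forces every jump to have magnitude at most $C/n$, uniformly over the starting point, so that the rate $\lambda_n = n$ and the jump scale $1/n$ cancel exactly. First I would establish that the update function is bounded on the compact set $E$. Because the game is finite, the randomness $Y$ ranges over a finite set of action profiles, and on each maximal continuity domain the map $\hat{D}(\cdot,Y)$ is Lipschitz, hence bounded on the compact $E$; taking the maximum over the finitely many values of $Y$ yields a finite constant $M = \sup_{x\in E,\,Y}\lvert\hat{D}(x,Y)\rvert$. Recalling that at a jump from $x$ the first $\sum_i d_i$ coordinates move by $\frac{1}{n}\hat{D}(x,Y)$ and the bookkeeping coordinate moves by $\frac{1}{n}$, every realized jump satisfies $\lvert z-x\rvert \le C/n$ with $C = M+1$. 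Equivalently, the measure $\mu_n(x,\cdot)$ is supported in the ball $\{z : \lvert z-x\rvert \le C/n\}$ for every $x\in E$.

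Given this, the first claim follows by choosing any truncation sequence that vanishes strictly slower than $1/n$, for instance $\varepsilon_n = n^{-1/2}$. Then $\varepsilon_n\to 0$, and for all $n > C^2$ we have $\varepsilon_n > C/n$, so the truncation region $\{\lvert z-x\rvert > \varepsilon_n\}$ is disjoint from the support of $\mu_n(x,\cdot)$. The integrand therefore vanishes identically for all large $n$, uniformly in $x\in E$, and the supremum — hence its limit — is zero.

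For the second claim I would reuse the same uniform jump bound together with the fact that $\mu_n(x,\cdot)$ has total mass at most one over $E$. Bounding $\lvert z-x\rvert \le C/n$ inside the integral gives
\[
\lambda_n \int_E \lvert z-x\rvert\,\mu_n(x,dz) \;\le\; n\cdot\frac{C}{n}\cdot 1 \;=\; C,
\]
a bound independent of both $n$ and $x$. Taking suprema yields $\sup_n\sup_{x\in E}\lambda_n\int_E\lvert z-x\rvert\,\mu_n(x,dz)\le C<\infty$.

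The only step requiring any care — and hence the main obstacle, though it is a mild one — is verifying the uniform boundedness $M<\infty$ of $\hat{D}$ on $E$ in the presence of the almost-everywhere discontinuities of the reinforcer. This is where compactness of $E$, finiteness of the action space, and the separable structure of \Cref{def:regularity} are used: the only discontinuities are the bounded jumps of the policy-weighted learning rates along the argmax hyperplanes, never blow-ups, so $\hat{D}$ stays bounded. Once $M$ is finite, the matched construction parameters $\lambda_n = n$ and jump scale $1/n$ do all the remaining work, which is precisely why they were engineered to cancel.
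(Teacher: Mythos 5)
Your proposal is correct and follows essentially the same route as the paper's own proof: bound $\hat{D}$ uniformly on the compact set $E$, observe that every jump of $\theta^n$ then has magnitude $O(1/n)$ so that $\mu_n(x,\cdot)$ is supported in a shrinking ball, choose a vanishing truncation sequence dominating that radius (the paper takes $\varepsilon_n = M/n$, you take $n^{-1/2}$, an immaterial difference), and for the second claim let the rate $\lambda_n = n$ cancel the $1/n$ jump scale to get the uniform bound. Your extra care with the bookkeeping coordinate (the constant $C = M+1$) and with boundedness across the discontinuity hyperplanes only makes explicit what the paper leaves implicit.
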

\begin{proof}
Since $E$ is compact $\hat{D}(\theta,Y)$ must be bounded. Let $M = \sup_{\theta\in E} \hat{D}(\theta,Y)$ across dimension, and note that $M < +\infty$.
Let then $\{\frac{M}{n}\}_n$ be a sequence satisfying the assumptions of the Lemma, and notice that $\int_{|z-x|>\varepsilon_n} |z-x|\mu_n(x,dz) = 0$ for all $x$ and $n$. We thus proved the first claim. The second claim follows a simple observation: $ |z-x|\mu_n(x,dz) \leq \frac{M}{n}$ for all $x$. Since $\lambda_n = n$, we obtain that
\[\sup_n \sup_{x \in E} \lambda_n \int_{E} |z-x|\mu_n(x,dz) =M < \infty\] which concludes the proof.
\end{proof}

This lemma verifies one of the conditions of the following Theorem taken from \citet{Kurtz1970}:
\begin{customthm}{2.11}\label{customthm:Kurtz}
Suppose there exists $E\subset \mathbb{R}^k$, a function $F\colon E\to\mathbb{R}^k$ and a constant $M$ such that $|F(x) - F(y)|\leq M|x-y|$ for all $x,y \in E$ and \[\lim_{n\to\infty}\sup_{x\in E_n\cup E} |F_n(x) - F(x)| = 0\]
Let $X(t,x_0), 0\leq t\leq T, x_0 \in E$ satisfy
\[X(0,x_0) = x_0, \qquad \dot{X}(t,x_0) = F(X(t,x_0)) \]
Suppose additionally that the sequence $F_n$ satisfies the conditions of \Cref{lemma:technical}, then for every $\eta>0$
\[\lim_{n \to\infty} P\big\{\sup_{t\leq T}|X_n(t) - X(t,x_0)|\geq \eta\big\}=0\]
\end{customthm}
If $X(t,x_0) =\Theta$, we can verify that the assumptions of the Theorem hold:
\begin{itemize}
    \item since $\hat{D}$ is Lipschitz, and $F_n=F$ are integrals of Lipschitz functions, it is clear that $|F(x) - F(y)|\leq M|x-y| $ holds,
    \item $\lim_{n}\sup_{x\in H} |F_n(x,t) - F(x,t)| = 0$ is satisfied by definition of $F_n = F$,
    \item the conditions of $\Cref{lemma:technical}$ are verified.
\end{itemize}
Then, Theorem 2.11 implies that for every $\eta>0$
\[\lim_{n \to\infty} P\big\{\sup_{t\leq T}|\theta^n(t) - \Theta(t)|\geq \eta\big\}=0\]
which proves item 3 of the Theorem and concludes the proof.

As advanced at the beginning, the  process $\Theta$ is a deterministic process with all uncertainty collapsed into the drift component. \qed 

\paragraph{Proof of \Cref{prop:Inclusion}.} This result relies on the following theorem from \citet{Filippov1988}:
\begin{customthm}{1, Chapter 2, Section 7} Let $S$ be a compact domain. Let $G(t,x)$ be a nonempty, bounded, closed, convex set-valued function that is upper semicontinuous in $t,x$ for all $(t,x) \in S$. Then for any point $(t_0,x_0)\in S$ there exists a solution of the problem \[\dot{x} \in G(t,x), \qquad x(t_0) = x_0\]
and if the domain $S$ contains a cylinder $Z(t_0\leq t \leq t_0+a, \ |x-x_0|\leq b)$, the solution exists at least on the interval
\[t_0\leq t \leq t_0+d, \qquad d= \min\Big\{a;\frac{b}{m}\Big\} \qquad m = \sup_Z |G(t,x)|\]
\end{customthm}
Let $G(t,x)$ be defined as the vector field given in the statement of \Cref{prop:Inclusion}, and notice that $G$ is time-invariant; also let $S$ be a compact ball in $\mathbb{R}\times\mathbb{R}^{|\mathcal{I}|\times|\mathcal{A}|}$. We show that $G$ satisfies all the conditions:
\begin{itemize}
    \item it is nonempty over $\mathbb{R} \times \mathbb{R}^{|\mathcal{I}|\times|\mathcal{A}|}$,
    \item it is bounded everywhere, since each individual $F_a$ is bounded over $\omega_{a,a} \cap S$,
    \item it is closed and convex, as it is clear from the definition above,
    \item it is upper semicontinuous: to see this, select a convergent sequence in the domain. If the sequence is entirely contained in a $\omega_j$, then continuity is clear. Instead, suppose that the sequence lies in an $\omega_{a,a}$ but its limit lies on $\partial \overline{\omega}_{a,a}$. Upper semicontinuity is guaranteed by the definition of $G$ as a convex combination of $F$ over the overlapping boundaries. 
\end{itemize}
Additionally, note that the domain is any compact ball, therefore we can find a $S$ that contains any cylinder $Z$  and a solution to this differential inclusion is global within any compact subset of $\mathbb{R}^{|\mathcal{I}|\times|\mathcal{A}|}$.
\qed

\paragraph{Proof of \Cref{prop:two eq PD}}
The existence of the equilibrium $q^{eq}_{D}$ follows directly from setting the field over $\omega_{D,D}$ to zero. 

We prove existence of $q^{eq}_C$ and its related property for one agent; by symmetry, the other agent's Q-values enjoy the same properties. The boundary is defined as $\Sigma = \{q \in \mathbb{R}^2 \colon c\cdot q = 0\}$ where $c = (1,-1)$ and $\cdot$ denotes the usual dot product. Using the Filippov convention, we can further divide $\Sigma$ in three regions:
\begin{itemize}
    \item a \emph{crossing region}, $\Sigma^c =\{q \colon (c\cdot (A_Cq+b_C))(c\cdot (A_{D}q+b_{D}))>0\}$
    \item a \emph{repulsive region}, $\Sigma^r =\{q \colon c\cdot (A_Cq+b_C) >0, \  c\cdot (A_Dq+b_{D})<0\}$
    \item a \emph{sliding region}, $\Sigma^s =\{q \colon c\cdot (A_Cq+b_C) <0, \ c\cdot (A_{D}q+b_{D})>0\}$
\end{itemize}
where we have
\begin{align*}
    A_C=\begin{bmatrix}
    \alpha \left(1-\frac{\varepsilon}{2}\right) (\gamma-1) & 0\\
    \alpha \gamma \frac{\varepsilon}{2} & -\alpha \frac{\varepsilon}{2}
    \end{bmatrix} \qquad
    A_D=\begin{bmatrix}
    -\alpha \frac{\varepsilon}{2} & \alpha \gamma \frac{\varepsilon}{2}\\
    0 & \alpha \left(1-\frac{\varepsilon}{2}\right) (\gamma-1)
    \end{bmatrix},
\end{align*}

and 

\begin{align*}
    b_C=\begin{bmatrix}
    \alpha \left(1-\frac{\varepsilon}{2}\right)\left(2-\frac{\varepsilon}{2}\right) g\\
    \alpha \frac{\varepsilon}{2} \left(2 + g - g \frac{\varepsilon}{2}\right)
    \end{bmatrix} \qquad b_D=\begin{bmatrix}
    \alpha \left(1+\frac{\varepsilon}{2}\right)\frac{\varepsilon}{2} g\\
    \alpha \left(1-\frac{\varepsilon}{2}\right)\left(2+\frac{\varepsilon}{2}g\right)
    \end{bmatrix}.
\end{align*}
We can define the sliding solution as the field $\frac{d\mathbf{Q}}{dt} = F^s(\mathbf{Q})$ over the sliding region where 
\[F^s(\mathbf{Q}) = \frac{c\cdot(A_{D}\mathbf{Q}+b_{D}) (A_C\mathbf{Q}+b_C) - c\cdot(A_C\mathbf{Q}+b_C)(A_{D}\mathbf{Q}+b_{D}) }{c\cdot (A_{D}\mathbf{Q}+b_{D}) - c\cdot (A_C\mathbf{Q}+b_C)}\]
The relative time spent on $\omega_{C,C}$ at point $\mathbf{Q}$ is defined as \[\tau_C = \frac{c\cdot(A_{D}\mathbf{Q}+b_{D})}{c\cdot (A_{D}\mathbf{Q}+b_{D}) - c\cdot (A_C\mathbf{Q}+b_C)}\]

The sliding vector field becomes 
\[\frac{d\mathbf{Q}_j}{dt} = \frac{\alpha\big(\frac{1}{2}\varepsilon g (2-\varepsilon)(g-1) + (2g+(\gamma-1)\mathbf{Q}_j) (2+(\gamma-1)\mathbf{Q}_j)\big)}{2(1+g+(\gamma-1)\mathbf{Q}_j)}\]
for every direction $j$. By setting the field equal to zero and solving for $\mathbf{Q}_j$, we find that there is an equilibrium at 
\[q^{eq}_{C,j} = \frac{1+g + \sqrt{(g-1)(g-1-\varepsilon g+ \frac{\varepsilon^2g}{2} )}}{(\gamma-1)}\]
for all j. This equation has a feasible solution for all $\varepsilon < 1- \sqrt{\frac{2-g}{g}}$. \qed

\paragraph{Proof of \Cref{cor:Local time}.}
The result follows immediately from the proof of \Cref{prop:two eq PD}. In particular, it is sufficient to compute \[\tau_C = \frac{c\cdot(A_{D}\mathbf{Q}+b_{D})}{c\cdot (A_{D}\mathbf{Q}+b_{D}) - c\cdot (A_C\mathbf{Q}+b_C)}\]
at $\mathbf{Q}=q^{eq}_C$.
\qed

\paragraph{Proof of \Cref{thm:RLS}.}

First off, notice that payoffs in a Prisoner's Dilemma are ordered as follows:
\[r(D,C) > r(C,C) > r(D,D) > r(C,D)\]
We will prove existence of a stationary point on the switching surface for a system defined as follows: 
\[\begin{cases} \dot{\theta}^C = \alpha \left[U(\theta^C,r(C,C)) + V(\theta)\right] \\ \dot{\theta}^{D} = (1-\alpha) \left[U(\theta^{D},r(D,C)) + V(\theta)\right]
\end{cases}\]
in the half-plane where $C$ is the preferred action, and 
\[\begin{cases} \dot{\theta}^C = (1-\alpha) \left[U(\theta^C,r(C,D))+ V(\theta)\right] \\ \dot{\theta}^{D} = \alpha\left[U(\theta^{D},r(D,D))+ V(\theta)\right]
\end{cases}\]
in the half-plane where $D$ is the preferred action. Note that we are assuming WLOG that the learning speeds sum to $1$, since all it matters is the relative speed of learning. To start, let us assume $\alpha$ is identical across the half planes.

We want to show that there exist an $\alpha$ and a $\theta^*$ such that:
\begin{equation}\label{eq:nolocaltime}
\begin{cases}
\alpha U(\theta^*,r(C,C)) + (1-\alpha) U(\theta^*,r(C,D)) + V(\theta^*)= 0\\
(1-\alpha) U(\theta^*,r(D,C)) + \alpha U(\theta^*,r(D,D)) + V(\theta^*)= 0
\end{cases}\end{equation}
Because we used the same $\alpha$ on both sides, we are simply looking for a translation of \Cref{eq:nolocaltime}, therefore we can develop the argument disregarding the $V(\theta^*)$ component.
For a given $\theta$, we can write this problem as a convex combination of two vectors:
\[\alpha \Vec{x} + (1-\alpha) \Vec{y} = \Vec{0}\] 
where 
\begin{align*}
    \Vec{x}(\theta) = \begin{bmatrix}
           U(\theta,r(C,C)) \\
           U(\theta,r(D,D)) \\
         \end{bmatrix} \qquad \qquad
    &\Vec{y}(\theta) = \begin{bmatrix}
           U(\theta,r(C,D)) \\
           U(\theta,r(D,C)) \\
         \end{bmatrix}
\end{align*}
Let $\theta^1$ be the value of $\theta$ such that $U(\theta^1,r(C,C))=0$. Then, using the monotonicity of $U$ in rewards, the two vectors $\Vec{x},\Vec{y}$ computed in $\theta^1$ will be positioned as in \Cref{fig:homotopy}. Let $\theta^{2}$ instead be the value of $\theta$ such that $U(\theta^2,r(C,D))=0$. Again, the two vectors $\Vec{x},\Vec{y}$ are positioned as in \Cref{fig:homotopy}. Notice that by monotonicity of $U$ in its first component, $\theta^1>\theta^2$. The same assumption guarantees that the lines $\vec{y}(\theta^1 +(\theta^2-\theta^1)t)$ and $\vec{x}(\theta^1 +(\theta^2-\theta^1)t)$ lie on the left and right of the vertical axis, respectively.

\begin{figure}
    \centering
    \includegraphics[scale=1.7]{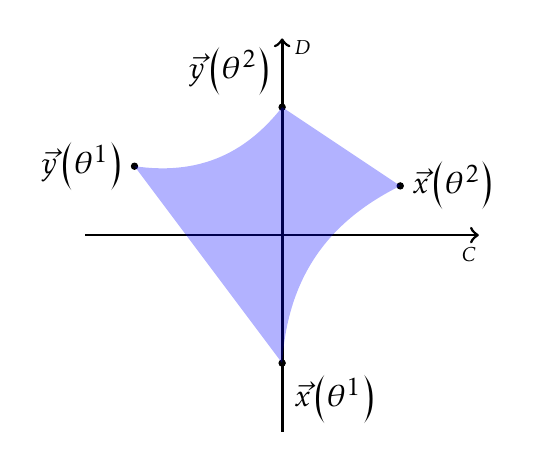}
    \caption{Homotopy}
    \label{fig:homotopy}
\end{figure}

Define $f\colon [0,1] \to \mathbb{R}^2$ as
\begin{align*}
f(t) = 
\begin{cases} 
(1-4t)\Vec{x}(\theta^1) + 4t \Vec{y}(\theta^1) &t \in \big[0,\frac{1}{4}\big] \\
\Vec{y}\big(\theta^1 + (\theta^2-\theta^1)(4t-1)\big) &t \in \big[\frac{1}{4},\frac{1}{2}\big] \\
(3-4t)\Vec{y}(\theta^2) + (4t-2)\Vec{x}(\theta^2) &t \in \big[ \frac{1}{2},\frac{3}{4}\big] \\
\Vec{x}\big(\theta^1 + 4(\theta^2-\theta^1)(1-t)\big) &t \in \big[\frac{3}{4},1\big]
\end{cases}
\end{align*}

Note that, by continuity of $U$, $f(t)$ is a loop based in $x(\theta^1)$. We can show that $f$ is null-homotopic by providing the following simple homotopy $H \colon [0,1] \times [0,1] \to \mathbb{R}^2$.
\begin{equation*}
    H(t,s) =
    \begin{cases}
            (1-4ts)\Vec{x}(\theta^1) + 4ts \Vec{y}(\theta^1) &t \in \big[0,\frac{1}{4}\big] \\
            (1-s) \Vec{x}\big(\theta^1 + s(4t-1)(\theta^2-\theta^1)\big) + s\Vec{y}\big(\theta^1 + s(4t-1)(\theta^2-\theta^1)\big) &t \in \big[\frac{1}{4},\frac{1}{2}\big] \\
            (1-s(3-4t))\Vec{x}\big((1-s)\theta^1 + s\theta^2\big) + s(3-4t)\Vec{y}\big((1-s)\theta^1 + s\theta^2\big)&t \in \big[ \frac{1}{2},\frac{3}{4}\big] \\
            \Vec{x}\big(\theta^1 + 4s(\theta^2-\theta^1)(1-t)\big) &t \in \big[\frac{3}{4},1\big]
    \end{cases}
\end{equation*}
We verify that $H$ is a homotopy between $f(t)$ and the constant path at $\Vec{x}(\theta^1)$.
\begin{itemize}
    \item $H(0,s) = H(1,s) = \Vec{x}(\theta^1)$
    \item $H(t,0) = \Vec{x}(\theta^1)$
    \item $H(t,1) = f(t)$
\end{itemize}

Suppose now that there is no pair $t,s$ such that $H(t,s) = (0,0)$. Then, by continuity it must be the case that there is an open neighborhood $V \ni (0,0)$ such that for all points $z \in V$, $z \notin Im(H)$.
Note that we can restrict the loop $f$ to a the domain $\mathbb{R}^2\setminus V$, and because $V \notin Im(H)$ we can restrict the homotopy to a homotopy $H \colon [0,1]^2 \to \mathbb{R}^2 \setminus V$. Thus we proved that a loop around the open set $V$ is null-homotopic, which is equivalent to proving that $\mathbb{R}^2\setminus V$ is simply connected, a contradiction.

Therefore, there exists a pair $\overline{t},\overline{s}$ such that $H(\overline{t},\overline{s}) = (0,0)$. There is a bijective transformation between $t,s$ and $\theta,\alpha$ over their respective domains, which guarantees that the system of \Cref{eq:nolocaltime} is satisfied.

Notice that we allowed for $\alpha \in (0,1)$ so far. However, it makes little sense to allow for $\alpha < \frac{1}{2}$: the algorithm would learn about actions it considers suboptimal faster than those he considers best. Fortunately, we observe the following:
\[\frac{1}{2} U(\theta, r(C,C)) + \frac{1}{2} U(\theta,r(C,D) < \frac{1}{2} U(\theta, r(D,C)) + \frac{1}{2} U(\theta,r(D,D). \]
This inequality follows from the ordering of rewards in a Prisoner's Dilemma game. This in particular implies that the line $\frac{1}{2} \vec{x}(\theta^1 +(\theta^2-\theta^1)t) +\frac{1}{2} \vec{y}(\theta^1 +(\theta^2-\theta^1)t)  $ lies above the 45 degree line. Thus, we can restrict the homotopy to the parameter space $\alpha \in [\frac{1}{2},1]$ without affecting the result. This guarantees the existence of a parameter $\alpha >\frac{1}{2}$ which sets the sliding vector field to zero.

Now, notice that in the previous construction in \Cref{eq:nolocaltime} we solved for $\alpha$ and $\theta^*$ such that the local time was equal on both sides of the switching boundary. We can relax this assumption so that \Cref{eq:nolocaltime} becomes
\begin{equation}\label{eq:withlocaltime}
\begin{cases}
\alpha \tau U(\theta^*,r(C,C)) + (1-\alpha) (1-\tau) U(\theta^*,r(C,D)) = 0\\
(1-\alpha) \tau U(\theta^*,r(D,C)) + \alpha (1-\tau) U(\theta^*,r(D,D)) = 0.
\end{cases}\end{equation}
Following the previous construction, we get 
\begin{align*}
    \Vec{x}(\theta) = \begin{bmatrix}
           \tau U(\theta,r(C,C)) \\
           (1-\tau) U(\theta,r(D,D)) \\
         \end{bmatrix} \qquad \qquad
    &\Vec{y}(\theta) = \begin{bmatrix}
           (1-\tau) U(\theta,r(C,D)) \\
           \tau U(\theta,r(D,C)) \\
         \end{bmatrix}
\end{align*}
The points $\vec{x}(\theta^i), \vec{y}(\theta^i)$ for $i=1,2$ each remain on their respective quadrants, enabling us to construct the same, rescaled, homotopy for this case. Therefore, for each $\tau$ we can identify a pair $\alpha^\tau, \theta^*(\tau)$ such that \Cref{eq:withlocaltime} is satisfied. 

Not all solutions to \Cref{eq:withlocaltime} are steady-states however. Recall from our construction in \Cref{sec:PD} that we need to verify a sliding condition: the normal components of the two vector fields to the switching surface must have opposite sign and must be attractive. 
In equations, this translates to the following system which needs to be satisfied:
\begin{equation}\label{eq:sliding}
    \begin{cases}
    (1-\alpha^\tau) \tau U(\theta^*(\tau),r(D,C)) - \alpha\tau  U(\theta^*(\tau),r(C,C)) \geq 0\\
    \alpha^\tau (1-\tau) U(\theta^*(\tau),r(D,D)) - (1-\alpha^\tau) (1-\tau)  U(\theta^*(\tau),r(C,D)) \leq 0.
    \end{cases}
\end{equation}
We need a preparatory lemma:
\begin{lemma}\label{lemma:regions}
The only region of $\theta$ where \Cref{eq:withlocaltime} can be verified is such that 
\[U(\theta,r(D,C))>U(\theta,r(C,C))\geq 0 >U(\theta,r(D,D)) >U(\theta,r(C,D))\]
\end{lemma}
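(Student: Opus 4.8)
My plan is to split the claimed chain into an \emph{ordering} part and a \emph{sign-placement} part, obtaining the former for free from the reward structure and extracting the latter from \Cref{eq:withlocaltime}. For the ordering, I would use that on the switching surface the two estimates coincide, $\theta^C=\theta^D=\theta$, so all four quantities are evaluations of $U$ at the \emph{same} $\theta$. Since $U$ is (strictly) increasing in its reward argument (\Cref{def:regularity}) and the Prisoner's Dilemma payoffs satisfy $r(D,C)>r(C,C)>r(D,D)>r(C,D)$, I immediately get
\[
U(\theta,r(D,C))>U(\theta,r(C,C))>U(\theta,r(D,D))>U(\theta,r(C,D))
\]
for \emph{every} $\theta$. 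It then only remains to show that $0$ falls strictly between $U(\theta,r(C,C))$ and $U(\theta,r(D,D))$.

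The engine for the sign placement is the elementary observation that each line of \Cref{eq:withlocaltime} is a nonnegatively weighted sum of exactly two $U$-values set equal to zero. Concretely, if $w_1,w_2\ge 0$ are not both zero, $U_1>U_2$, and $w_1U_1+w_2U_2=0$, then $U_1\ge 0\ge U_2$: were $U_1<0$ both terms would be nonpositive and not both zero, forcing a strictly negative sum, and symmetrically if $U_2>0$. Moreover, if both weights are strictly positive the conclusion sharpens to $U_1>0>U_2$, since $U_1=0$ would force $U_2=0$, contradicting $U_1>U_2$. Applying this to the first line (weights $\alpha\tau$, $(1-\alpha)(1-\tau)$; pair $U(\theta,r(C,C))>U(\theta,r(C,D))$) gives $U(\theta,r(C,C))>0>U(\theta,r(C,D))$, and to the second line (weights $(1-\alpha)\tau$, $\alpha(1-\tau)$; pair $U(\theta,r(D,C))>U(\theta,r(D,D))$) gives $U(\theta,r(D,C))>0>U(\theta,r(D,D))$. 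Combining these with the ordering above yields precisely
\[
U(\theta,r(D,C))>U(\theta,r(C,C))\ge 0>U(\theta,r(D,D))>U(\theta,r(C,D)),
\]
which is the assertion of the lemma.

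The step requiring care — and the main obstacle — is justifying that all four weights are strictly positive, i.e. ruling out the degenerate corners of the admissible ranges $\alpha\in[\tfrac12,1]$ and $\tau\in[0,1]$. A genuine point of the sliding region must have $\tau\in(0,1)$, so $\alpha\tau>0$ and $\alpha(1-\tau)>0$ because $\alpha\ge\tfrac12$; the only weight that could vanish is then the factor $(1-\alpha)$ at $\alpha=1$. To exclude $\alpha=1$ I would invoke $\theta$-monotonicity of $U$: for each reward level there is a unique threshold at which the corresponding $U$ vanishes, and reward-monotonicity orders these thresholds as $\theta_{DC}>\theta_{CC}>\theta_{DD}>\theta_{CD}$. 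At $\alpha=1$ the first line forces $U(\theta,r(C,C))=0$ and the second forces $U(\theta,r(D,D))=0$, i.e. $\theta=\theta_{CC}$ and $\theta=\theta_{DD}$ simultaneously, impossible since these thresholds are distinct. Hence $\alpha\in[\tfrac12,1)$, every weight is strictly positive, and the sharp sign dichotomy applies (in fact delivering $U(\theta,r(C,C))>0$, slightly stronger than the stated $\ge 0$). The leftover strict inequalities $U(\theta,r(D,C))>U(\theta,r(C,C))$ and $U(\theta,r(D,D))>U(\theta,r(C,D))$ are exactly the reward-monotonicity inequalities already recorded, so no further work is needed.
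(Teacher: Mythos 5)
Your proof is correct and takes essentially the same route as the paper's: the paper's one-line inspection of the homotopy picture --- that $\vec{y}(\theta^*)$ must lie in the second quadrant, forcing $\vec{x}(\theta^*)$ into the fourth --- is precisely your sign-placement step in geometric form, with the ordering coming from monotonicity of $U$ in rewards in both arguments. Your explicit handling of the degenerate weights (ruling out $\tau\in\{0,1\}$ via the sliding condition and $\alpha=1$ via the incompatibility of $U(\theta,r(C,C))=U(\theta,r(D,D))=0$ at a common $\theta$) is a more careful rendering of a corner case the paper glosses over, not a different approach.
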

\begin{proof}
The statement follows immediately from inspection of \Cref{fig:homotopy}. Since the path $\vec{y}(\theta^1+(\theta^2-\theta^1)t)$ for all $t$ falls within the second quadrant, any $\vec{x}(\theta^*)$ which satisfies \Cref{eq:withlocaltime} must fall within the fourth quadrant, which directly implies the result.
\end{proof}
Now, rearranging \Cref{eq:withlocaltime} we derive the following equalities:
\begin{equation*}
    \begin{cases}
        -(1-\alpha^\tau)(1-\tau)U(\theta^*(\tau),r(C,D)) = \alpha^\tau \tau U(\theta^*(\tau),r(C,C)) \\
        -(1-\alpha^\tau) \tau U(\theta^*(\tau),r(D,C)) = \alpha^\tau (1-\tau) U(\theta^*(\tau),r(D,D)) 
    \end{cases}
\end{equation*}
Substituting these in \Cref{eq:sliding} and rearranging, we obtain
\begin{equation*}
    \begin{cases}
        \alpha^\tau \tau U(\theta^*(\tau),r(C,C)) + \alpha^\tau (1-\tau) U(\theta^*(\tau),r(D,D)) \leq 0 \\   \alpha^\tau (1-\tau) U(\theta^*(\tau),r(D,D)) + \alpha^\tau \tau U(\theta^*(\tau),r(C,C)) \leq 0
    \end{cases}
\end{equation*}
Thus, only one condition needs to be satisfied to guarantee sliding:
\[\tau U(\theta^*(\tau),r(C,C)) + (1-\tau)U(\theta^*(\tau),r(D,D)) \leq 0\]
\Cref{lemma:regions} guarantee that a solution to this inequality exists for $\tau$ sufficiently close to $0$. In particular, there exists a $\overline{\tau}$ such that for all $\tau \leq \overline{\tau}$ we obtain sliding. Therefore there exists an open set of parameters $\{\alpha^\tau \ |\  \tau \leq \overline{\tau}\}$ such that a sliding steady-state exists. 
Now, we can perturb the value of $\alpha$ on either side of the sliding boundary. The region we derived depends continuously from $\alpha$, in particular from $\alpha^C$ and $\alpha^D$. Therefore,  the same result holds for small perturbations of $\alpha$ into different $\alpha^C$ and $\alpha^D$, concluding the proof of the Theorem.

\qed

\paragraph{Proof of \Cref{thm:domstr}.}
We set to prove that, in the limit, the statistic $\theta^n$ of the dominant action dominates the statistic $\theta^i$ of any other non-dominant action. 
First of all, since $\alpha^{a_i}$ is identical across actions, the evolution in time of $\theta$ is shifted by $V(\theta)$, but $V$ won't affect the relative rankings of the actions' estimates. Therefore, we drop $V$ in the rest of the proof, and we focus on $U$.

Regardless of the opponent's policy, the reinforcer in every step observes a return in hindsight for every action. We denote by $r_n(t)$ the return from playing action $n$ in period $t$, whatever the opponents' actions are. We consider the evolution of the weights pairwise: $\theta^n$ and $\theta^i$ for all $i$. By assumption, for any sequence of actions taken by the opponent(s), $r_n(t) \geq r_i(t)$. In particular, \Cref{manualasm:thickness} guarantees that there exists a $T >0$ such that $r_n(t) > r_i(t)$ for any $t>T$. Thus, in the limit the derivative $\dot{\theta}^n$ strictly dominates $\dot{\theta}^i$ in each point $(x,t)$: $U(x,r_n(t))> U(x,,r_i(t))$. In particular, there exists a $\varepsilon$ such that $U(x,r_n(t))> U(x,r_i(t)) + \varepsilon$. 

Suppose now that for some $t\geq 0$, $\theta^n(t)= \theta^i(t)$. We prove that it can never be the case that $\theta^i(T) > \theta^n(T)$ for some $T>t$. $\theta^n(t)$ is a solution to the ODE given by $\dot{\theta}^n(t) = U(\theta^n(t),r_n(t))$ and 
\[U(\theta^n(t),r_n(t)) \geq U(\theta^n(t),r_i(t)) =U(\theta^i(t),r_i(t))\]
Thus, $\dot{\theta}^i(t) < U(\theta^i(t),r_n(t))=\dot{\theta}^n(t)$, which implies that $\theta^n(t+\Delta)>\theta^i(t+\Delta)$ for $\Delta$ small enough. Thus, for any $T>t$ there can only be two cases: either $\theta^n(T)>\theta^i(T)$, or $\theta^n(T)=\theta^i(T)$; but we have just shown that if the latter case occurs, $\theta^i$ will stay below $\theta^n$ again.

Consider instead the case where $\theta^i(T) >\theta^n(T)$ for some $T>0$. From \Cref{def:regularity}, we have that 
\[U(\theta^i(T),r_i(T)) \leq U(\theta^i(T),r_n(T))< U(\theta^n(T),r_n(T))\]
We want to show that there exists a $t> T$ such that $\theta^n(t) = \theta^i(t)$. To this end, suppose by contradiction that $\forall t>T$, $\theta^i(t) > \theta^n(t)$. Observe that the previous inequalities imply that
\begin{equation}\label{eq:derivative}
    \frac{d}{ds}\bigg|_{s=t}(\theta^i(s) - \theta^n(s)) <0  \qquad \forall t >T.
\end{equation}
Then, $(\theta^i(t) - \theta^n(t))$ is a monotone decreasing function of time. Because the algorithm is bounded,  it must be the case that 
\begin{equation}\label{eq:limit=b}
    \lim_{t\to +\infty} (\theta^i(t) - \theta^n(t)) = b \geq 0.
\end{equation}
From the definition of limit and the monotonicity of the derivative, for any $\epsilon$ there exists a $t'>T$ such that, for all $t>t'$, \[\theta^n(t) + b \leq \theta^i(t) < \theta^n(t) + b + \epsilon.\] 
However, strict monotonicity of the reinforcer's update implies that there exists a $\delta>0$ such that  
\begin{equation}\label{eq:positivedifference}
    U(\theta^i(t),r_i(t)) \leq U(\theta^n(t),r_i(t)) < U(\theta^n(t),r_n(t)) - \delta.
\end{equation}
Note that, by \Cref{eq:limit=b} and the monotonicity given by \Cref{eq:derivative}, the limit of the derivative of the difference $\thetaî - \theta^n$ be $0$:
\[\lim_{t\to +\infty} (\dot{\theta}^i(t) - \dot{\theta}^n(t)) = \lim_{t\to +\infty} (U(\theta^i(t),r_i(t)) -  U(\theta^n(t),r_n(t))) = 0\]
This is a contradiction of  \Cref{eq:positivedifference}. Then, there must exist a $t$ such that $\theta^i(t) = \theta^n(t)$. From the first part of the proof, this implies that $\forall t'  >t$ we have $\theta^i(t') \leq \theta^n(t')$. In summary then, either $\theta^n(0)\geq \theta^i(0)$, which implies $\theta^n(t) \geq \theta^i(t)$ for all $t>0$, or $\theta^n(0)<\theta^i(0)$, which implies that there exists $T$ such that $\theta^n(t) \geq \theta^i(t)$ for all $t\geq T$. This concludes the proof.  \qed

\paragraph{Proof of \Cref{thm:IESDS}.}
The proof is largely based upon \Cref{thm:domstr}. We will show that there is always a $\mathcal{T}$ such that the actions taken after $\mathcal{T}$ survive an IESDS procedure. 

Let $a^i$ be a strategy for player $i$ which is dominated by $b^i$ in the game $G$. With the same argument of the proof of \Cref{thm:domstr} we can show that it must be the case that there exists a $\mathcal{T}_1$ such that for all $t\geq \mathcal{T}_1$ we have $\theta^{b^i}(t) \geq \theta^{a^i}(t)$. Notice that if $\theta^{b^i}(t) = \theta^{a^i}(t)$, then $\dot{\theta}^{b^i}(t) > \dot{\theta}^{a_i}(t)$, which implies that for all $t'>t$ we have a strict inequality $\theta^{b^i}(t') > \theta^{a^i}(t')$.

Therefore, after time $\mathcal{T}_1$ it is as if the agents were playing in a reduced game $G^1 = (N,(A^1_i)_i,(u_i)_i)$, where $A^1_i = A_i\setminus\{a^i\}$ and $A^1_j = A_j$ for all $j \neq i$.\footnote{Of course, \Cref{manualasm:thickness} guarantees that even action $a^i$ will be played with some positive probability, but the statement of \Cref{thm:IESDS} guarantees that the probability is small enough to not affect the order of the expected rewards.} We can now apply the same idea to this new reduced game $G^1$ and eliminate a strictly dominated strategy in this reduced game. While IESDS eliminates strategies ``in place'', the algorithms abandon dominated strategies over time, reducing the effective game played. Of course, the components of $\theta$ that correspond to dominated actions keep getting updated, but note that if an action is dominated given a larger subset of opponent's strategies, it is also (weakly) dominated given a smaller subset. Therefore, following the usual differential inequality argument, the $\theta$ corresponding to a dominated action will always remain below that of actions surviving IESDS. We then define $\mathcal{T}$ as the largest among the $\mathcal{T}_k$ that correspond to an action being eliminated, and this concludes the proof. \qed

\section{}\label{app:chaos}

In this Appendix we discuss the chaotic theory of the system in \Cref{sec:PD}. Chaos theory studies non-linear dynamical systems whose trajectories appear stochastic, but are the result of deterministic laws of motion. The most commonly accepted definition of chaotic behavior is high sensitivity to small perturbations in initial conditions: a system is said to be chaotic if the distance between two trajectories originating from different yet arbitrarily close points grows unboundedly (\citet{holden2014chaos}).

Let us consider again the dynamical system of \Cref{eq:allODE}, where we now relax \Cref{manualasm:symmetry} and allow Alice' and Bob's Q-values to be initialized in different points. Following the above definition of chaotic system, we plot in \Cref{fig:perturbation} the evolution of $Q^A_C(t)$ for two instances with slightly different $Q^A_C(0)$, as well as the distance between the two trajectories. Despite their very similar initial point the two trajectories vary wildy, and their distance increases by 20 orders of magnitude: it is clear that the 4-dimensional system outside the scope of \Cref{manualasm:symmetry} exhibits chaotic behavior.

\begin{figure}[hbtp]
    \centering
    \includegraphics[width = \textwidth]{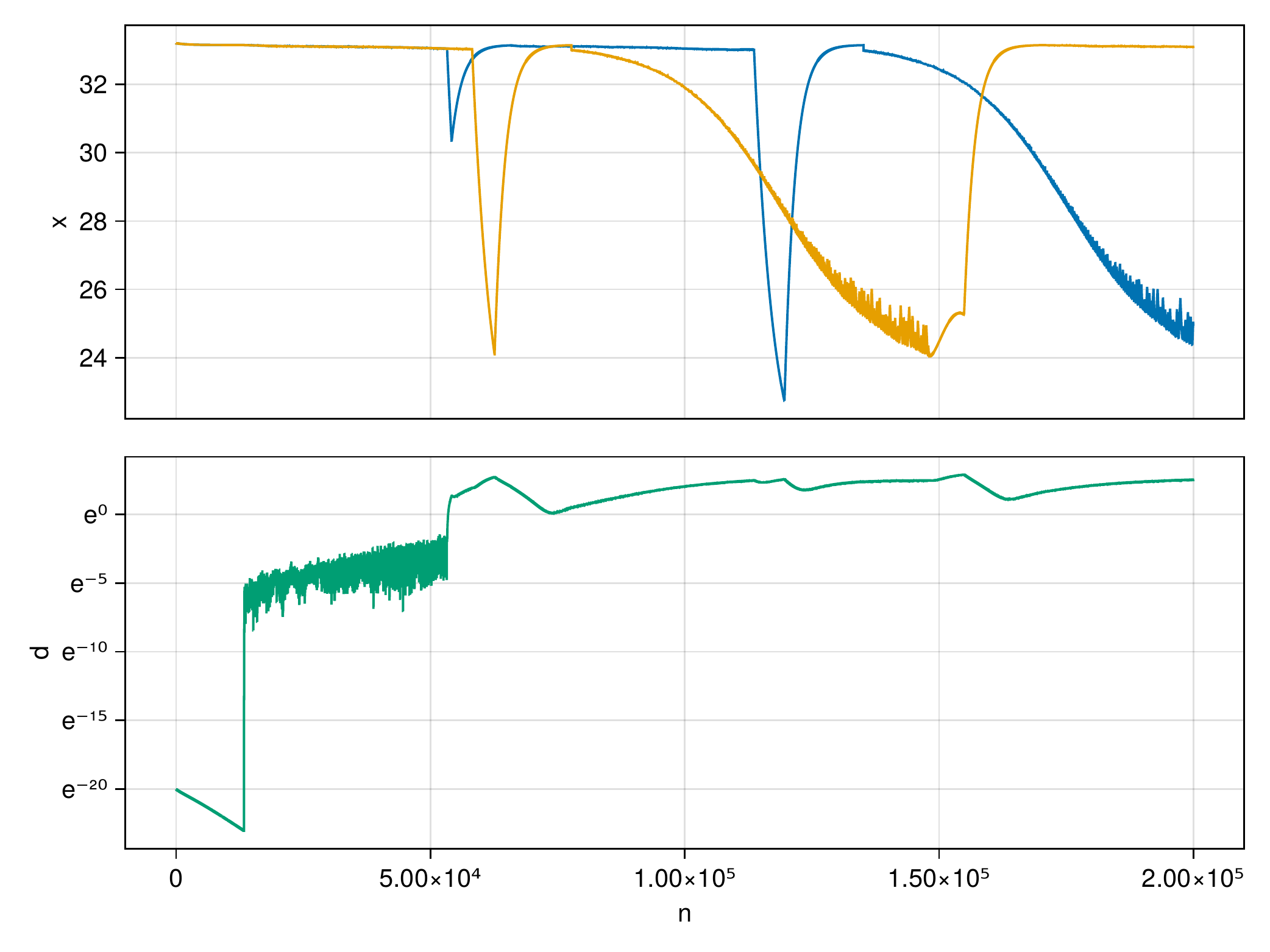}
    \caption{The top figure depicts the evolution of (the first component of) two trajectories with minimal perturbations of their initial conditions. The bottom figure represents the absolute distance between these two trajectories, in logarithmic scale. After an initial convergence, hitting the boundary leads to exponential divergence (the green line has constant slope upwards in the logarithmic scale). Finally the trajectories saturate, i.e. the boundedness of trajectories limits their absolute divergence.}
    \label{fig:perturbation}
\end{figure}

The deterministic chaos we observe makes any stability analysis impossible. In \Cref{fig:chaotic_trajectory} we plot a single trajectory of the 4-D system in the space of differences, and we observe that it seems to behave erratically and upredictably: the ``butterfly'' traced by the trajectory appears hard to characterize. However, the heatmap shows that most of the time is spent in a region where the estimates of cooperation and defection coincide. In other words, most of the time the system will show equal values for cooperation and defection for both Alice and Bob. Therefore, it seems that the sliding analysis we carried out in the symmetric case is not too far off the mark in the 4-D case as well, as one can also see in \Cref{fig:Local Time}: the gap between the predicted local time and the realized local time is due to the asymmetry that always arises in the discrete algorithmic system. It appears that the coordination bias's cycles take place in 4 dimensions --- they do not collapse on a pseudo-equilibrium.
 
\begin{figure}[hbtp]
    \centering
    \begin{subfigure}{.47\textwidth}
    \centering
    \includegraphics[scale=0.9]{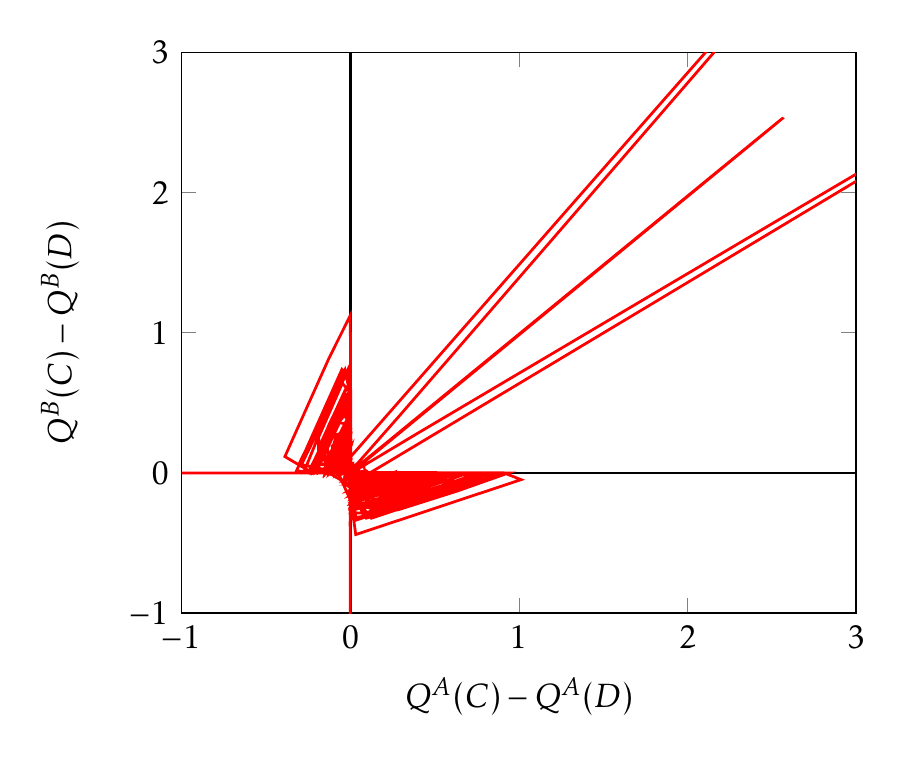}
    \end{subfigure}%
    \begin{subfigure}{.54\textwidth}
    \centering
     \includegraphics[scale=0.9]{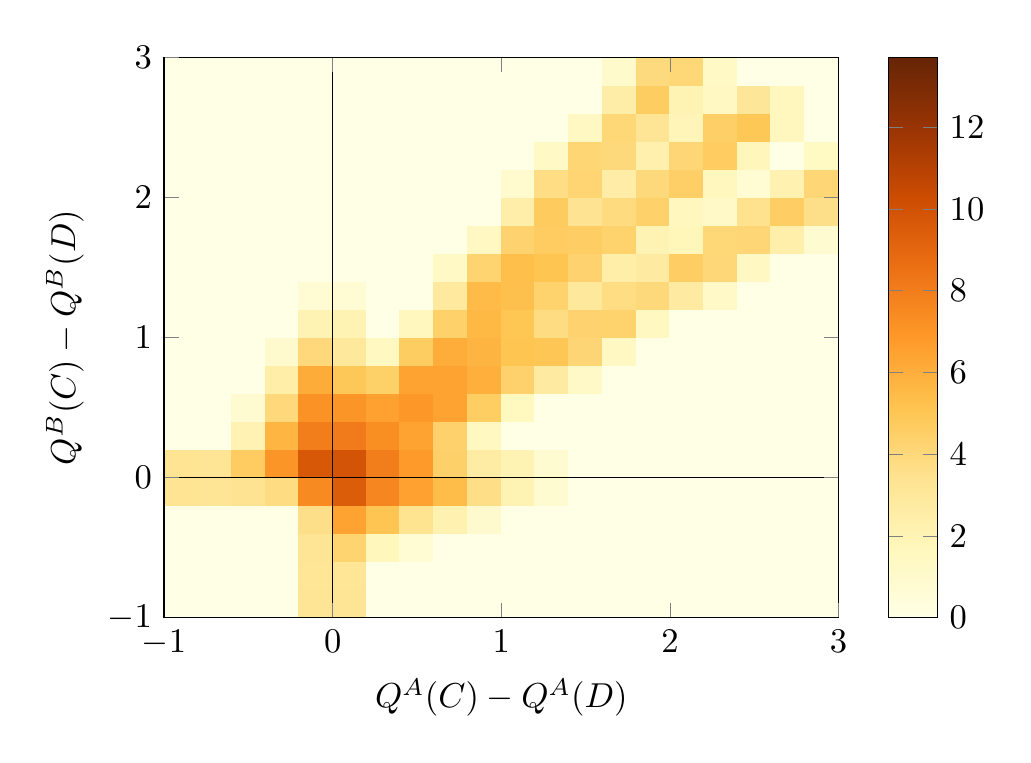}
    \end{subfigure}
    \caption{A chaotic trajectory of the system of \Cref{eq:allODE} in a 2-D representation. On the left, we depict the motion in the difference space. Agents appear chaotically attracted to a butterfly-like motion around the origin. The figure on the right represents the logarithmic density of time spent in a given square cell of side $0.2$. Effectively, over $90\%$ of time is spent in the square centered on the origin.}
    \label{fig:chaotic_trajectory}
\end{figure}

\subsection{Sliding Surfaces and Chaos}\label{app:chaos math}
We now formalize some of our claims above. Consider the system of \Cref{eq:allODE} and let $F_{a,a^\prime}$ denote the vector field in region $\omega_{a,a^\prime}$. There exist two switching surfaces, one for Alice ($\Sigma^A$) and one for Bob ($\Sigma^B$), given by:
\begin{align*}
    &\Sigma^A=\left\{\mathbf{Q} \in \mathbb{R}^4:\;Q^A_C=Q^A_D\right\}\\
    &\Sigma^B=\left\{\mathbf{Q} \in \mathbb{R}^4:\;Q^B_C=Q^B_D\right\}.
\end{align*}
Using our notation, $\Sigma^A=\left(\overline{\omega}_{C,C} \cap \overline{\omega}_{D,C}\right) \bigcup \left(\overline{\omega}_{C,D} \cap \overline{\omega}_{D,D}\right)$, and $\Sigma^B=\left(\overline{\omega}_{C,C} \cap \overline{\omega}_{C,D}\right) \bigcup \left(\overline{\omega}_{D,C} \cap \overline{\omega}_{D,D}\right)$. Sliding motion, if any, occurs on these hyperplanes, provided the vector fields on each side ``point'' in the right direction. In particular, if the system is on $\Sigma=\Sigma^A \cap \Sigma^B$, there might be \emph{double sliding}. Intuitively, double sliding occurs when the vector fields in all four regions ``push'' the system towards surface $\Sigma$. We make this formal following the definition of \citet{dieci2011}.

Let $\mathbf{\hat{N}^i}$ denote a vector in $\mathbb{R}^4$ orthogonal to $\Sigma^i$, where, without loss of generality, we choose $\mathbf{\hat{N}^i}$ such that it points towards the region of space where $Q^i_C>Q^i_D$. Surface $\Sigma$ is said to be \emph{nodally} attractive if the following hold on, and in a neighbourhood of, $\Sigma$:
\begin{equation}\label{eq:nodal attractivity}
    \begin{matrix}
    \mathbf{\hat{N}^A} \cdot F_{D,D}>0, & \mathbf{\hat{N}^A} \cdot F_{D,C}>0, & \mathbf{\hat{N}^A} \cdot F_{C,D}<0, & \mathbf{\hat{N}^A} \cdot F_{C,C}<0\\
    \mathbf{\hat{N}^B} \cdot F_{D,D}>0, & \mathbf{\hat{N}^B} \cdot F_{D,C}<0, & \mathbf{\hat{N}^B} \cdot F_{C,D}>0, & \mathbf{\hat{N}^A} \cdot F_{C,C}<0.\\
\end{matrix}
\end{equation}

In words, \Cref{eq:nodal attractivity} requires each surface $\Sigma^i$ to be sliding irrespective of the action of the opponent: if the system is initialized in a region of the space where $\Sigma$ is nodally attractive, it will first slide on either $\Sigma^i$ towards $\Sigma$, where it finally begins the double sliding motion.

Suppose for simplicity that $\gamma=0$.\footnote{The same analysis goes through with $\gamma>0$, but equations are somewhat more involved and less transparent to parse.} When the system is on $\Sigma$, $\mathbf{Q}=[q,q,q^\prime,q^\prime]^T$ for some $q,q^\prime \in \mathbb{R}$ and the conditions in \Cref{eq:nodal attractivity} become
\begin{align*}
    &q> \overline{q} = g+1+\dfrac{\frac{\varepsilon^2}{2}g+1-\varepsilon g}{1-\varepsilon}\\
    &q^\prime> \overline{q} = g+1+\dfrac{\frac{\varepsilon^2}{2}g+1-\varepsilon g}{1-\varepsilon}
\end{align*}
which square well with \Cref{fig:vector fields}, which shows that in the symmetric case, the switching surface is sliding only for large enough values of the estimates. The right-hand sides of the above conditions are the same because, even if Alice's and Bob's estimates are no longer symmetric, their laws of motion are. In fact, $\overline{q}$ can be equivalently characterized as
\begin{equation}\label{eq:q bar}
    \overline{q}=\max \left\{q:\; \exists (a,a^\prime) \in \{C,D\}^2 \text{ s.t. } \mathbf{\hat{N}^A} \cdot F_{a,a^\prime}\left([q,q,x,y]^T\right)=0,\; [q,q,x,y]^T \in \omega_{a,a^\prime}\right\},
\end{equation}
or, in words, the largest value of the estimates such that if Alice were sliding on her switching surface $\Sigma^A$, her vector field in at least one of the partitions of the space becomes parallel to $\Sigma^A$ ,i.e., $\dot{Q}^A(C)=\dot{Q}^A(D)$. Since the game is symmetric, we could replace Alice with Bob in \Cref{eq:q bar} and obtain the same result. The structure of payoffs of any Prisoner's Dilemma implies that the $\omega$ where Alice's vector field becomes parallel to her switching surface is $\omega_{D,C}$: it is exactly when she defects most of the time and Bob cooperates that Alice obtains the largest payoff.

The codimension of $\Sigma$ is 2; while the theory of \citet{Filippov1988} is sufficient to uniquely pin down the sliding vector field for sliding surfaces of codimension 1, it is well-known that for codimension greater than one this is no longer possible. \citet{dieci2011} suggests defining the weighted vector field using a bilinear formulation and proves that if the sliding surface satisfies nodal attractivity, the vector field thus obtained is unique. Formally, let $\mathcal{C}^A(\mathbf{Q}) \in [0,1]$ and $\mathcal{C}^B(\mathbf{Q})\in [0,1]$ be some functions of the state of the system: we interpret $\mathcal{C}^A(\mathbf{Q})$ as the weight associated to any $F_{C,a^\prime}$, where  for $a^\prime=C,D$, and similarly for $\mathcal{C}^B(\mathbf{Q})$. \citet{dieci2011} proves that there exist unique functions $\mathcal{C}^A(\mathbf{Q})$ and $\mathcal{C}^B(\mathbf{Q})$ such that for every $\mathbf{Q} \in \Sigma$ satisfying nodal attractivity the following holds:\footnote{Omitting dependence on $\mathbf{Q}$ for readability.}
\begin{equation}\label{eq:double sliding}
    \begin{cases}
        \mathcal{C}^A \mathcal{C}^B \mathbf{\hat{N}^A} \cdot F_{C,C}+(1-\mathcal{C}^A) \mathcal{C}^B \mathbf{\hat{N}^A} \cdot F_{D,C} + (1-\mathcal{C}^B)\mathcal{C}^A \mathbf{\hat{N}^A} \cdot F_{C,D} + (1- \mathcal{C}^A) (1-\mathcal{C}^B) \mathbf{\hat{N}^A} \cdot F_{D,D} = 0\\
        \\
        \mathcal{C}^A \mathcal{C}^B \mathbf{\hat{N}^A} \cdot F_{C,C} + (1-\mathcal{C}^A) \mathcal{C}^B \mathbf{\hat{N}^B} \cdot F_{D,C} + (1-\mathcal{C}^B)\mathcal{C}^A \mathbf{\hat{N}^B} \cdot F_{C,D} + (1- \mathcal{C}^A) (1-\mathcal{C}^B) \mathbf{\hat{N}^B} \cdot F_{D,D} = 0
    \end{cases}
\end{equation}
This condition has the same intuition of the construction mentioned in \Cref{sec:sketch}: by appropriately selecting $\mathcal{C}^A(\mathbf{Q})$ and $\mathcal{C}^B(\mathbf{Q})$ one can obtain a vector field that is parallel to $\Sigma$ so that the trajectories of the system remain contrained on the double sliding surface as long as this is nodally attractive.

In our system, there exists no steady state on the sliding boundary. In fact, since we argued that $\overline{q}$ is such that $$\hat{\mathbf{N}}^A \cdot F_{D,C}\left([\overline{q},\overline{q},x,y]^T\right)=0,$$ it must be that all vector fields have strictly negative parallel components to the double sliding surface $\Sigma$ as long as the estimates are above $\overline{q}$. Finally, this implies that there cannot exist any mixing of the type of \Cref{eq:double sliding} such that the combined vector field on $\Sigma$ vanishes. Formally, when Alice's estimates reach $\overline{q}$, the system leaves the 2-dimensional surface $\Sigma$ so that its motion can no longer be described by the nodally attractive sliding vector field. However, notice that the description of spontaneous coupling developed in \Cref{sec:coupling} does not require a full characterization of the motion; while based on the insights derived from the sliding motion of the symmetric system, it is more general since it relies only on understanding how the estimates move across the $\omega$ sets. As we argue in \Cref{fig:chaotic_trajectory}, the system indeed spends the vast majority of the time in a small neighbourhood of the switching surface $\Sigma$: while we cannot characterize the laws of motion \emph{on} $\Sigma$, we know the incentives that move the estimates from one $\omega$ to the other. Because of chaos, the ``cycling'' patterns we observe in the discrete system also remain in the continuous-time counterpart instead of collapsing onto a single point;\footnote{We use quotes because, formally, a chaotic system cannot display period behaviour.} nevertheless, they are consistent with our higher level understanding of spontaneous coupling.

\section{}\label{app:feedback}
In this Appendix we formalize the somewhat loose definitions given in \Cref{sec:learnable} and we prove \Cref{thm:LRSM}.

First, note that we can define an equivalence relation on the outcome space, $\sim_i$, such that $x \sim_i y$ if and only if $u^i(x,\lambda^i) = u^i(y, \lambda^i)$ for all $\lambda^i \in \Lambda_{i}$.
Let $\mathcal{X}_i = \mathcal{X}/\sim_i$ be the quotient of the outcome space with respect to this equivalence relation. We will refer to an element of $\mathcal{X}_i$, an equivalence class $[x]_i$, as an $i$-outcome.

Let us define formally what it means to provide ex-post feedback.

\begin{definition}
    A \emph{feedback policy} for mechanism $f$ and agent $i$ is a factorization of $f$ through a partition. It is composed of the following elements:
    \begin{itemize}
        \item A \emph{signal space} $S$, which is a partition of $\Lambda_{-i}$;
        \item A map $\phi_i \colon \Lambda_{-i} \to S$;
        \item A map $g \colon \Lambda_i \times S \to \mathcal{X}_i$;
    \end{itemize}
    such that
    \begin{itemize}
        \item The diagram commutes, i.e. $[f(\lambda^i,\lambda^{-i})]_i = g(\lambda^i,\phi_i(\lambda^{-i}))$ for all $\lambda^i, \lambda^{-i}$;
        \item The map $\phi_i$ is a partition map, i.e.
        $\phi_i(\lambda^i) \ni \lambda^i$.
    \end{itemize}
    We denote a feedback policy by its map $\phi_i$.
    A collection $(\phi_i)_i$ of feedback policies for each agent $i$ is a \emph{feedback structure}.
\end{definition}

Remember the loose definition of \Cref{sec:learnable}: a feedback policy for agent $i$ is a partition of the space of opponents' types, $\Lambda_{-i}$. This partition is such that agent $i$ can evaluate what outcome he could have enforced had he unilaterally deviated to a different report $\lambda^i$. We request that a feedback policy allows agent $i$ to compute all of his $i$-outcomes for any given report.

Next, we formalize the desire for reduced communication and revelation.  We define the following partial order on feedback policies: 
\begin{definition}
    Feedback policy $\phi_i$ is \emph{more private} than feedback policy $\psi_i$, denoted $\phi_i \trianglerighteq \psi_i$, if $\phi_i(\lambda^{-i}) \supseteq \psi_i(\lambda^{-i})$ for all $\lambda^{-i}.$
\end{definition}

Any two type profiles that are indistinguishable under a less private rule should be indistinguishable under a more private rule. As mentioned, the privacy order is a weak partial order: not all feedback policies are comparable. However, it turns out that under the privacy order maximum and minimum are well-defined: the feedback policies together with the privacy order form a lattice.

\begin{proposition}\label{prop:lattice}
Feedback policies together with the privacy order form a complete lattice.
\end{proposition}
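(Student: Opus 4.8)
The plan is to show that the admissible feedback policies for agent $i$ are exactly the partitions of $\Lambda_{-i}$ lying in a single order-interval of the complete lattice of all partitions of $\Lambda_{-i}$, and then to invoke the fact that such an interval is itself a complete lattice. The first move is to strip a feedback policy down to its partition. Define a relation on $\Lambda_{-i}$ by $\lambda^{-i} \approx_f \tilde{\lambda}^{-i}$ iff $[f(\lambda^i,\lambda^{-i})]_i = [f(\lambda^i,\tilde{\lambda}^{-i})]_i$ for every report $\lambda^i \in \Lambda_i$; this is an equivalence relation, and let $\Pi_f$ be the partition it induces. The commuting requirement $[f(\lambda^i,\lambda^{-i})]_i = g(\lambda^i,\phi_i(\lambda^{-i}))$ forces the $i$-outcome to be constant in $\lambda^{-i}$ along each block of $\phi_i$, for every $\lambda^i$. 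Hence a partition $\phi_i$ underlies an admissible feedback policy if and only if each of its blocks is contained in a block of $\Pi_f$, i.e. $\phi_i$ refines $\Pi_f$; and when this holds the map $g$ is uniquely pinned down by $g(\lambda^i,B) = [f(\lambda^i,\lambda^{-i})]_i$ for any representative $\lambda^{-i}\in B$, so a policy is faithfully encoded by its partition, as the paper already notes.

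Next I would identify the privacy order with the coarsening order on partitions: $\phi_i \trianglerighteq \psi_i$ says precisely that every block of $\psi_i$ sits inside a block of $\phi_i$, so that $\psi_i$ refines $\phi_i$. Recalling the standard fact that the partitions of a fixed set form a complete lattice under refinement — meet given by common refinement and join by the transitive closure of the union of the underlying relations — the admissible policies are exactly the order-interval $[\mathbf{0},\Pi_f]$, where $\mathbf{0}$ is the discrete partition (full revelation, the minimally private policy) and $\Pi_f$ is the coarsest admissible partition (the maximally private one). This also re-derives the two endpoints promised in the surrounding text.

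It then remains to verify that an order-interval of a complete lattice is complete. Given any family $(\phi_\alpha)$ of admissible policies, its join $\bigvee_\alpha \phi_\alpha$ computed in the ambient partition lattice is bounded above by $\Pi_f$ — since $\Pi_f$ is an upper bound for the family, the least upper bound cannot exceed it — while its meet $\bigwedge_\alpha \phi_\alpha$ refines each $\phi_\alpha$ and hence $\Pi_f$. Both therefore lie in $[\mathbf{0},\Pi_f]$ and serve as the supremum and infimum within the interval, giving arbitrary joins and meets and thus completeness.

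The genuinely substantive step is the first one: translating the commuting-diagram requirement into the clean condition ``$\phi_i$ refines $\Pi_f$,'' and, for the join, checking that the transitive closure of a union of admissible relations stays admissible. That closure could a priori merge points that are not $\approx_f$-equivalent, but it cannot, precisely because $\Pi_f$ is already an upper bound for the family; once this is secured, the lattice structure follows from generalities about intervals in complete lattices.
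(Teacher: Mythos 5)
Your proof is correct, and at bottom it rests on the same fact as the paper's own argument: the completeness of the lattice of partitions of $\Lambda_{-i}$ under the refinement/coarsening order. But your packaging is genuinely different and, in two respects, tighter than the paper's. The paper constructs binary joins and meets directly (join $=$ finest common coarsening of the two preimage partitions, meet $=$ common refinement) and then asserts that these ``satisfy the feedback policy definition'' and that completeness ``descends'' from the ambient partition lattice; it never actually checks the commuting-diagram requirement for the join, i.e.\ that $f$ still factors through the coarser partition. You close exactly this gap: by characterizing admissible policies as precisely the refinements of $\Pi_f$ (the partition induced by $\lambda^{-i} \approx_f \tilde{\lambda}^{-i}$ iff $[f(\lambda^i,\lambda^{-i})]_i = [f(\lambda^i,\tilde{\lambda}^{-i})]_i$ for all $\lambda^i$), admissibility of the join follows from the least-upper-bound property --- since $\Pi_f$ bounds the family above, the ambient join cannot exceed it --- rather than needing a separate verification that the transitive closure does not merge non-$\approx_f$-equivalent points. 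Second, your interval argument $[\mathbf{0},\Pi_f]$ delivers arbitrary joins and meets in one stroke, whereas the paper only exhibits binary operations and leaves completeness to an appeal to the ambient lattice. As a bonus, your $\Pi_f$ is exactly the menu-equivalence partition of the paper's \Cref{lemma:optimalmenu}, so your characterization of the top element re-derives that lemma's conclusion that the maximally private policy is the menu description, and the bottom element recovers full revelation.
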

\begin{proof}
We simply need to show that for any two elements $\phi_i, \psi_i$ there exist a join $\phi_i \join \psi_i$ and a meet $\phi_i \meet \psi_i$ which satisfy the feedback policy definition. The argument follows from the lattice structure of the set of partitions with the partial order \emph{coarser-than}. Note that $\phi_i^{-1}(\{x \colon x \in  \Lambda_{-i}\})$ defines a partition of the space $\Lambda_{-i}$, and the same is true for the $\psi_i$. We then require the preimage of join $(\phi_i \join \psi_i)$ to be the finest partition which is coarser than both the preimages of $\phi_i$ and $\psi_i$.  Formally, let $A \subset \phi_i^{-1}(\{x \colon x \in \Lambda_{-i}\}) \join_P \psi_i^{-1}(\{x \colon x \in \Lambda_{-i}\})$, then 
\[(\phi_i \join \psi_i)(\lambda^{-i}) = (\phi_i \join \psi_i)(\hat{\lambda}^{-i}) \text{ for all } \lambda^{-i},\hat{\lambda}^{-i} \in  A\]

Similarly, define the meet as the function $\phi_i \meet \psi_i$ such that it is constant over the meet of the two partitions. Again, let $B \subset \phi_i^{-1}(\{x \colon x \in \Lambda_{-i}\}) \meet_P \psi_i^{-1}(\{x \colon x \in \Lambda_{-i}\})$, then 
\[(\phi_i \meet \psi_i)(\lambda^{-i}) = (\phi_i \meet \psi_i)(\hat{\lambda}^{-i}) \text{ for all } \lambda^{-i},\hat{\lambda}^{-i} \in  B\]
The completeness of the lattice structure descends directly from the completeness of the partition lattice.
\end{proof}

\Cref{prop:lattice} implies that there exist both a minimally-  and a maximally-private feedback policy. The minimally-private policy is the full-revelation feedback policy: it reveals all information, and it is clearly less private than any other feedback policy. The maximally-private rule instead is a menu description.\footnote{Note that we defined the privacy lattice for feedback policies, not for feedback structures. When we analyze feedback structures, we implicitly consider the product lattice on the product space of feedback policies. This is correct because we assume private communication, but the analysis would likely change if we allowed public communication.}

\begin{definition}\label{def:menu}
Let $[x]_i$ be the equivalence class of outcome $x$ in $\mathcal{X}_i$.
A \emph{menu} for mechanism $f$ and agent $i$, given reports $\lambda^{-i}$ of the opponents, is the set 
\[\mathcal{M}_{\lambda^{-i}} = \left\{\ [f(\hat{\lambda}^i,\lambda^{-i})]_i \ | \ \hat{\lambda}^i \in \Lambda_i \ \right\}.\]
That is, the menu is the set of outcomes such that agent $i$ could have received had he reported any type in his type space.
\end{definition}

We can show that from the collection of all possible menus $\left\{ \mathcal{M}_{\lambda^{-i}}\right\}_{\lambda^{-i}}$ we can construct a feedback policy, and it is the maximally private feedback policy for agent $i$.

\begin{lemma}\label{lemma:optimalmenu}
    There exists a feedback policy $\mu^i$ corresponding to the collection of menus $\left\{ \mathcal{M}_{\lambda^{-i}}\right\}_{\lambda^{-i}}$, and $\mu^i$ is the maximally private feedback policy for agent $i$. 
\end{lemma}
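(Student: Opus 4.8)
The plan is to show that the partition of $\Lambda_{-i}$ cut out by the menus is precisely the coarsest partition on which any feedback policy for $f$ can live, so that the associated policy $\mu^i$ is simultaneously well defined and maximal in the privacy order. I would organize everything around two equivalence relations on $\Lambda_{-i}$. Say $\lambda^{-i} \approx \hat{\lambda}^{-i}$ (the \emph{menu relation}) when $\mathcal{M}_{\lambda^{-i}} = \mathcal{M}_{\hat{\lambda}^{-i}}$ as subsets of $\mathcal{X}_i$, and say $\lambda^{-i} \equiv \hat{\lambda}^{-i}$ (the \emph{outcome relation}) when the two maps $\lambda^i \mapsto [f(\lambda^i,\lambda^{-i})]_i$ and $\lambda^i \mapsto [f(\lambda^i,\hat{\lambda}^{-i})]_i$ agree on all of $\Lambda_i$. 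Unwinding the definition of a feedback policy shows that $\phi_i$ is valid exactly when its partition refines $\equiv$: this is the only constraint needed to define $g(\lambda^i,\cdot)$ unambiguously so that the commuting condition $[f(\lambda^i,\lambda^{-i})]_i = g(\lambda^i,\phi_i(\lambda^{-i}))$ holds. Consequently the maximally private valid policy, should it exist, must be the policy whose partition equals $\equiv$, and the whole lemma reduces to (a) identifying $\equiv$ with $\approx$, and (b) exhibiting the $\approx$-partition as an honest feedback policy.

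For step (a), one inclusion is free: if the outcome maps coincide then so do their images, hence $\equiv\ \subseteq\ \approx$. The reverse inclusion is the heart of the argument and is where strategy-proofness does the work. Fix $\lambda^{-i},\hat{\lambda}^{-i}$ with a common menu $\mathcal{M}$. For each report $\lambda^i$, strategy-proofness is exactly the statement that $[f(\lambda^i,\lambda^{-i})]_i$ maximizes $u^i(\cdot,\lambda^i)$ over $\mathcal{M}_{\lambda^{-i}}=\mathcal{M}$, and likewise $[f(\lambda^i,\hat{\lambda}^{-i})]_i$ maximizes the same objective over the same set $\mathcal{M}$ --- this is the taxation-principle reading of the menu. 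When the maximizer is unique the two selected $i$-outcomes coincide, which gives $\lambda^{-i}\equiv\hat{\lambda}^{-i}$ and therefore $\approx\ \subseteq\ \equiv$.

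I expect the main obstacle to sit precisely here, in the treatment of ties: two \emph{distinct} $i$-outcomes of the menu can deliver type $\lambda^i$ the same utility, and a priori $f$ could resolve the tie differently at $\lambda^{-i}$ and $\hat{\lambda}^{-i}$, breaking the equality of outcome maps. I would dispatch this by observing that what a learner must reconstruct is the counterfactual \emph{payoff} of each report to its own (fixed) type, i.e. the map $\lambda^i \mapsto u^i([f(\lambda^i,\lambda^{-i})]_i,\lambda^i)$, and strategy-proofness pins this quantity down from $\mathcal{M}$ alone regardless of how ties are broken; since $i$-outcomes are taken modulo $\sim_i$ and any selection consistent with $f$ must lie in the common maximizing set, the induced $g$ reproduces $f$ up to $\sim_i$ on every cell. (If one prefers to rule out ties outright, it is enough to assume preferences select a unique best element of each menu, which is the standard hypothesis under which the menu formulation of \citet{Hammond1979} is stated.) This simultaneously establishes (a) and furnishes the policy of step (b): take $S$ to be the set of distinct menus, $\phi_i(\lambda^{-i})=\mathcal{M}_{\lambda^{-i}}$, and $g(\lambda^i,\mathcal{M})=\argmax_{[x]_i\in\mathcal{M}}u^i([x]_i,\lambda^i)$; the commuting condition is then immediate from strategy-proofness, and $\phi_i(\lambda^{-i})\ni\lambda^{-i}$ holds by construction, so $\mu^i$ is a genuine feedback policy.

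Finally, maximality is the clean part. Let $\phi_i$ be any valid feedback policy. As noted, validity forces the outcome map $\lambda^i\mapsto[f(\lambda^i,\cdot)]_i$ to be constant on each cell of $\phi_i$, hence the menu $\mathcal{M}_{(\cdot)}$ --- its image --- is constant on each cell as well; thus every $\phi_i$-cell is contained in a single $\approx$-cell, which is exactly $\phi_i(\lambda^{-i})\subseteq\mu^i(\lambda^{-i})$ for all $\lambda^{-i}$, i.e. $\mu^i\trianglerighteq\phi_i$. Since $\phi_i$ was arbitrary, $\mu^i$ dominates every feedback policy in the privacy order and is therefore the maximally private one; this is consistent with, and sharper than, the bare existence of a maximum guaranteed by the lattice structure of \Cref{prop:lattice}. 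This proves both assertions of \Cref{lemma:optimalmenu}.
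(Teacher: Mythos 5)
Your construction is the same as the paper's --- quotient $\Lambda_{-i}$ by equality of menus, exhibit the quotient map as a feedback policy, and show that no valid policy can be more private --- so the proposal is correct and matches the paper in its overall route. The comparison is still instructive, because you are more careful than the paper at exactly the two points where care is needed. First, well-definedness of $g$: the paper's proof simply asserts that commutativity ``follows from the fact that the menus coincide,'' which taken literally is not a proof --- equal menus do not by themselves pin down the outcome map $\lambda^i \mapsto [f(\lambda^i,\cdot)]_i$, since $f$ could select different elements of a common menu at different opponent profiles. Your taxation-principle step (strategy-proofness forces $[f(\lambda^i,\lambda^{-i})]_i$ to maximize $u^i(\cdot,\lambda^i)$ over $\mathcal{M}_{\lambda^{-i}}$) supplies the justification the paper leaves implicit, and the tie-breaking obstruction you flag is genuine: if two \emph{distinct} $i$-outcomes of a common menu both maximize $u^i(\cdot,\lambda^i)$, the mechanism may break the tie differently at $\lambda^{-i}$ and at $\hat{\lambda}^{-i}$, in which case the menu partition fails the commuting condition as written, and the lemma requires either your uniqueness hypothesis or a payoff-level reading of feedback. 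The paper's proof is silent on this, so your proposal surfaces a real gap in it rather than containing one itself. Second, maximality: your argument --- validity forces the outcome map, hence its image the menu, to be constant on every cell, so every valid policy's partition refines the menu partition --- directly establishes that $\mu^i$ is the \emph{maximum} of the privacy order, with no appeal to \Cref{prop:lattice} and no use of strategy-proofness; the paper instead shows by contradiction that no policy is strictly more private than $\mu^i$, which is only maximality and yields the maximum only in combination with the lattice structure. Your version of this step is stronger and unconditional, and it is the cleaner way to close the lemma.
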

\begin{proof}
Consider the space $\Lambda_{-i}$ with the following equivalence relation:
\[ \lambda^{-i} \sim \hat{\lambda}^{-i} \text{ iff } \mathcal{M}_{\lambda^{-i}} = \mathcal{M}_{\hat{\lambda}^{-i}}\]
and denote its quotient by $\Lambda_{-i}/\sim$. An element of the quotient is an equivalence class of opponents' types, denoted by $\left[\lambda^{-i}\right]$. Since $\sim$ is an equivalence relation, the quotient set $\Lambda_{-i}/\sim$ is a partition of $\Lambda_{-i}$. Let 
\[
\mu^i \colon \begin{array}[t]{c c c} 
          \Lambda_{-i} &\rightarrow& \Lambda_{-i}/\sim \\ 
          \lambda^{-i} &\mapsto&  \left[\lambda^{-i}\right]
         \end{array}
\]
Let us show first that $\mu^i$ satisfies the definition of feedback policy. Of course, $\mu^i$ is a partition map: an element always belongs to its equivalence class. Define $g$ as the function $g\left(\lambda^i,\left[ \lambda^{-i}\right]\right) := [f(\lambda^i,\lambda^{-i})]_i.$ Commutativity then follows from the fact that for all $\lambda^{-i} \in \left[ \lambda^{-i}\right]$ the menus $\mathcal{M}_{\lambda^{-i}}$ coincide. 

Now, we need to show that there is no feedback policy which is more private than $\mu^i$. Equivalently, we can show that there is no feedback policy $\phi^i$ such that $\phi^i \triangleright \mu^i$. Suppose instead such a $\phi^i$ exists. Then, it must be that there exists a $\lambda^{-i}$ such that $\phi^i(\lambda^{-i}) \supset \mu^i(\lambda^{-i})$. Then there exists a $\hat{\lambda}^{-i} \in \phi(\lambda^{-i})$ such that $\hat{\lambda}^{-i}$ is not in the same equivalence class of $\lambda^{-i}$. This implies that $\mathcal{M}_{\lambda^{-i}}$ is a different menu than $\mathcal{M}_{\hat{\lambda}^{-i}}$. Then, there exists a $\lambda^i$ such that $[f(\lambda^i,\lambda^{-i})]_i \neq [f(\lambda^i,\hat{\lambda}^{-i})]_i$. 
We have then that $g(\lambda^i,\phi(\lambda^{-i})) = [f(\lambda^i, \lambda^{-i})]_i \neq [f(\lambda^i, \hat{\lambda}^{-i})]_i = g(\lambda^i,\phi(\lambda^{-i})$, a contradiction.
\end{proof}

We observe another interesting property of the privacy order that stems from its connection with the set of partitions of the power set of $\Lambda_{-i}$: 
\begin{corollary}
    The communication complexity of a feedback policy is monotonically decreasing in the privacy order.
\end{corollary}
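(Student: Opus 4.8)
The plan is to reduce the claim to the elementary fact that a coarser partition carries fewer distinct signals. Recall from the proof of \Cref{prop:lattice} that each feedback policy $\phi_i$ is identified with the partition $\phi_i^{-1}\big(\{x \colon x \in \Lambda_{-i}\}\big)$ of the opponents' type space $\Lambda_{-i}$, and that the privacy order $\trianglerighteq$ is precisely the \emph{coarser-than} order on these partitions: $\phi_i \trianglerighteq \psi_i$ holds if and only if every block $\phi_i(\lambda^{-i})$ contains the block $\psi_i(\lambda^{-i})$, i.e.\ if and only if the partition induced by $\phi_i$ is obtained by merging blocks of the partition induced by $\psi_i$. This order-reversing dictionary between feedback policies and partitions is exactly what underlies the lattice structure already established.

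First I would fix the notion of communication complexity. In this one-shot revelation setting the designer, after observing $\lambda^{-i}$, must transmit to agent $i$ the signal $\phi_i(\lambda^{-i})$, an element of the signal space $S$. The natural measure of the communication burden is the number of distinct signals the designer must be prepared to send, $|S|$, equivalently the number of bits $\lceil \log_2 |S| \rceil$ needed to encode a signal; both are monotone functions of the number of blocks of the partition induced by $\phi_i$.

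Next I would carry out the one-line comparison. Suppose $\phi_i \trianglerighteq \psi_i$. By the dictionary above, each block of $\phi_i$'s partition is a union of one or more blocks of $\psi_i$'s partition, so sending a $\psi_i$-block to the unique $\phi_i$-block containing it defines a surjection from the blocks of $\psi_i$ onto the blocks of $\phi_i$. Hence $|S_{\phi_i}| \leq |S_{\psi_i}|$, and any measure of communication complexity monotone in the number of signals satisfies $\mathrm{cc}(\phi_i) \leq \mathrm{cc}(\psi_i)$. Monotone decrease in the privacy order follows.

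The only genuine choice, rather than obstacle, is pinning down \emph{communication complexity}: once it is taken to be any quantity monotone in the cardinality of the signal space $S$ (or its logarithm), the result is immediate from the identification of the privacy order with partition refinement. No strategic or fixed-point argument is needed; the entire content is the observation that greater privacy corresponds to a coarser partition and hence to no more transmitted signals.
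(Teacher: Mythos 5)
Your proof is correct and takes essentially the same route the paper intends: the paper states this corollary without an explicit proof, remarking only that it ``stems from its connection with the set of partitions'' of $\Lambda_{-i}$, which is precisely the dictionary you spell out (privacy order $=$ coarser-than order on induced partitions). Your surjection from the blocks of $\psi_i$ onto the blocks of $\phi_i$ makes the implicit counting step explicit, and fixing communication complexity as any quantity monotone in $|S|$ is the natural reading of the paper's undefined term.
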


The maximally private feedback policy has the lowest communication complexity and it maintains the highest level of privacy. Our characterization says that the most private mechanisms are not particularly esoteric: they provide menu descriptions, which are well-known for their simplicity.

\section{}\label{app:PD}

The contribution game of \Cref{sec:PD} can be seen as a particular one-dimensional parametrization of the Prisoner's Dilemma. In this Appendix we extend our analysis to general symmetric Prisoner's Dilemma games. We parametrize them as described in \Cref{fig:PDpayoffs}. We normalize the cooperation payoff to $1$ and the sucker's payoff to $0$, while we vary the payoff to deviation $x$ and the payoff to mutual defection $y$.

\begin{figure}[hbtp]
    \centering
    \begin{subfigure}[b]{0.34\textwidth}
    \centering
    \imagebox{55mm}{\begin{game}{2}{2}[Alice][Bob]
    \> $C$ \> $D$\\
    $C$ \>$1,1$ \>$0,x$\\
    $D$ \>$x,0$ \>$y,y$
    \end{game}}
    \caption{Payoffs of the stage game. \\$1< x < 2$, $0<y<1$.}
    \label{fig:PDpayoffs}
    \end{subfigure}%
    \begin{subfigure}[b]{0.64\textwidth}
    \centering
    \includegraphics[width=\textwidth]{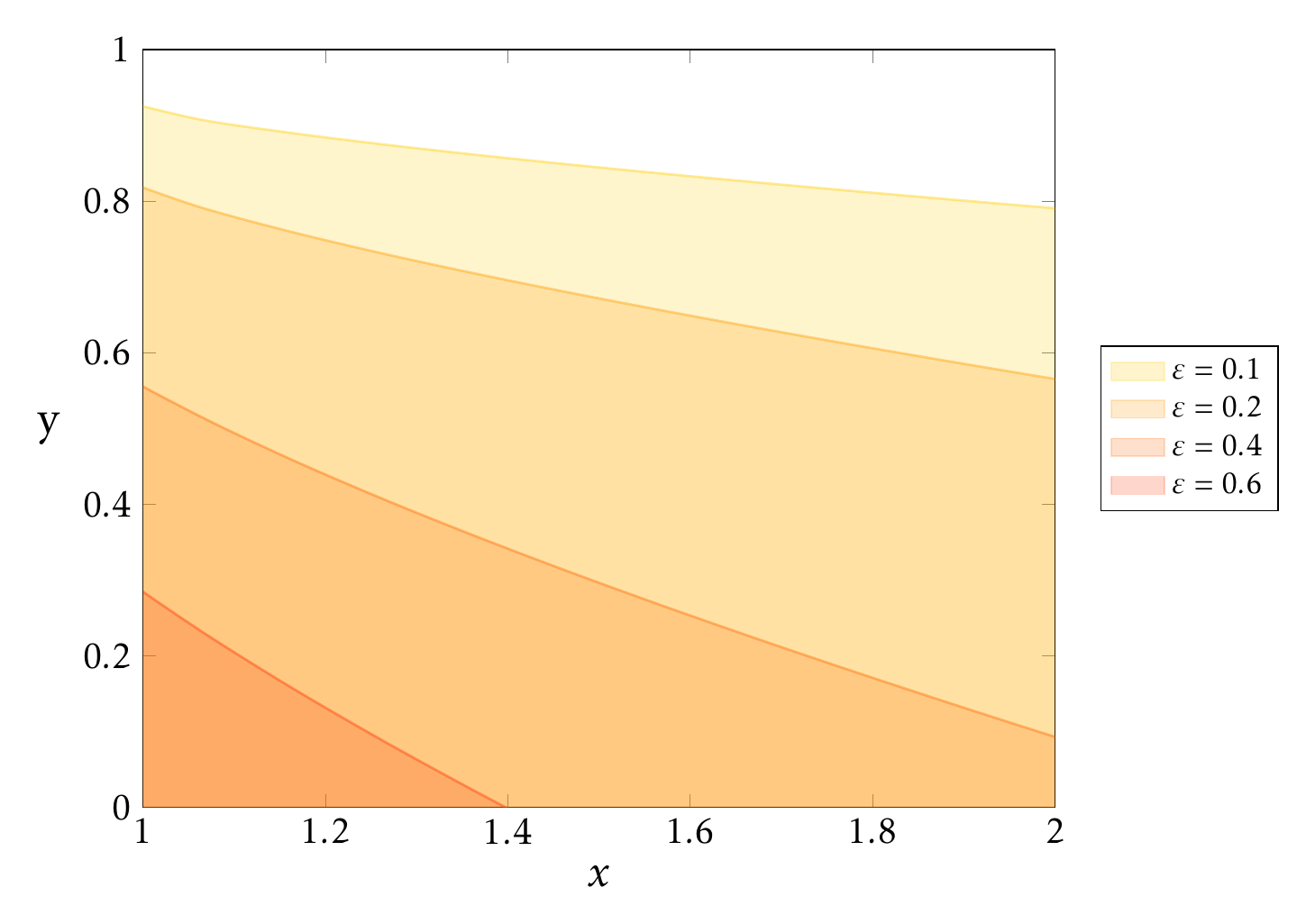}
    \caption{Existence of cooperative pseudo-steady-state in the Prisoner's Dilemma parameter space.}
    \label{fig:PDss}
    \end{subfigure}
    \caption{}
    \label{fig:generalPD}
\end{figure}

 We can replicate the analysis carried out for the contribution game in this more general setting, and we reach similar conclusions.

\begin{proposition}\label{prop:PD}
Consider a Prisoner's Dilemma with payoffs as in \Cref{fig:PDpayoffs} played by $\varepsilon$-greedy Q-learning algorithms. The forward limit set of $\mathbf{Q}$ is a singleton for any initial condition. 
Suppose the following inequalities are satisfied:
\begin{equation}\label{eq:existence}
    \begin{cases}
        1 < x < \frac{4 + 2\varepsilon - \varepsilon^2}{2\varepsilon - \varepsilon^2} - 4\sqrt{\frac{1}{2\varepsilon - \varepsilon^2}},\\
        0 \leq y \leq -4\sqrt{\frac{(\varepsilon-2)^2\varepsilon^2[\varepsilon^2(x-2) - 2\varepsilon(x-2) + 4(x-1)]}{(4-2\varepsilon + \varepsilon^2)^4}} + \frac{16-4\varepsilon^3(x-1)+\varepsilon^4(x-1) - 8\varepsilon(x+2) + 4\varepsilon^2(1+2x)}{(4-2\varepsilon+\varepsilon^2)^2}
    \end{cases}
\end{equation}
Then, there are two regions of attractions, $R_C$ and $R_D$. Initial conditions in either region are attracted to two different steady states, $q_C$ and $q_D$ respectively. 
The steady-state $q_D$ lies in $\omega_{D,D}$, while $q_C$ is a pseudo-steady-state --- it lies in $\overline{\omega}_{C,C} \cap \overline{\omega}_{D,D}$.

If \Cref{eq:existence} is not satisfied, all initial conditions are attracted to the steady-state $q_D$.
 
\end{proposition}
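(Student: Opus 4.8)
The plan is to replay the argument behind \Cref{prop:two eq PD} essentially verbatim, exploiting the fact that the only thing that changes when we pass from the one-parameter contribution game to the two-parameter family of \Cref{fig:PDpayoffs} is the constant vector in each maximal continuity domain. The linear parts $A_C,A_D$ of the piecewise-linear fluid system depend only on $\alpha,\varepsilon,\gamma$, not on the payoffs, so they are unchanged; only $b_C,b_D$ must be recomputed with the normalized payoffs $r(C,C)=1$, $r(C,D)=0$, $r(D,C)=x$, $r(D,D)=y$. First I would apply \Cref{thm: fluid approx thm} on each $\omega_{a,a}$ and restrict to the symmetric subspace, obtaining $\dot{\mathbf{Q}}=A_C\mathbf{Q}+b_C$ on $\omega_{C,C}$ and $\dot{\mathbf{Q}}=A_D\mathbf{Q}+b_D$ on $\omega_{D,D}$ with the same $A_C,A_D$ as in the proof of \Cref{prop:two eq PD} and
\[
b_C=\begin{bmatrix}\alpha\left(1-\tfrac{\varepsilon}{2}\right)^2\\[2pt]\alpha\tfrac{\varepsilon}{2}\left[\left(1-\tfrac{\varepsilon}{2}\right)x+\tfrac{\varepsilon}{2}y\right]\end{bmatrix},\qquad
b_D=\begin{bmatrix}\alpha\left(\tfrac{\varepsilon}{2}\right)^2\\[2pt]\alpha\left(1-\tfrac{\varepsilon}{2}\right)\left[\tfrac{\varepsilon}{2}x+\left(1-\tfrac{\varepsilon}{2}\right)y\right]\end{bmatrix}.
\]
I would then invoke \Cref{prop:Inclusion} to obtain a global Filippov solution and the crossing/repulsive/sliding decomposition of the switching surface $\Sigma=\{Q_C=Q_D\}$ with normal $c=(1,-1)$.

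Second, for the defection steady state, $A_D$ is triangular with nonzero (negative) diagonal entries whenever $\gamma<1$ and $\varepsilon>0$, so $A_D\mathbf{Q}+b_D=0$ has a unique solution $q_D$; a direct substitution shows it lies in $\omega_{D,D}$ for every admissible $(x,y,\varepsilon,\gamma)$, establishing the equilibrium that is always present. Third, for the cooperative pseudo-steady-state, I would form the sliding field $\tau F_C+(1-\tau)F_D$ with $\tau$ pinned down by $\tau(c\cdot F_C)+(1-\tau)(c\cdot F_D)=0$, and restrict to $Q_C=Q_D=:q$. As in the symmetric case, the tangential component collapses to a single scalar equation whose numerator is a quadratic in $q$; solving it yields the candidate $q_C$, and requiring that this root be real and lie in the sliding segment (i.e.\ $c\cdot F_C<0<c\cdot F_D$ there) is precisely what produces the two inequalities of \Cref{eq:existence} — the first delimiting the deviation payoff $x$ and the second bounding the mutual-defection payoff $y$ through a discriminant condition.

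Finally, for the limit-set claim, the eigenvalues of both $A_C$ and $A_D$ are real and negative (again because $\gamma<1$), so each affine field is a stable node with no rotation; combined with the fact that $F_C=0$ has no solution inside $\omega_{C,C}$, this rules out periodic orbits and forces every trajectory either to converge to $q_D$ or to reach the sliding segment and slide into $q_C$. Partitioning initial conditions by which alternative occurs yields the two regions $R_C,R_D$ and the singleton forward limit set, and when \Cref{eq:existence} fails the sliding equilibrium disappears so that all orbits converge to $q_D$, matching \Cref{fig:PDss}.

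I expect the third step to be the main obstacle. Writing out the sliding field for general $(x,y)$ and extracting the exact root-existence-and-location conditions is where the unwieldy expressions in \Cref{eq:existence} arise; in particular, verifying that the relevant quadratic root falls in the sliding (rather than the crossing or repulsive) portion of $\Sigma$, and confirming that the stated square-root expression is exactly the boundary of the feasibility region, is the delicate computation. By contrast, the planar convergence argument in the final step is routine once the signs of the eigenvalues of $A_C$ and $A_D$ are in hand.
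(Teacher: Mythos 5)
Your proposal is correct and follows exactly the route the paper intends: \Cref{app:PD} proves \Cref{prop:PD} simply by ``replicating the analysis carried out for the contribution game,'' i.e.\ re-running the proof of \Cref{prop:two eq PD} with the same $A_C,A_D$ and the recomputed constant vectors, which is precisely what you do (and your $b_C,b_D$ for the normalized payoffs $r(C,C)=1$, $r(C,D)=0$, $r(D,C)=x$, $r(D,D)=y$ are correct). Your Filippov sliding construction, the verification that $q_D$ always lies in $\omega_{D,D}$ while $F_C=0$ has no solution in $\omega_{C,C}$ when $x>1$, $y\geq 0$, and the identification of \Cref{eq:existence} with the real-root-in-the-sliding-segment condition all match the paper's argument, so no gap remains beyond the algebra you already flagged.
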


The conditions for existence of a pseudo-steady-state appear complex, but the visualization in \Cref{fig:PDss} helps disentangling the various forces at play.

The higher the exploration rate, the more extreme the parameters $x,y$ need to be to sustain the cooperative equilibrium. When both $x$ and $y$ are large the payoff from mutual defection is close to mutual cooperation, and a one-period defection provides large unilateral benefits. These are the cases providing the strongest incentives for defection, while when both $x$ and $y$ are low the opposite is true. The Figure  reflects these intuitions for various levels of exploration.
\end{appendices}

\end{document}